\newcommand{\PutW} {{ \it{write-put}}}
\newcommand{\PutData}{{ \it{put-data}}}
\newcommand{\GetData}{{ \it{get-data}}}
\newcommand{\ConfirmData}{\it{confirm-data}}
\newcommand{\PutDataResp}{{ \it{put-data-resp}}}
\newcommand{\GetDataResp}{{ \it{get-data-resp}}}
\newcommand{\ConfirmDataResp}{\it{confirm-data-resp}}
\newcommand{\ConfirmDataTag}{\text{\sc{confirm-data}}}
\newcommand{\QueryList}{\text{\sc{query-list}}}
\newcommand{\RepairList}{\text{\sc{repair-list}}}
\newcommand{\QueryTagData}{\text{\sc{query-tag-data}}}
\newcommand{\RepairTagData}{\text{\sc{repair-tag-data}}}
\newcommand{\CodedElementTag}{\text{\sc{code-elements}}}
\newcommand{\PutDataTag}{\text{\sc{put-data}}}
\newcommand{\QueryTag}{\text{\sc{query-tag}}}
\newcommand{\Coded}{code\act{-}elems}
\newcommand{\GetTag}{{ \it{get-tag}}}
\newcommand{\GetTagResp}{{ \it{get-tag-resp}}}
\newcommand{\InitRep}{{ \it{init-repair}}}
\newcommand{\InitRepResp}{{ \it{init-repair-resp}}}
\newcommand{\RADONL}{{RADON_R}} 
\newcommand{\RADONS}{{RADON_R^{(S)}}} 
\newcommand{\RADONC}{{RADON_C}}
\newcommand{\optag}[1]{{tag(#1)}}
\newcommand{\act}[1]{%
	\relax\ifmmode
	\mathord{\mathcode`\-="702D\sf #1\mathcode`\-="2200}%
	\else
	$\mathord{\mathcode`\-="702D\sf #1\mathcode`\-="2200}$%
	\fi
}
\begin{document}

	\title{RADON: Repairable Atomic Data Object in Networks}	
	\author{Kishori M. Konwar, N. Prakash, Nancy Lynch, Muriel M{\'{e}}dard} 
	\institute{Department of EECS, Massachusetts Institute of Technology \\
		\email{\{kishori, lynch\}@casil.mit.edu, \{prakashn, medard \}@mit.edu}}
	\date{}
	\maketitle

	\begin{abstract}
		
		Erasure codes offer an efficient way to decrease storage and communication costs while implementing atomic memory service in asynchronous distributed storage systems. In this paper, we provide erasure-code-based algorithms having the additional ability to perform background repair of crashed nodes.  A repair operation of a node in the crashed state is triggered externally, and is carried out by the concerned node via message exchanges with other active nodes in the system. Upon completion of repair, the node re-enters active state, and resumes participation in ongoing and future read, write, and repair operations. To guarantee liveness and atomicity simultaneously, existing works assume either the presence of nodes with stable storage, or presence of nodes that never crash during the execution. We demand neither of these; instead we consider a natural, yet practical network stability condition $N1$ that only restricts the number of nodes in the crashed/repair state during broadcast of any message. 
		
		\hspace{0.2in} We present an erasure-code based algorithm $\RADONC$ that is always live, and guarantees atomicity as long as condition $N1$ holds. In situations when the number of concurrent writes is limited, $\RADONC$ has significantly improved storage and communication cost over a replication-based algorithm $\RADONL$, which also works under $N1$. We further show how a slightly stronger network stability condition $N2$ can be used to construct algorithms that never violate atomicity. The guarantee of atomicity comes at the expense of having an additional phase during the read and write operations.

	\end{abstract}

	\section{Introduction} \label{sec:intro}

	We consider the problem of designing algorithms for distributed storage systems (DSSs) that offer consistent access to stored data. Large scale DSSs are widely used by several industries, and also widely studied by academia for a variety of applications ranging from e-commerce to sequencing genomic-data. The most desirable form of consistency is atomicity, which in  simple terms, gives the users of the data service  the impression that the various concurrent read and write operations take place sequentially. Implementations of atomicity on an asynchronous system under message passing framework, in the presence of failures, is often challenging. Traditional implementations\cite{ABD96}, \cite{FL03} use replication of data as the mechanism of fault-tolerance; however they suffer from the problem of having high storage cost, and communication costs for read and write operations.
	
	Erasure codes provide an efficient way to decrease storage and communication cost in atomicity implementations. An $[n, k]$ erasure code splits the value $v$, say of size $1$ unit into $k$ elements, each of size $\frac{1}{k}$ units, creates $n$ {\em coded elements}, and stores one coded element per server. The size of each coded element is also $\frac{1}{k}$ units. A class of erasure codes known as Maximum Distance Separable (MDS) codes have the property that value $v$ can be reconstructed from any $k$ out of these $n$ coded elements. While it is known that usage of erasure codes in asynchronous decentralized storage systems do not offer all the advantages as in synchronous centralized systems~\cite{kedar_bounds}, erasure code based algorithms like in \cite{aguilera}, \cite{DGL08}, \cite{CadambeLMM14}, or \cite{SODA2016} for implementing consistent memory service offer significant storage and communication cost savings over replication based algorithms, in many regimes of operation. For instance CASGC~\cite{CadambeLMM14} improves the costs under the scenario when the number of writes concurrent with a read is known to be limited, whereas SODA~\cite{SODA2016} trades-off write cost in order to optimize storage cost, which is meaningful in systems with infrequent writes.  Both CASGC and SODA are based on MDS codes.

	In this work, we consider the additional important issue of repairing crashed nodes without disrupting the storage service. Failure of storage nodes is a norm rather than an exception in large scale DSSs today, primarily because of the usage of commodity hardware for affordability and scalability reasons. Replication based algorithms in \cite{ABD96}, \cite{FL03} and erasure-code based algorithms in \cite{aguilera}, \cite{CadambeLMM14}, or \cite{SODA2016} do not consider repair of crashed nodes; instead assume that a crashed node remains so for the rest of the execution. Algorithms in \cite{DGL08}, \cite{amnesic} consider background repair of crashed nodes; however they assume either the presence of nodes having stable storage, whose content is unaffected by crashes, or presence of a subset of nodes that never crash during the entire execution. We relax both these assumptions in this work. In our model, any one of the storage nodes can crash; further, we assume that a crashed node loses all its data, both volatile as well as stable storage. A repair operation of a node in the crashed state is triggered externally, and is carried out by the concerned node via message exchanges with other active nodes in the system. Upon completion of repair, the node (with the same id) re-enters active state, and resumes participation in ongoing and future read, write, and repair operations.

	It is natural to expect a restriction on the number of crash and repair operations in relation to the read and write operations; the authors of \cite{amnesic} show an impossibility result in this direction, for guaranteeing liveness and atomicity, simultaneously. We formulate network stability conditions $N1$ and $N2$, which can be used to limit the number of crash and repairs operations overlapping with a client operation. These conditions are algorithm independent, and most likely to be satisfied in any practical storage network. At a high level, the condition $N1$ restricts the  set of servers that can be in the crashed or repair state any time a process (client or server) \emph{pings} all the $n$ servers with corresponding messages. Condition $N2$ is  slightly stronger than $N1$, and restricts the set of servers that can be in the crashed or repair state if the process wants to \emph{ping-pong} a fraction of  the servers. In a ping-pong, it is expected that the servers which receive a message also respond back to the sender of the message.

	\subsection{Summary of Our Contributions}
	
	We first present an impossibility result for an asynchronous DSS allowing background repair of crashed nodes, where there is no restriction on the number of crash and repair operations that occur during a client operation. We show that it is impossible to simultaneously achieve liveness and atomicity in such a system, even if all the crash and repair operations occur sequentially during the execution (i.e., at most one node remains in the crash or repair state at any point during the execution). 
	
	We then consider the problem of erasure-code based algorithm design under the network stability condition $N1$. We present the algorithm in two stages. First we present a replication-based algorithm {$\RADONL$}, which performs background-repair, and guarantees atomicity and liveness of operations under $N1$, if more than ${3/4}^{\text{th}}$ of all servers remain active during any ping operation. The write and read phases are almost identical to those of the ABD algorithm~\cite{ABD96}, except that during a write we expect responses from more than ${3/4}^{\text{th}}$ of all the servers, while in ABD responses are expected only from a majority of servers. A repair operation  in $\RADONL$ is simply a read operation initiated by the concerned server. Thus the algorithm itself is simple; however, the proof of atomicity gets complicated because of the fact that a repair operation can potentially restore the contents of a node to a version that is older than what was present before the crash. We show how the network stability condition can be used to prove atomicity, and this proof is the key takeaway from $\RADONL$ towards constructing the erasure-code based algorithm. 
	
	Our erasure-code based algorithm  $\RADONC$ uses $[n, k]$ MDS codes, and is a natural adaptation of {$\RADONL$} for the usage of codes. A key challenge while using erasure codes is ensuring liveness of read operations, in the presence of concurrent write operations. Various techniques are known in literature to handle this challenge; for instance, \cite{DGL08} assumes synchronous write phases, \cite{CadambeLMM14} limits the number of writes concurrent with a read, while \cite{SODA2016} uses an $O(n^2)$ write protocol to guarantee liveness of reads. In this work, like in \cite{CadambeLMM14}, we make the assumption that the number of write operations concurrent with any read operation is limited by a parameter $\delta$, which is known a priori. However, the usage of the concurrency bound differs from that of the CASGC algorithm in \cite{CadambeLMM14}; for instance, CASGC has three rounds for write operations, while $\RADONC$ uses only two rounds. In $\RADONC$, each server maintains a list of up to $\delta+1$ coded elements, corresponding to the latest $\delta+1$ versions received as a result of the various write operations. In comparison with $\RADONL$ where a writer expects responses from more than ${3/4}^{\text{th}}$ of all servers, a write operation in $\RADONC$ expects responses from more than $\frac{3n+k}{4}$ servers. During a read operation, the client reads the lists from more than $\frac{n+k}{2}$ nodes before decoding the value $v$. Like in $\RADONL$, a repair operation in $\RADONC$ is essentially a read operation by the concerned node; however this time the concerned node creates a list (instead of just one version) by decoding as many possibles versions that it can from the $\left\lceil \frac{n+k}{2} \right \rceil$ responses. Liveness and atomicity of operations are proved under network stability condition $N1$, if more than $\frac{3n+k}{4}$ servers remain active during any ping operation. $\RADONC$ has substantially improved storage and communication costs than $\RADONL$, when the concurrency bound $\delta$ is limited; see Table \ref{table:summ} for a comparison. 
	
	In both $\RADONL$ and $\RADONC$, violation of the network stability condition $N1$ can result in executions that are not atomic, which might not be preferable in certain applications. The choice of consistency over liveness, or vice versa, is the subject matter of a wide range of discussions and perspectives among system designers and software engineers. For example, BigTable,  a DSS by Google,  prefers safety over liveness~\cite{Bigtable}, whereas, Amazon's Dynamo does not compromise liveness but settles for \emph{eventual consistency} ~\cite{DeCandia:2007}.  Our third algorithm {$\RADONS$}, which is replication-based, is designed to guarantee atomicity during every execution. Liveness is guaranteed under the slightly  more stringent condition of $N2$, with more than ${3/4}^{\text{th}}$ of all servers remaining active during any ping-pong operation. The guarantee of atomicity of every execution also needs extra phases for read and write operations, when compared to {$\RADONL$}. The design of an erasure-coded version of {$\RADONS$} that never violates atomicity, is an interesting direction that we leave out for future work. 
	
	\begin{table}[ht]
		\centering
		\begin{tabular}{c c c c c c}
			\hline
			Algorithm & Write Cost & Read Cost   & Storage Cost    & Safe under   & Live under \\
			\hline
			{$\RADONL$} & $n$ & $2n$  & $n$ & $N1$ & $N1$  \\
			{$\RADONC$} & $\frac{n}{k}$ & $(\delta+2)\frac{n}{k}$  &  $( \delta+1)\frac{n}{k}$ & $N1$ & $N1$ \\
			{$\RADONS$} &$n$ & $2n$ &  $n$ & $always$ & $N2$  \\
			\hline
			\vspace{0.1in}
		\end{tabular}
		\caption{Performance comparison of ${\RADONL}$, ${\RADONC}$ and ${\RADONS}$, where $n$ is the number of servers, and $\delta$ is the maximum number of writes concurrent with a read or a repair operation. See Section \ref{sec:costs} for a justification of the costs.}  \label{table:summ}
		\vspace{-0.5in}	
	\end{table}
		
\subsection{Other Related Work}

\emph{Dynamic Reconfiguration: } Our setting is closely related to the problem of implementing a consistent memory object in a dynamic setting, where nodes are allowed to voluntarily leave and join the network. The problem involves dynamic reconfiguration of the set of nodes that take part in client operations, which is often implemented via a \textit{reconfig} operation that is initiated by any of the participating processes, including the clients.  Any node that wants to leave/join the network makes an announcement, via a \textit{leave/join} operation, before doing so. The problem is extensively studied in the field of distributed algorithms~\cite{LS02}, \cite{AKMS11}, \cite{Alex_Idit}, \cite{BBKR09}, \cite{AttiyaCEKW15}; review and tutorial articles appear in \cite{Aug_tutorial},  \cite{Spi_tutorial}, \cite{Mus_tutorial}. 

In our context, the problem of node repair could in fact be thought of as one of dynamic reconfiguration, wherein an involuntary crash is simulated by a voluntary leave operation without an explicit announcement.  In this case, a new node joins as a replacement node via the \textit{join} operation, which can be considered as the analogue of a \textit{repair} operation.  In the setting of dynamic reconfiguration, every node has a distinct identity; thus the replacement node joins the network with a new identity that is different from the identity of the crashed node~\cite{Aug_tutorial}. This demands a reconfiguration of the set of participating nodes after every repair. Such reconfigurations get in the way of client operations, and add to the latency of read and write operations~\cite{Mus_tutorial}, in practical implementations. Clearly, a repair operation as considered in this work does not demand any reconfiguration, since a repaired node has the same identity as the crashed node. Also, the current work shows that modeling repair via a static system, permits design of algorithms where clients remain oblivious to the presence of repair operations. Furthermore, addressing storage and communication costs is not the focus of the works in dynamic reconfigurations; specifically, it is not known as to how erasure codes can be advantageously used in dynamic settings. Our $\RADONC$ algorithm shows that when repair is carried out under a static model, it is indeed possible to advantageously use erasure to reduce costs, when the number of concurrent writes are limited. 

We make additional comparisons between our model and results to those found in works on dynamic reconfiguration.
Several impossibility results exist in the context of implementing a dynamic atomic register and simultaneously guaranteeing liveness; the authors in \cite{Alex_Idit} argue impossibility if there are infinitely many reconfigurations during an execution, while the authors in \cite{BBKR09} argue an impossibility when there is no upper bound on message delay. We see, not surprisingly, that even in the problem of repair, we need to suitably limit the number of crash and repair operations that occur in an execution, even if all crash and repairs are sequentially ordered. In \cite{AttiyaCEKW15}, the authors implement a dynamic atomic register  under a model that has an (unknown) upper bound $D$ on any point-to-point message delay, and where the number of reconfigurations in any $D$ units of time is limited. Our network condition $N1$ is similar, except that $1)$ we limit the number of crash and repairs during any broadcast messaging, instead of point-to-point messaging, and $2)$ we do not assume any bound on the message delay. 
In practice, limiting number of repairs during broadcast instead of every point-to-point messaging offers resiliency against \textit{straggler} nodes, which refer to the nodes having the worst delays among all nodes. We would also like to note that the algorithm in \cite{AttiyaCEKW15} does not guarantee atomicity, if the number of reconfigurations in $D$ units of time is higher than a set number. 
This appears similar to $\RADONL$, where atomicity is not guaranteed if we do not satisfy stability condition $N1$. While we show how the slightly tighter model $N2$ can be used to always guarantee atomicity, it is an interesting question as to whether the model $N2$ can be adopted in the work of \cite{AttiyaCEKW15} so as to always guarantee atomicity.

\emph{Repair-Efficient Erasure Codes for Distributed Storage:} Recently, a large class of new erasure/network codes for storage have been proposed (see ~\cite{dimakis2011survey} for a survey), and also tested in networks ~\cite{HuaSimXu_etal_azure}, \cite{sathiamoorthy}, \cite{rashmi_fast15},  where the focus is efficient storage of immutable data, such as, archival data.  These new codes are specifically designed to optimize performance metrics like repair-bandwidth and repair-time (of failed servers), and offer significant performance gains when compared to the traditional Reed-Solomon MDS codes~\cite{rscodes}. It needs to be explored if these codes can be used in conjunction with the {$\RADONC$} algorithm, to further improve the performance costs.

\emph{Other Works on using Erasure Codes:} Applications of erasure codes to Byzantine fault tolerant DSSs are discussed in \cite{cachin}, \cite{dobre}, \cite{hendricks}.  In \cite{kedar_bounds}, the authors consider algorithms that use erasure codes for emulating {\em regular} registers. Regularity~\cite{regular_lamport}, \cite{shao} is a weaker consistency notion than atomicity.

The rest of the document is organized as follows. Our system model appears in Section \ref{sec:models}. The impossibility result, and the network stability conditions appear in Section \ref{sec:net_stab}. The three algorithms appear in Sections \ref{sec:radonl}, \ref{sec:radonc} and \ref{sec:radons}, respectively. In Section \ref{sec:costs}, we discuss the storage and communication costs of the algorithms. Section ~\ref{sec:conclusion} concludes the paper. Proofs of various claims appear in the Appendix.
	
	\section{Models and definitions} \label{sec:models}

	\emph{{\bf Processes and Asynchrony}: } We consider a distributed system consisting of \emph{asynchronous} processes, each with a unique identifier (ID),  {of three types: a set of \emph{readers}, ${\mathcal R}$;  a set of \emph{writers}, ${\mathcal W}$; and a set of $n$ \emph{servers}, ${\mathcal S}$.} The readers and writers are together referred to as clients.  
	The set  ${\mathcal R}\cup{\mathcal W}\cup{\mathcal S}$ forms a  totally ordered set under some defined relation ($>$). The reader and writer processes initiate {\em read} and {\em write}  operations respectively,  and communicate with the servers using messages. 
	{A reader or writer can invoke a new operation only after all previous operations invoked by it has completed. The property is referred to as the {\em well-formedness} property of an execution.} 
	We assume that every client/server is connected to every other server via a reliable communication link; thus as long as the destination process is non-faulty, any message sent on  the link  eventually reaches the destination process.  
	
	\vspace{0.1in}
	
	\noindent \emph{{\bf Crash and Recovery}:} A client may fail at any point during the execution. At any point during the execution, a server can be in one (and only one) of the following three states: \emph{active}, \emph{crashed} or \emph{repair}. A crash event triggers a server to enter the \emph{crashed} state from an \emph{active} state. The server remains in the \emph{crashed} state for an arbitrary amount of time, but eventually is triggered by a repair event to enter the \emph{repair} state. Crash and repair events are assumed to be externally triggered.  A server in the \emph{repair} state can experience another crash event, and go back to the \emph{crashed} state.  A server in the \emph{crashed} state does not perform  any local computation. The server also does not send or receive messages in the \emph{crashed} state, i.e., any message reaching the server in a \emph{crashed} state is lost. A server which enters the \emph{repair} state has all its local state variables set to default values, i.e., a crash event causes the server to lose all its state variables. A server in the \emph{repair} state can perform computations like in the \emph{active} state. 
	
	\vspace{0.1in}
	
	\noindent \emph{{\bf Atomicity and Liveness}:} We aim to implement only one atomic read/write memory object, say $x$, under the MWMR setting on a set of  servers, because any shared atomic memory can be emulated by composing individual atomic objects.
	The object value $v$ comes from some set $V$; initially 	$v$ is set to a distinguished value $v_0$ ($\in V$). Reader $r$ requests a read operation on  object $x$. Similarly, a write operation is requested by a writer $w$. Each operation at a non-faulty client begins with an \emph{invocation step} and terminates with a  \emph{response step}. An operation  is \emph{incomplete} when its invocation step does not have the associated response step; otherwise it is  \emph{complete}. 
	
	By \emph{liveness of a read or a write operation}, we mean that during any well-formed execution, any read or write operation respectively initiated by a non-faulty reader or writer completes,  despite the crash failure of any other client. By \emph{liveness of repair} associated with a crashed server, we mean that the server which enters a crashed state eventually re-enters the active state, unless it experiences a crash event during every repair operation that the server attempts. The liveness of repair holds despite the crash failure of any other client. 
	
	\vspace{0.1in}
	
	\noindent \emph{{\bf Background on Erasure coding}:} In $\RADONC$, we use an $[n, k]$  linear MDS code ~\cite{verapless_book} over a finite field $\mathbb{F}_q$ to encode and store the value $v$ among the $n$ servers. An $[n, k]$ MDS code has the property that any $k$ out of the $n$ coded elements can be used to recover (decode) the value $v$. For encoding, $v$ is divided\footnote{In practice $v$ is a file, which is divided into many stripes based on the choice of the code, various stripes are individually encoded and stacked against each other. We omit details of representability of $v$ by a sequence of symbols of $\mathbb{F}_q$, and the mechanism of data striping, since these are fairly standard in the coding theory literature.} into $k$ elements $v_1, v_2, \ldots v_k$ with each element having  size $\frac{1}{k}$ (assuming size of $v$ is $1$). The encoder takes the $k$ elements as input and produces $n$ coded elements $c_1, c_2, \ldots, c_n$ as output, i.e., $[c_1, \ldots, c_n] = \Phi([v_1, \ldots, v_k])$, where $\Phi$ denotes the encoder. For ease of notation, we simply write $\Phi(v)$ to mean  $[c_1, \ldots, c_n]$. The vector $[c_1, \ldots, c_n]$ is  referred to as the codeword corresponding to the value $v$. Each coded element $c_i$ also has  size $\frac{1}{k}$. In our scheme we store one coded element per server. We use $\Phi_i$ to denote the projection of $\Phi$ on to the $i^{\text{th}}$ output component, i.e., $c_i = \Phi_i(v)$. Without loss of generality, we associate the coded element $c_i$ with server $i$, $1 \leq i \leq n$.
	
	\vspace{0.1in}
	
	\noindent \emph{{\bf Storage and Communication Cost}:}  We define the total storage cost as the size of the data stored across all servers, at any point during the execution of  the algorithm.  The communication cost associated with a read or write operation is the size of the total data that gets transmitted in the messages sent as  part of the operation. We assume that metadata, such as version number, process ID, etc. used by various operations is of negligible size, and is hence ignored in the calculation of storage and communication cost. Further, we normalize both the costs with respect to the size of the value $v$; in other words, we compute the costs under the assumption that $v$ has size $1$ unit.

	\section {Network Stability Conditions} \label{sec:net_stab}

	\subsection{An Impossibility Result}
	The crash and recovery model described  in Section \ref{sec:models} does not impose any restriction on the \emph{rate of crash events, and repair operations} that happen in the system. In other words, the model described above does not limit in any manner the number of crash events/repair operations, which can overlap with any a client operation. In \cite{amnesic}, the authors showed that without such restrictions, it is impossible to implement a shared atomic memory service, which guarantees liveness of operations. Below, we state an impossibility result which holds even if there is at most one server in the crashed/repair state at any point during the execution. We then introduce network stability conditions that enable us impose restrictions on the number of crash/repair events that overlap with any  operation.  

	\begin{theorem} \label{thm:imp_weak}
{	It is impossible to implement an atomic memory service that guarantees liveness of reads and writes,  under the system model described in Section \ref{sec:models}, even if $1$) there is at most one server in the crashed/repair state at any point during the execution, and $2$) every repair operation completes, and takes the repaired server back to the active state.}
	\end{theorem}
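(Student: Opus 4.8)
The plan is to prove this via an indistinguishability/partitioning argument tailored to the repair-with-data-loss model. The key leverage point is that a repaired server comes back up with all state variables reset to defaults, and that repair must complete by communicating with other \emph{active} servers. If we can arrange an execution where, at the moment a server repairs, the only servers that can help it are ones that are "stale" with respect to a recent write, then the repaired server will be restored to an old value, and a subsequent read forced to contact exactly that set of servers will return the stale value, violating atomicity.

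Concretely, I would first fix any purported algorithm $\mathcal{A}$ claiming liveness of reads, writes, and repair together with atomicity. Since liveness of repair must hold even when the repairing server can only talk to a bounded-size "available" set (here, essentially all non-crashed servers, but we will crash them one at a time), there must be some threshold $t$ such that a repair completes after hearing from at most $t$ servers; similarly reads and writes each complete after hearing from at most some number of servers — but crucially, because we are allowed to crash servers one at a time arbitrarily, liveness forces \emph{every} operation to complete while talking to a set of servers that can be made to exclude any single chosen server (and, by cascading one-at-a-time crashes and repairs, to exclude a whole chosen subset, as long as that subset is vacated "sequentially"). The second step is to build two executions $\alpha_1$ and $\alpha_2$ that a reader cannot distinguish. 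In $\alpha_1$: run a write $W$ of value $v_1 \neq v_0$ to completion; it only needs to reach some subset $P$ of servers. Then, one at a time, crash each server in $P$ and repair it; by the first step each such repair completes by contacting only servers outside $P$ — which still hold $v_0$ — so after the sweep every server holds $v_0$ again, yet $W$ has completed. Now a read $R$ invoked afterwards, again forced by crash-sweeps to hear only from servers currently holding $v_0$, returns $v_0$. Place $R$ entirely after $W$ in real time: atomicity is violated because $R$ must return a value at least as new as $v_1$. Throughout, at most one server is crashed/repairing at any instant, and every repair completes and returns the server to active, which is exactly what the theorem permits.

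The main obstacle — and the step needing the most care — is the first one: justifying that liveness forces operations (especially repairs) to complete against an adversarially restricted set of servers, and in particular that one can "sweep" a subset $P$ by sequential single crash/repair events without a later repair being able to secretly pull fresh data from a server that was already swept earlier in the same sweep. This requires a careful scheduling argument: process the servers of $P$ in an order, and after repairing server $s_i$ show that, by delaying all messages from $s_i$ (or by immediately re-crashing and re-repairing if $\mathcal A$ tries to re-propagate), we can keep $s_i$ from transmitting $v_1$ to later repairs $s_{i+1}, \dots$, while still eventually letting every repair and the final read complete because $\mathcal A$ guarantees liveness against the bounded available set. One must also rule out that $\mathcal A$ "times out" and aborts — but the model gives no timeouts, and liveness is unconditional here, so any run in which the available servers keep responding must terminate with a response. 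Assembling these scheduling details into a single well-formed infinite execution in which $\alpha_1$ is an initial segment, and then arguing indistinguishability from an execution $\alpha_2$ where $W$ never happened (so $v_0$ is trivially the correct return value), yields the contradiction with atomicity.
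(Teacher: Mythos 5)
Your approach has a genuine gap, and the paper's argument is different in a way that sidesteps it. You let the write $W$ \emph{complete}, so that some set $P$ of servers actually holds $v_1$, and then you try to ``wash out'' $v_1$ by sweeping $P$ with sequential crash/repair events. The sweep, however, cannot be guaranteed to succeed: under assumption~(1), while $s_1\in P$ is repairing, every other server must be \emph{active}, so the repair of $s_1$ is free to wait for responses from $s_2,\dots,s_k\in P$, all of which still hold $v_1$, and reliable links force those responses to be delivered eventually. Your proposed fix --- ``delay all messages'' or ``immediately re-crash'' --- either violates eventual delivery, or puts two servers in crashed/repair state simultaneously, contradicting the theorem's own hypothesis. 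More fundamentally, the inference that liveness ``forces'' a repair threshold $t$ that can be made to exclude any chosen set of servers is false in this restricted model: since only one server is ever down, a repair may legitimately require responses from all $n-1$ remaining servers without losing liveness, and then your adversary has no power to hide $v_1$ from it.

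The paper's proof avoids this altogether. Let $\mathcal{S}_w$ be the servers the writer messages before awaiting any reply. The adversary times a crash of each $s\in\mathcal{S}_w$ to coincide exactly with the arrival of the writer's message, so the message is \emph{lost}; it then repairs $s$ (successfully) before the next message lands. Crucially, no server ever learns $v_1$. If the writer must hear from some server before terminating, it blocks forever --- liveness of write is violated. If the writer can terminate with no responses, one compares $\beta^w(v_1)\circ\beta^r$ with $\beta^w(v_1')\circ\beta^r$ for $v_1'\neq v_1$: the server states are identical (all lost, all repaired to defaults), so the reader cannot distinguish them, yet atomicity demands it return $v_1$ in the first and $v_1'$ in the second --- contradiction. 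Your indistinguishability target (``$W$ happened and was washed out'' vs.\ ``$W$ never happened'') would only work if the wash-out were achievable, which the preceding paragraph shows it is not. To repair your argument you would essentially have to move the crashes to occur \emph{during} the write so the servers never receive $v_1$ --- which is exactly the paper's construction.
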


	\subsection{Network Stability Conditions $N1$ and $N2$ }
	
	We begin with the notions of a group-send operation, and effective consumption of a message. 

	\vspace{0.1in}
	\noindent \emph{{\bf group-send operation}:} The group-send operation is used to abstract the operation of a process sending a list of  $n$ messages $\{m_1, \cdots, m_n\}$ to the set of all $n$ servers $\{s_1, \ldots, s_n\} = \mathcal{S}$, where message $m_i$ is send to server $s_i, 1 \leq  i \leq n$. Note that this is a mere abstraction of the process sending out $n$ point-to-point  messages sequentially  to  $n$ servers, without interleaving the ``send" operations with any significant local computations or waiting for any external inputs. The operation is no more powerful then sending $n$ consecutive messages.  The operation is written as $group\act{-}send([m_1, m_2, \cdots, m_n])$. In the event  $m_i = m, \forall i$, we simply write $group\act{-}send(m)$. Our model allows the sender to fail while executing the $group\act{-}send$ operation, in which case only a subset of the $n$ servers receive their corresponding messages.
	
	\vspace{0.1in}
	\noindent \emph{{\bf Effective Consumption}:} We say a process effectively consumes a message $m$, if it receives  $m$, and executes all steps of the algorithm that depend only on the local state of the process, and the message $m$; in other words, the process executes all the steps that do not require any further external messages.
	
	\begin{definition}[Network Stability Conditions]
		Consider a process $p$ executing a $group\act{-}send$ $([m_1, m_2, \cdots, m_n])$ operation, and 
		consider the following statements:
		
		$(a)$   $(i)$ There exists a subset $\mathcal{S}_{\alpha} \subseteq {\mathcal S}$ of 
		$|\mathcal{S}_{\alpha}| = \left\lceil\alpha n\right\rceil$ servers, $0 < \alpha < 1$, all of which effectively consume their respective messages from the group-send operation, and $(ii)$ all the servers in $\mathcal{S}_{\alpha}$ remain in the active state during the interval $[T_1 \ T_2]$, where $T_1$ denotes the point of time of invocation of the  $group\act{-}send$ operation, and  $T_2$ denotes the earliest point of time in the execution at which all of the servers in $\mathcal{S}_{\alpha}$ complete the effective consumption of their respective messages.
		
		$(b)$ Further, if effective consumption of the message $m_i$ by server $s_i$ involves sending a response back to the process $p$, for all $s_i \in \mathcal{S}_{\alpha}$, then all servers in $\mathcal{S}_{\alpha}$ remain in the active state during the interval $[T_1 \ T_3]$, where $T_3$ denotes the earliest point of time in the execution at which the process $p$ completes effective consumption of the responses from the all the servers in $\mathcal{S}_{\alpha}$. 
		
		If the network satisfies Statement $(a)$ for every execution of a group-send operation by any process, we say that it satisfies network stability condition $N1$ with parameter $\alpha$. If the network satisfies Statements $(a)$ and $(b)$ for every execution of a group-send operation by any process, we say that it satisfies network stability condition $N2$ with parameter $\alpha$. 
	\end{definition}
	
	Clearly,  $N2$ implies $N1$. Note that the set $\mathcal{S}_{\alpha}$ which needs to satisfy the conditions need not be the same for various invocations of group-send operations by either the same or distinct processes. Also, note that in condition $N2$, the process $p$ might crash before completing the effective consumption of the responses from the servers in $\mathcal{S}_{\alpha}$. In this case we only expect Statement $(a)$ to be satisfied, and not Statement $(b)$. Furthermore, in both $N1$ and $N2$, we do not expect any of these statements to be true, if process $p$ crashes after partial execution of the group-send operation.
	
	\section{The $\RADONL$ Algorithm}\label{sec:radonl}
	In this section,  we  present the $\RADONL$ algorithm, and prove its liveness and atomicity properties for networks that satisfy the network condition $N1$ with $\alpha > \frac{3}{4}$.  We begin with some useful notation. Tags are used for version control of the  object values. A tag $t$ is defined as a pair $(z, w)$, where $z \in \mathbb{N}$ and $w \in \mathcal{W}$ denotes the ID of a writer. We use $\mathcal{T}$ to denote the set of all the possible tags. For any two tags $t_1, t_2 \in \mathcal{T}$, we say  $t_2 > t_1$ if $(i)$ $t_2.z > t_1.z$ or $(ii)$ $t_2.z = t_1.z$ and $t_2.w > t_1.w$. Note that $(\mathcal{T}, >)$ is  a totally ordered set. 
	
	The protocols for writer, reader, and servers are shown in Fig.~\ref{fig:radonl-server}. Each server stores two state variables $(i)$ $(t_{loc}, v_{loc})$ -  a tag and  value pair, initially set to $(t_0, v_0)$,  $(ii)$ $status$ - a variable that can be in either \emph{active} or \emph{repair} state. 
	
	\begin{algorithm*}[!ht]
		\begin{algorithmic}
			\begin{multicols}{2}{\footnotesize
					\Part{write(v)} { }\EndPart
					\Part{\underline{\GetTag}} {
						\State  $group\act{-}send(\QueryTag)$
						\State  Await  responses from majority
						\State  Select the max tag  $t^*$
					}\EndPart
					
					\Part{\underline{\PutData}} {
						\State $t_w = (t^{*}.z + 1, w)$  
						\State $group\act{-}send((\PutDataTag, (t_w, v)))$
						\State Terminate after $\left\lceil \frac{3n + 1}{4} \right\rceil$ acks.
					}	\EndPart
					\Statex
				
					\Part{read} { }\EndPart
					\Part{\underline{\GetData}} {
						\State  $group\act{-}send(\QueryTagData)$
						\State  Await responses from  majority
						\State  Select $(t_r, v_r)$, with  max tag.
					}\EndPart
					\Statex	
					\Part{\underline{{\PutData} }}{
						\State $group\act{-}send((\PutDataTag, (t_r, v_r)))$
						\State Wait for $\left\lceil \frac{3n + 1}{4}\right\rceil$ acks 
						\State Return $v_r$
					}\EndPart
					
					\Statex
					\Part{Server $s \in \mathcal{S}$}\EndPart
					\Part {\underline{$State~Variables$}}{ 
						
						\Statex $(t_{loc}, v_{loc}) \in \mathcal{T} \times {\mathcal V}$, initially   $(t_0, v_0)$
						\Statex $status \in \{active, repair\}$, initially $active$
						
					}\EndPart
					\Part {\underline{\GetTagResp, recv $\QueryTag$~from writer $w$}} {
						\If{ $status = active$ }
						\State Send  $t_{loc}$ to $w$
						\EndIf
					}\EndPart
					\Part {\underline{\GetDataResp, recv $\QueryTagData$~from reader $r$}} {
						\If{ $status = active$ }
						\State Send  $(t_{loc}, v_{loc})$ to $r$
						\EndIf
					}\EndPart
					\Statex
					\Part{ \underline{\PutDataResp,~recv $\PutDataTag, (t, v)$ from client $c$ }}{
						\If{$status = active$} 
						\If{ $t > t_{loc}$ } 
						\State  $(t_{loc}, v_{loc}) \leftarrow (t, v)$
						\EndIf 
						\State  Send ack to $c$.
						\EndIf
						
					}\EndPart
											\Statex
					\Part{ \underline{\InitRep} }{
						\State $status \leftarrow repair$
						\State $(t_{loc}, v_{loc}) \leftarrow (t_0, v_0)$
						\State $group\act{-}send(\RepairTagData)$
						\State Await responses from  majority
						\State Select $(t_{rep}, v_{rep})$, for max tag 
						\State $(t_{loc}, v_{loc}) \leftarrow (t_{rep}, v_{rep})$
						\State $status \leftarrow active$ 
					}\EndPart
					\Part{ \underline{\InitRepResp, recv $\RepairTagData$ from $s'$}}{
						\If{$status = active$} 
						\State  Send $(t_{loc}, v_{loc})$ to $s'$
						\EndIf
					}\EndPart
				}\end{multicols}
			\end{algorithmic}	
			\caption{The protocols for writer, reader, and any  server $s \in {\mathcal S}$ in  $\RADONL$.}\label{fig:radonl-server}
		\end{algorithm*}

		The write and read operations are very similar to those in the ABD algorithm~\cite{ABD96}, and each consists of two phases. In the first phase,\GetTag, of a write operation $\pi$, the writer queries all servers for local tags, awaits responses from a majority of servers, and selects the maximum tag $t^*$ from among the responses.  Next, the writer executes the \PutData~ phase, during which a new tag $t_w = tag(\pi)$ is created by incrementing the integer part of $t^*$, and by incorporating the writer's own ID. The writer then sends pair $(t_w, v)$ to all servers, and awaits acknowledgments (acks) from $\left\lceil \frac{3n+1}{4}\right\rceil$ servers before completing the operation. The two phases are identical to those of the ABD algorithm~\cite{ABD96}, except for the fact that during the second phase, ABD expects acks from only a majority of servers, whereas  here we need from  $\left\lceil \frac{3n+1}{4}\right\rceil$ servers. During a read operation $\rho$, the reader in the {\GetData} phase queries all the servers in $S$ for the respective local tag and value pairs. Onces it receives responses from a majority of servers in $S$, it picks the pair with the highest  
		tag, which we designate as $t_r = tag(\pi)$. In the subsequent \PutData~ phase, the reader
		writes back  the tag $t_r$ and the corresponding value $v_r$ to all servers, and terminates after receiving acknowledgments from $\left\lceil\frac{3n+1}{4}\right\rceil$ servers. Once again, we remark that both phases in the read are identical to those of the ABD algorithm, except for the difference in the number of the servers from which acks are expected in the second write-back phase. Note that, during both the write and operations, a server responds to an incoming message only if it is in the active state. 
		
		A repair operation is initiated via the action \InitRep, by an external trigger, at a server which is in the crashed state. Note that we do not explicitly define a \emph{crashed} state since a crash is not a part of the algorithm. We assume that as soon as the repair operation starts, the variable \textit{status} is set to the \emph{repair} state, and also the local (tag, value) pair is set to the default sate $(t_0, v_0)$.  The repair operation is essentially the first phase of the read operation, during which  the server queries all the servers for the respective local tag and value pairs, and stores the tag and value pair corresponding to the highest tag after receiving responses from a majority of servers. Finally, the repair operation is terminated setting  variable \textit{status} to  \emph{active} state. A server in $S$ responds to a request generated from \InitRep~phase only if it is in the active state.

		\subsection{Analysis of $\RADONL$} \label{sec:analysis_randonl}	                             

		Liveness of read, write and repair operations in $\RADONL$ follows immediately if we assume condition  $N1$ with $\alpha > \frac{3}{4}$. This is because liveness of any operation depends on sufficient number of responses from the servers during the various phases of the operation. From Fig. \ref{fig:radonl-server}, we know that the maximum number of responses that is expected in any phase is $\left\lceil\frac{3n+1}{4}\right\rceil$, which is guaranteed under $N1$ with $\alpha > \frac{3}{4}$. 
		
		The tricky part is to prove atomicity of reads and writes. { The proof  is based on Lemma $13.16$ of \cite{Lynch1996}, a restatement of which can be found in ~\cite{RADON:arxiv}.} Consider two completed write operations $\pi_1$ and $\pi_2$, such that, $\pi_2$ starts after the completion of $\pi_1$. For any completed write operation $\pi$, we define $tag(\pi) = t_w$, where $t_w$ is the tag which the writer uses in the {\PutData} phase. In this case, one of the requirements the algorithm needs to satisfy to ensure atomicity is $tag(\pi_2) > tag(\pi_1)$. While this fact is straightforward to prove for an algorithm like ABD, which does not have background repair, in $\RADONL$, we need to consider the effect of those repair operations that overlap with $\pi_1$, and also those that occur in between $\pi_1$ and $\pi_2$. The point to note is that such repair operations  can potentially restore the contents of the repaired node such that the restored tag is less than $tag(\pi_1)$. We then need to show the absence of propagation of older tags (older than $tag(\pi_1)$) into a majority of nodes, due to a sequence of repairs which happen before $\pi_2$ decides its tag. We do this via the following two observations: $1)$ In Lemma \ref{lem:property}, we show that any successful repair operation, which begins after a point of time $T$, always restores value to one, which corresponds to a tag which is at least as high as the minimum of the tags stored in any majority of active servers at time $T$. This fact is in turn used to prove a similar property for reads and writes, as well. $2)$ We next show (as part of proof of Theorem \ref{thm:atomicity}), under the assumption of $N1$ with $\alpha > 3/4$, the existence of a point of  time $T$  before the completion of $\pi_1$ such that a majority of nodes are active at $T$, and all of whose tags are at least as high as $tag(\pi_1)$. The two steps are together used to prove that $tag(\pi_2) > tag(\pi_1)$. A similar sequence of steps are used to show atomicity properties of read operations, as well. 
		
		For a completed read operation $\pi$, $tag(\pi) = t_r$, where $t_r$ is the tag corresponding to the value $v_r$ returned by the reader. For a completed repair $\pi$, $tag(\pi) = t_{rep}$, where $t_{rep}$ is the tag corresponding to the value restored during the repair operation. 
		
		\begin{lemma}  \label{lem:property}
			Let $\beta$ denote a well-formed execution of $\RADONL$. Suppose $T$ denotes a point of time in $\beta$ such that there exists a majority of servers $\mathcal{S}_m$, $\mathcal{S}_m \subset \mathcal{S}$ all of which are in the active state at  time $T$. Also, let $t_s$ denote the value of the local tag at server $s \in \mathcal{S}_m$, at time $T$. Then, if $\pi$ denotes any completed repair or read operation that is initiated after time $T$, we have $tag(\pi) \geq \min_{s\in \mathcal{S}_m} t_s$. Also, if $\pi$ denotes any completed write  operation that is initiated after time $T$, we have $tag(\pi) > \min_{s\in \mathcal{S}_m} t_s$.
		\end{lemma}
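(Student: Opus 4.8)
The plan is to reduce the whole statement to one uniform claim about the \emph{first} (query) phase of operations, and then establish that claim by a minimal-counterexample argument over the events of the execution. Write $\mu := \min_{s\in\mathcal{S}_m} t_s$. For a completed read, repair, or write $\pi$ that is initiated after $T$, let $g(\pi)$ denote the maximum tag that $\pi$ selects among the majority of first-phase responses it collects (the \GetData\ phase for a read, the query phase of \InitRep\ for a repair, the \GetTag\ phase for a write). The claim is:

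\medskip
\noindent\emph{Claim.} For every completed read, repair, or write $\pi$ initiated after $T$, we have $g(\pi)\ge\mu$.
\medskip

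\noindent Given the Claim the lemma is immediate: for a read $tag(\pi)=t_r=g(\pi)\ge\mu$; for a repair $tag(\pi)=t_{rep}=g(\pi)\ge\mu$; and for a write $tag(\pi)=(g(\pi).z+1,\,w)$, so $tag(\pi).z=g(\pi).z+1>g(\pi).z\ge\mu.z$, hence $tag(\pi)>\mu$ regardless of writer identifiers. So it remains to prove the Claim.

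Suppose the Claim fails, and among all completed read/repair/write operations initiated after $T$ that violate it, let $\pi$ be one whose \emph{select step} (the step at which it has collected a majority of first-phase responses and picks the maximum tag) is earliest in the execution; thus $g(\pi)<\mu$, while $g(\pi')\ge\mu$ for every completed read/repair/write $\pi'$ initiated after $T$ whose select step precedes that of $\pi$. Let $\mathcal{S}'$ be the majority whose responses $\pi$ used, and fix $s^{\ast}\in\mathcal{S}'\cap\mathcal{S}_m$, which is nonempty since any two majorities of $\mathcal{S}$ intersect. Server $s^{\ast}$ sent its response to $\pi$ at some point $\tau^{\ast}$; since $\pi$'s query was sent after $T$ we get $\tau^{\ast}>T$; since a server answers queries only while \emph{active}, $s^{\ast}$ is active at $\tau^{\ast}$; and the tag $h$ that $s^{\ast}$ reported satisfies $h\le g(\pi)<\mu$ because $\pi$ takes the maximum over $\mathcal{S}'$.

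I now derive a contradiction from the facts that $s^{\ast}\in\mathcal{S}_m$ has local tag $t_{s^{\ast}}\ge\mu$ at $T$, is active at $\tau^{\ast}>T$, yet has local tag $h<\mu$ at $\tau^{\ast}$. Inspecting the server code, a server's local tag is non-decreasing throughout any maximal interval in which it stays active (the only update then is the monotone \PutDataResp\ rule) and can decrease only \emph{during} a repair, which resets the tag and afterwards reinstalls the tag $t_{rep}$ it reads. Since $h<\mu\le t_{s^{\ast}}$, the maximal active periods of $s^{\ast}$ containing $T$ and containing $\tau^{\ast}$ are therefore distinct, and between them $s^{\ast}$ crashes and completes at least one repair; let $\pi'$ be the repair of $s^{\ast}$ that completes (executes its transition to the \emph{active} state) and starts the maximal active period that contains $\tau^{\ast}$. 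Then $\pi'$ is initiated after $T$, because it is triggered by a crash that occurs after the maximal active period of $s^{\ast}$ that contains $T$, and that period ends strictly after $T$; and the select step of $\pi'$ precedes $s^{\ast}$'s response to $\pi$, which precedes the select step of $\pi$. By minimality of $\pi$ we get $g(\pi')\ge\mu$, so $\pi'$ reinstalls $t_{rep}=g(\pi')\ge\mu$; and $s^{\ast}$ stays active from the completion of $\pi'$ until $\tau^{\ast}$, so its tag is non-decreasing there, whence $h\ge\mu$ --- contradicting $h<\mu$. This proves the Claim, and with it the lemma.

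The single genuine obstacle is that a server's local tag is \emph{not} monotone across crash/repair cycles: a repair may reinstall a stale tag, so one cannot argue by monotonicity plus quorum intersection as in plain ABD. The minimal-counterexample argument over first-phase select steps is exactly what neutralises this, since the repair that would put a spuriously low tag at $s^{\ast}$ has an earlier select step and is already covered by the induction hypothesis. The points that need care are: (i) servers respond to queries only while \emph{active}, so any responder has either a completed repair or no crash at all behind it since $T$; (ii) that responsible repair is itself initiated after $T$ --- this is precisely where the hypothesis that $\mathcal{S}_m$ is active at $T$ is used; and (iii) ordering all steps by their position in the execution so that the induction over select steps is well founded.
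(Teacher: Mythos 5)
Your proof is correct and takes essentially the same approach as the paper's: a well-founded (minimal-counterexample) argument over repair operations, combined with majority intersection and the monotonicity of a server's local tag within any maximal active interval. The paper does this case-by-case (first for repairs, then reducing reads and writes to the repair case), picking the first-to-complete repair in a set $\Pi'_R$; you fold all three cases into a single claim about the first-phase maximum $g(\pi)$ and order all select steps uniformly, which is a cleaner packaging of the same argument and sidesteps the paper's slightly imprecise assertion that the responding server in $\mathcal{S}_m$ ``must have remained in the active state'' throughout $[T, T_s]$ (what the argument actually needs, and what both proofs ultimately deliver, is only that its local tag cannot drop below $\mu$ while it is able to respond).
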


		\begin{theorem}[Liveness]  \label{thm:liveness}
			Let $\gamma$ denote a well-formed execution of $\RADONL$, under the condition $N1$  with $\alpha > \frac{3}{4}$ . Then every operation initiated by a non-faulty client completes. 
		\end{theorem}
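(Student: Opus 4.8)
The plan is to reduce liveness to the termination of each individual phase of a read or write, and then discharge that using condition $N1$ with $\alpha>\frac34$. A write consists of the two phases $\GetTag$ and $\PutData$, and a read consists of $\GetData$ and $\PutData$ (Fig.~\ref{fig:radonl-server}); a non-faulty client, by well-formedness, executes these phases sequentially and is never blocked on a prior operation, so it suffices to show that each phase, once entered by a non-faulty client, eventually terminates. (The theorem concerns client operations only, so repair — which is server-initiated — need not be treated here.)

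Each of these phases has the same shape: the client issues a single $group\act{-}send$ to all $n$ servers (of $\QueryTag$, $\QueryTagData$, or $(\PutDataTag,(t,v))$) and then blocks until it has collected responses from a threshold number $q$ of servers, where $q$ is either a majority $\lceil (n+1)/2 \rceil$ (the query phases) or $\lceil (3n+1)/4 \rceil$ (the put-back phases). So I would show that a non-faulty client always eventually receives at least $q$ responses. For this I invoke Statement $(a)$ of $N1$ applied to the $group\act{-}send$ of the phase: there is a set $\mathcal{S}_\alpha$ of $\lceil \alpha n\rceil$ servers, each of which effectively consumes its message and stays active throughout the interval $[T_1,T_2]$. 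The key observation is that for each relevant server action ($\GetTagResp$, $\GetDataResp$, $\PutDataResp$) the only steps are a possible local $(t_{loc},v_{loc})$ update and the sending of a response to the client — both depend only on the server's local state and the received message, hence are part of ``effective consumption''; since the servers in $\mathcal{S}_\alpha$ are active while doing this, each of them actually dispatches a response. Because the client is non-faulty and the channels are reliable, all $\lceil \alpha n\rceil$ of these responses are eventually delivered.

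It then remains to check the arithmetic $\lceil \alpha n\rceil \ge \lceil (3n+1)/4 \rceil \ge \lceil (n+1)/2 \rceil$ for $\alpha > 3/4$ and $n\ge 1$ (a short case analysis on $n \bmod 4$), which gives $\lceil\alpha n\rceil \ge q$ for whichever phase is in progress; the client's wait therefore terminates, the phase completes, and iterating over the at most two phases shows the whole operation completes. The only point that needs care — and the nearest thing to an obstacle — is justifying that ``effective consumption'' already subsumes the server's reply, so that the one-directional guarantee of $N1$'s Statement $(a)$ suffices and the round-trip guarantee of $N2$'s Statement $(b)$ is not needed here; the remainder is bookkeeping over the fixed phase structure of Fig.~\ref{fig:radonl-server}.
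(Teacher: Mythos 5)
Your proposal is correct and takes essentially the same approach as the paper; the paper's own argument (Section~\ref{sec:analysis_randonl}) is just a terse version of yours, observing that every phase waits for at most $\lceil(3n+1)/4\rceil$ responses and that $N1$ with $\alpha>\frac34$ guarantees these. Your fleshed-out justification --- that for $\GetTagResp$, $\GetDataResp$, $\PutDataResp$ the server's reply is itself part of effective consumption, so Statement~$(a)$ of $N1$ plus reliable channels already delivers the responses without invoking $N2$ --- is exactly the tacit reasoning the paper relies on, and correctly identifies why the one-way guarantee suffices here.
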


		\begin{theorem}[Atomcity]  \label{thm:atomicity}
			Every execution of the $\RADONL$ algorithm operating under the $N1$ network stability condition with $\alpha > \frac{3}{4}$, is atomic.
		\end{theorem}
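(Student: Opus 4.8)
The plan is to apply the standard sufficient condition for atomicity of a read/write object, Lemma~$13.16$ of~\cite{Lynch1996} (restated in~\cite{RADON:arxiv}): it suffices to exhibit, on the set $\Pi$ of all completed read/write operations together with every incomplete write whose tag is returned by some completed read, a partial order $\prec$ with $(i)$ finitely many $\prec$-predecessors for each operation, $(ii)$ consistency with real time (if the response of $\pi$ precedes the invocation of $\pi'$, then $\pi'\not\prec\pi$), $(iii)$ the writes totally ordered and every read $\prec$-comparable to every write, and $(iv)$ each read returning the value of its immediate $\prec$-predecessor write, or $v_0$ if none. I would take $\pi\prec\pi'$ iff $tag(\pi)<tag(\pi')$, or $tag(\pi)=tag(\pi')$ with $\pi$ a write and $\pi'$ a read. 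Writes by distinct writers carry distinct tags (distinct writer-id component), and by tracing \PutDataTag~messages backward every tag held at a server is attached to the unique value created by a single write, so $(iii)$, $(iv)$, and the real-time consistency $(ii)$ all reduce to one statement about how tags evolve.

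That statement is: whenever $\pi_1\in\Pi$ is completed and $\pi_2$ is invoked after the response of $\pi_1$, we have $tag(\pi_2)\ge tag(\pi_1)$, with strict inequality if $\pi_2$ is a write. I would prove it in the two steps outlined in the text. Step~1 is precisely Lemma~\ref{lem:property}, which I may assume. Step~2 is to exhibit a time $T$ no later than the response of $\pi_1$ and a majority $\mathcal S_m\subseteq\mathcal S$ whose servers are all active at $T$ and all hold a local tag $\ge tag(\pi_1)$. Granting Step~2, $\pi_2$ is invoked after the response of $\pi_1$, hence after $T$, so Lemma~\ref{lem:property} gives $tag(\pi_2)\ge\min_{s\in\mathcal S_m}t_s\ge tag(\pi_1)$, strict when $\pi_2$ is a write. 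Conditions $(i)$--$(iv)$ then follow by routine arguments: $(ii)$ is the statement applied to the pair in question; $(iii)$ combines the statement for same-writer writes with cross-writer tag distinctness; $(iv)$ uses that a read with tag $t_r$ is placed right after the write with tag $t_r$, and that a returned tag $\ne t_0$ always belongs to a write in $\Pi$.

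For Step~2 I would work with the \PutData~phase that $\pi_1$ runs (both writes and reads run one, propagating $(tag(\pi_1),v_1)$ to all servers). Let $T_1$ be its invocation and $T_{\mathrm{done}}$ the response of $\pi_1$. Under $N1$ with $\alpha>3/4$ there is a set $\mathcal S_\alpha$ with $|\mathcal S_\alpha|=\lceil\alpha n\rceil$ whose members all effectively consume their \PutData~messages and stay active throughout $[T_1,T_2]$, where $T_2$ is the earliest time at which every server in $\mathcal S_\alpha$ has finished that effective consumption. On $[T_1,T_2]$ a server's local tag is non-decreasing (only a repair lowers a tag, and a repairing server is not active), and effective consumption of a \PutDataTag~message carrying $tag(\pi_1)$ leaves the tag $\ge tag(\pi_1)$; hence at $T_2$ every $s\in\mathcal S_\alpha$ is active with tag $\ge tag(\pi_1)$. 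If $T_2\le T_{\mathrm{done}}$, take $T=T_2$ and $\mathcal S_m=\mathcal S_\alpha$, a majority as $|\mathcal S_\alpha|>3n/4$. Otherwise $T_{\mathrm{done}}<T_2$: fewer than $n/4$ servers lie outside $\mathcal S_\alpha$, while $\pi_1$ collected $\lceil(3n+1)/4\rceil$ \PutData~acks by $T_{\mathrm{done}}$, so more than $n/2$ of those acks came from $\mathcal S_\alpha$; that subset $\mathcal S_m$ is a majority, is active at $T_{\mathrm{done}}\in[T_1,T_2]$, and having acked holds tag $\ge tag(\pi_1)$ at $T_{\mathrm{done}}$, so take $T=T_{\mathrm{done}}$. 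In both cases $T\le T_{\mathrm{done}}$.

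The reason this is harder than ABD — and the part I expect to be the main obstacle — is that a repair overlapping $\pi_1$ or occurring between $\pi_1$ and $\pi_2$ can reset a server to a tag older than $tag(\pi_1)$, so the majority witnessing Step~2 need not persist; this is exactly what Lemma~\ref{lem:property} controls, through an induction over the operations initiated after a reference time and the fact that a completed repair is a \GetData-style quorum read that cannot restore anything below the minimum tag over an earlier active majority. Within the proof of Theorem~\ref{thm:atomicity} itself, the remaining delicate point is the Step~2 case split above, and verifying that the slack in $N1$ with $\alpha>3/4$ forces every $\lceil(3n+1)/4\rceil$-sized ack quorum and every \GetData~or repair majority to overlap in a set that is simultaneously active and up to date.
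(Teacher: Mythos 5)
Your proof is correct and takes essentially the same route as the paper's: both reduce atomicity to Lemma~\ref{lem:property} plus exhibiting, before the response of the earlier operation, a time $T$ and an active majority whose local tags all dominate its tag, obtained by intersecting the $N1$-stable set $\mathcal{S}_\alpha$ for the {\PutData} group-send with the $\lceil(3n+1)/4\rceil$-ack quorum. The only cosmetic difference is your case split on $T_2$ versus $T_{\mathrm{done}}$; the paper avoids it by letting $T$ be the earliest time at which all servers in that intersection finish effective consumption, which is automatically $\le\min(T_2,T_{\mathrm{done}})$.
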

		
		We note that, though Lemma \ref{lem:property} gives a result about completed operations, condition $N1$ is not a prerequisite for the result in Lemma \ref{lem:property}. In other words, the result in Lemma \ref{lem:property} holds for any completed operation, even if condition $N1$ is violated. As we will see, this is an important fact that we will use to establish atomicity of $\RADONS$ for any execution.

\section{Algorithm ${\RADONC}$ } \label{sec:radonc}
			\vspace{-0.1in}
			In this section, we present the erasure-code based $\RADONC$~algorithm for implementing atomic memory service,  and performing repair of crashed nodes. The algorithm uses $[n, k]$ MDS codes for storage. Liveness and atomicity are guaranteed under the following assumptions: $1)$ the $N1$ network stability condition with $\alpha \geq \frac{3n+k}{4n}$,  $2)$ the number of write operations concurrent with a read or repair operation is at most $\delta$. The precise definition of concurrency depends on the algorithm itself, and appears later in this section. The $\RADONC$~algorithm has significantly reduced storage and communication cost requirements than $\RADONL$,  when $\delta$ is limited.
			
			\begin{algorithm*}[!ht]
				\begin{algorithmic}
					\begin{multicols}{2}{\footnotesize
							\Part{write($v$)}\EndPart
							\Part{\underline{\GetTag}} {
								\State  $group\act{-}send(\QueryTag)$
								\State  Await responses from majority
								\State  Select the max tag  $t^*$
							}\EndPart
							\Statex
							\Part{\underline{\PutData}} {
								\State $t_w = (t^{*}.z + 1, w)$.  
								\State $\Coded = [(t_w, c_1), \ldots, (t_w, c_n)]$, $c_i = \Phi_i(v)$
								\State $group\act{-}send(\CodedElementTag, \Coded)$
								\State Terminate after $\left\lceil \frac{3n + k}{4}\right\rceil$ acks
							}	\EndPart
						
							\Statex
							\Part{read}\EndPart
							\Part{\underline{\GetData}} {
								\State  $group\act{-}send(\QueryList)$
								\State  Wait for $\left\lceil \frac{n+k}{2}\right\rceil$ $Lists$ 
								\State  Select the max tag, $t_r$, whose corresponding value, $v_r$, is decodable using the $Lists$.
							}\EndPart	
							\Statex
							\Part{\underline{\PutData}} {
								\State $\Coded = [(t_r, c_1), \ldots, (t_r, c_n)]$, $c_i = \Phi_i(v_r)$
								\State $group\act{-}send(\CodedElementTag, \Coded)$
								\State Wait for $\left\lceil \frac{3n + k}{4}\right\rceil$ acks
								\State Return $v_r$
							}	\EndPart

							\Statex
							\Part{Server $s_i \in \mathcal{S}$}\EndPart
							\Part {\underline{$State~Variables$}}{ 										
								\Statex $status \in \{active, repair\}$, initially $active$
								\Statex $List \subseteq  \mathcal{T} \times \mathcal{C}_s$, initially   $\{(t_0, \Phi_i(v_0))\}$
								
							}\EndPart
							\Statex
							\Part {\underline{\GetTagResp,recv $\QueryTag$ from writer $w$}} {
								\If{ $status = active$ }
								\State $t^* = \max_{(t,c) \in List}t$
								\State Send $t^*$ to $w$
								\Statex \EndIf
							}\EndPart
							\Part {\underline{\GetDataResp, recv $\QueryList$ from reader $r$}} {
								\If{ $status = active$ }
								\State Send  $List$ to $r$
								\EndIf
							}\EndPart
							\Part{ \underline{\PutDataResp, recv $\CodedElementTag, (t, c_i)$ from $p$ }}{
								\If{$status = active$}
								\State $List \leftarrow List \cup \{ (t, c_i)  \}$ 
								\If{ $|List| > \delta + 1$ } 
								\State  Retain the (tag, coded-element) pairs for the $\delta +1 $ highest tags in $List$, and delete the rest.
								\EndIf 
								\State  Send ack to $p$.
								\EndIf
								
							}\EndPart

							\Part{ \underline{\InitRep} }{
								\State $status \leftarrow repair$
								\State $group\act{-}send(\RepairList)$
								\State Wait for $\left\lceil \frac{n+k}{2}\right\rceil$ $Lists$ 
								\State Find  (tag, value) pairs decodable from  $Lists$.
								\State  Restore local $List$ via re-encoding and retaining the (tag, coded-element) pairs corresponding to at most $\delta +1 $ highest tags, from the above pairs
								\State $status \leftarrow active$ 
							}\EndPart

							\Part{ \underline{\InitRepResp, recv $\RepairList$ from server $s'$}}{
								\If{$status=active$} 
								\State  Send $List$ to $s'$
								\EndIf
							}\EndPart
						}\end{multicols}
					\end{algorithmic}	
					\caption{The protocols for write, reader, and any server $s_i \in {\mathcal S}$ in  
						$\RADONC$.}\label{fig:radonc-server}
				\end{algorithm*}
				
				The algorithm (see Fig.~\ref{fig:radonc-server}) is a natural generalization of the $\RADONL$ algorithm accounting for the fact that we use MDS codes. The write operation has two phases, where the first phase finds the maximum tag in the system based on majority responses. During the second phase, the writer computes the coded elements for each of the $n$ servers and uses the group-send operation to disperse them. The $group\act{-}send$ operation here uses a vector of length $n$, where the $i^{\text{th}}$ element denotes the message for the $i^{\text{th}}$ server, $1 \leq i \leq n$.  Each server keeps a $List$ of up to $(\delta + 1)$  (tag, coded-element) pairs. Every time a (tag, coded-element) message arrives from a writer, the pair gets added to the $List$, which is then pruned to at most $(\delta+1)$ pairs, corresponding to the highest tags. The writer terminates after getting acks from $\left\lceil \frac{3n+k}{4} \right\rceil$ servers.
				
				During a read operation, the reader queries all servers for their entire local $Lists$, and awaits responses from $\left\lceil \frac{n+k}{2} \right\rceil$ servers. Once the reader receives $Lists$ from $\left\lceil \frac{n+k}{2} \right\rceil$ servers,  it selects the highest tag $t_r$ whose corresponding value $v_r$ can be decoded using the using the coded elements in the lists. The read operation completes following a write-back of $(t_r, v_r)$ using the {\PutData} phase.

				The repair operation is very similar to the first phase of the read operation, during which a server collects lists from $\left\lceil \frac{n+k}{2} \right\rceil$ servers. But this time, the server figures out the set of all the possible tags that can be decoded from among the $Lists$, and prunes the set to the highest $(\delta + 1)$ tags. The repaired $List$ then consists of (tag, coded-element) pairs corresponding these (at most) $(\delta + 1)$ tags. Assuming repair of server $i$, the creation of a coded-element corresponding to a value $v$ involves first decoding the value $v$, and then computing $\Phi_i(v)$ (referred to as re-encoding in Fig. \ref{fig:radonc-server}).

				\subsection{Analysis of $\RADONC$}
				
				Throughout this section, we assume network stability condition $N1$ with $\alpha \geq \frac{3n+k}{4n}$. Tags for completed read and write operations are defined in the same manner as we did for $\RADONL$; we avoid repeating them here.  We first discuss liveness properties of $\RADONC$. Let us first consider liveness of repair operations. Towards this, note from the algorithm in Fig. \ref{fig:radonc-server} that a repair operation never gets stuck even if it does not find any set of $k$  $Lists$ among the responses, all of which have a common tag. In such a case, the algorithm allows the possibility that the repaired $List$ is simply empty, at the point of execution when the server re-enters the active state. In other words, liveness of a repair operation is trivially proved, i.e., a server in a repair state always eventually reenters the active state, as long as it does not experience a crash during the repair operation. The triviality of liveness of repair operations, observed above, does not extend to read operations. For a read operation to complete the {\GetData} phase, it must be able to find a set of $k$ $Lists$ among the responses all of which contain coded-elements corresponding to a common tag; otherwise a read operation gets stuck. The discussion above motivates the following definitions of valid read and valid repair operations. 
				
				\begin{definition}[Valid Read and Repair Operations]
				A read  operation will be called as a valid read if the associated reader remains alive at least until the reception of the $\left\lceil \frac{n+k}{2} \right\rceil$ responses during the {\GetData} phase. Similarly, a repair operation will be called a valid repair if the associated server does not experience a further crash event during the repair operation.
				\end{definition}

				\begin{definition}[Writes Concurrent with a Valid Read (Repair)] \label{defn:concurrent}
					Consider a valid read (repair) operation $\pi$. Let $T_1$ denote the point of initiation of $\pi$. For a valid read, let $T_2$ denote the earliest point of time during the execution when the associated reader receives all the 
					$\left\lceil \frac{n+k}{2} \right\rceil$ responses. For a valid repair,  let $T_2$ denote the point of time during the execution when the repair completes, and takes the associated server back to the active state. Consider the set $\Sigma = \{ \sigma: \sigma$ is any write operation that completes before $\pi \text{ is initiated} \}$, and let $\sigma^* = \arg\max_{\sigma \in \Sigma}tag(\sigma)$. Next, consider the set $\Lambda = \{\lambda:  \lambda$  is any write operation that starts before $T_2 \text{ such that } tag(\lambda) > tag(\sigma^*)\}$. We define the number of writes concurrent with the valid read (repair) operation $\pi$ to be the cardinality of the set $\Lambda$.
				\end{definition}

				The above definition captures all the write operations that overlap with the read, until the time the reader has all data needed to attempt decoding a value. However, we ignore those write operations that might have started in the past, and never completed yet, if their tags are less than that of any write that completed before the read started. This allows us to ignore write operations due to failed writers, while counting concurrency, as long as the failed writes are followed by a successful write that completed before the read started. 
				
				The following lemma could be considered as the analogue of Lemma \ref{lem:property} for $\RADONC$. The first part of the  lemma shows that under $N1$ with $\alpha \geq \frac{3n+k}{4n}$, the repaired $List$ is never empty; there is always at least one (tag, coded-element) pair in the repaired $List$. Parts $2$ and $3$ are used to prove liveness and atomicity of client operations.

				\begin{lemma}\label{lem:property_randonc}
					Consider any well-formed execution $\beta$ of ${\RADONC}$ operating under  the network stability condition $N1$ with $\alpha \geq \frac{3n+k}{4n}$. Further assume that the number of  writes concurrent with any valid read or repair operation is at most $\delta$. For any operation $\pi$, consider the set $\Sigma = \{\sigma : \sigma~\text{is a read }$ $ \text{or a write in $\beta$ }$~$\text{that completes before}~\pi$ $\text{begins}\}$, and also let $\sigma^{*} = \arg\max\limits_{\sigma \in \Sigma}{\optag{\sigma}}$.  Then, the following statements hold: 
					\begin{itemize}
						\item If $\pi$ denotes a completed repair operation on a server $s \in \mathcal{S}$, then the repaired $List$ of server $s$ due to $\pi$ contains the pair $(tag(\sigma^*), c_s^*)$.
						\item If $\pi$ denotes a read operation associated with a non-faulty reader $r$, and further, if $\mathcal{S}_1$ denotes the set of  $\left\lceil\frac{n+k}{2}\right\rceil$ servers whose responses, say $\{L_{\pi}(s), s \in \mathcal{S}_1\}$, are used by $r$ to attempt decoding of a value in the {\GetData} phase, then there exists $\mathcal{S}_2 \subseteq \mathcal{S}_{1}$, $|\mathcal{S}_{2}| = k $, such that $\forall s \in \mathcal{S}_2,  (tag(\sigma^{*}), c_s^* ) \in L_{\pi}(s)$.
						\item If $\pi$ denotes a write operation associated with a non-faulty writer $w$, and further if $\mathcal{S}_1$ denotes the set of majority servers whose responses are used by $w$ to compute max-tag in the {\GetTag} phase, then there exists a server $s \in  \mathcal{S}_1$, whose response tag $t_s \geq tag(\sigma^*)$. 
					\end{itemize}
					Here, $c_s^*$ denotes the coded-element of server $s$ for value $v^*$, associated with $tag(\sigma^*)$.  
				\end{lemma}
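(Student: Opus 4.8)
The plan is to prove the three statements together by strong induction on the operations of $\beta$, ordered by the time at which their decisive query phase collects its responses --- the {\GetData} phase for a read, the {\GetTag} phase for a write, and the {\InitRep} query for a repair. Fix such an operation $\pi$ and let $\sigma^*, v^*, c_s^*$ be as in the statement, so that $\sigma^*$ is a completed read, a completed write, or the virtual initial write of $v_0$, and in every case its {\PutData} phase executes $group\act{-}send(\CodedElementTag, [(tag(\sigma^*), c_1^*), \dots, (tag(\sigma^*), c_n^*)])$ and terminates on $\left\lceil \frac{3n+k}{4} \right\rceil$ acknowledgments. Applying $N1$ with $\alpha \ge \frac{3n+k}{4n}$ to that $group\act{-}send$ produces a set $\mathcal{S}^* \subseteq \mathcal{S}$, $|\mathcal{S}^*| \ge \left\lceil \frac{3n+k}{4} \right\rceil$, each of whose members effectively consumes its message --- hence inserts $(tag(\sigma^*), c_s^*)$ into its $List$ --- and stays active throughout $\sigma^*$'s {\PutData} phase. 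This quorum is the object the rest of the proof tracks.

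The core step is a persistence claim: for every $s \in \mathcal{S}^*$ that later responds to $\pi$'s decisive query, the pair $(tag(\sigma^*), c_s^*)$ is still in $List_s$ at the instant it responds. The pair can disappear only by pruning or by a crash. For pruning, note that $List_s$ retains the $\delta + 1$ highest tags, so the pair is dropped only if at some moment $List_s$ holds $\delta + 1$ pairs with tag strictly above $tag(\sigma^*)$; each such tag is that of a write $\lambda$ that $s$ heard about --- directly, or through a reader's write-back --- before it answered $\pi$, hence before the time $T_2$ of Definition \ref{defn:concurrent}, and since $tag(\sigma^*)$ is at least the largest tag of any write completed before $\pi$, every such $\lambda$ is a write concurrent with $\pi$; by the concurrency bound there are at most $\delta$ of them, so pruning never removes the pair. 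For a crash, if $s$ crashes and completes a repair $\rho$ before answering $\pi$, then $\rho$ precedes $\pi$ in the induction order, and invoking the induction hypothesis for $\rho$ --- stated strongly enough that the repaired $List$ contains the coded element for \emph{every} read/write tag among the top $\delta + 1$ tags of reads/writes starting before $\rho$ finishes, not merely for $\rho$'s own maximal $\sigma^*$ --- shows that $\rho$ rebuilds $List_s$ with $(tag(\sigma^*), c_s^*)$ in it, after which the pruning argument again protects it. Iterating over the finitely many crashes of $s$ in the relevant interval yields persistence.

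Given persistence, the three conclusions are quorum intersections. For a read or a valid repair $\pi$, the decisive query is answered by a set $\mathcal{S}_1$ of $\left\lceil \frac{n+k}{2} \right\rceil$ servers, and $|\mathcal{S}^* \cap \mathcal{S}_1| \ge \left\lceil \frac{3n+k}{4} \right\rceil + \left\lceil \frac{n+k}{2} \right\rceil - n \ge k$ (using $n \ge k$); by persistence each of these at-least-$k$ servers sent a list containing $(tag(\sigma^*), c_s^*)$, which gives the set $\mathcal{S}_2$ required in the second statement, and for the first statement the repairing server can therefore decode $v^*$ and, since by the pruning count at most $\delta$ decodable tags exceed $tag(\sigma^*)$, the re-encoded pair $(tag(\sigma^*), c_s^*)$ survives the ``retain the top $\delta + 1$ tags'' step of {\InitRep}. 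For a write $\pi$, the {\GetTag} quorum $\mathcal{S}_1$ is a majority, $|\mathcal{S}^* \cap \mathcal{S}_1| \ge \left\lceil \frac{3n+k}{4} \right\rceil + \left\lceil \frac{n+1}{2} \right\rceil - n \ge 1$, and any server in the intersection reports a maximum tag $\ge tag(\sigma^*)$, either because it still holds the pair or because it pruned it in favor of a strictly larger tag; this is the third statement, and it needs only the crash half of persistence, so the lack of a concurrency bound on writes is harmless.

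The main obstacle is the crash-and-repair sub-case of persistence: a repair restores the tags currently decodable from its responses, not by name the pair $(tag(\sigma^*), c_s^*)$, so one must (a) strengthen the induction hypothesis so that the first statement talks about all ``near-top'' tags rather than just the maximal one, and (b) use the concurrency bound to guarantee that $tag(\sigma^*)$ always lies within $\delta + 1$ of the top among the tags present at, or decodable by, the server in question, so that neither pruning nor repair ever evicts it. A secondary annoyance is the timeline bookkeeping --- confirming, for each responder $s$ and each of its repairs $\rho$ that matters, that $\sigma^*$ really does complete before $\rho$ begins and before $s$ answers $\pi$ --- which I would settle by applying the quorum-intersection argument at the appropriate time point, in the spirit of the ``good time $T$'' used for $\RADONL$ in Lemma \ref{lem:property}.
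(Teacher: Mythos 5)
Your high-level plan (N1 gives a quorum that stores $(tag(\sigma^*), c_s^*)$; persistence survives pruning by the concurrency bound and survives crashes by induction on the chain of repairs; quorum intersection delivers at least $k$ such servers to $\pi$'s decisive query) is essentially the same architecture the paper uses, and your quorum arithmetic is fine. However there are two genuine gaps.

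First, your $\mathcal{S}^*$ is not the right quorum. $N1$ applied to $\sigma^*$'s $\PutData$ group-send only guarantees that the $\left\lceil\frac{3n+k}{4}\right\rceil$ servers in $\mathcal{S}^*$ stay active until \emph{they} have each effectively consumed their message; it does not guarantee that any of them has done so \emph{before $\sigma^*$ terminates}, and hence not before $\pi$ begins. The writer may terminate on acks from servers outside $\mathcal{S}^*$, and a server $s \in \mathcal{S}^*$ may receive $\pi$'s $\QueryList$ or $\RepairList$ or $\QueryTag$ message \emph{before} $\sigma^*$'s $\PutDataTag$ message (the channels carry messages from different senders, and the model is asynchronous), in which case $s$'s reply simply does not contain the pair, and your persistence claim never triggers. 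The paper closes this by intersecting the $N1$-protected set with the set $\mathcal{S}_1$ of servers whose acks the writer of $\sigma^*$ actually used to terminate, obtaining $\mathcal{S}_{prop}=\mathcal{S}_\alpha\cap\mathcal{S}_1$ of size $\geq\left\lceil\frac{n+k}{2}\right\rceil$ and a time $T_{prop}$ at which \emph{all} of $\mathcal{S}_{prop}$ already hold the pair, with $T_{prop}$ provably before $\sigma^*$ returns and hence before $\pi$ begins; the quorum-intersection count with $\pi$'s responders then still yields $\geq k$, because $2\left\lceil\frac{n+k}{2}\right\rceil - n \geq k$.

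Second, the strengthened induction hypothesis you propose for the crash-and-repair sub-case does not hold as stated, and the reason is one you partially note yourself in the last paragraph and then wave away. A repair $\rho$ of $s$ can begin \emph{before $\sigma^*$ completes} (a server in the $N1$-protected set can crash as soon as it has sent its ack, and $\sigma^*$'s writer may not even have terminated yet), so $\sigma^*\notin\Sigma_\rho$ and the lemma applied to $\rho$ says nothing about $tag(\sigma^*)$. Your proposed strengthening, that the repaired $List$ contains the coded element for \emph{every} read/write tag among the top $\delta+1$ tags of reads/writes starting before $\rho$ finishes, is too strong: a write that started but never propagated to $k$ of $\rho$'s responders has a tag in that window yet produces nothing decodable, so the strengthened statement is simply false. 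The paper sidesteps this entirely: it does not invoke the lemma on intermediate repairs at all. Instead it forms the set $\widetilde{\Pi}$ of repairs starting in $(T_{prop},T_{end}(\pi))$, orders them by completion time, and proves by induction the narrow sub-claim that each one's repaired $List$ contains the specific pair $(tag(\sigma^*),c_s^*)$ -- the inductive hypothesis for the $j$-th repair only needs this sub-claim about the first $j-1$ repairs, never the lemma statement about any other tag, so the failed-write complication and the ``$\sigma^*$ may not precede $\rho$'' complication never arise. If you want your proof to go through, you should replace your strengthened global induction with this narrower per-quorum induction.
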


				\begin{theorem}[Liveness]  \label{thm:liveness_radonc}
					Let $\beta$ denote a well-formed execution of $\RADONC$, operating under the $N1$  network  stability condition with $\alpha \geq \frac{3n+k}{4n}$ and $\delta$ be the maximum number of write operations concurrent with any valid read or repair operation. Then every operation initiated by a non-faulty client completes.
				\end{theorem}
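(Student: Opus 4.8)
The plan is to treat write and read operations separately, reducing every phase of each to the single question of whether enough servers respond. The one tool I would set up first is the following consequence of $N1$ with $\alpha \ge \frac{3n+k}{4n}$: for every $group\act{-}send$ issued by a non-faulty process — and such a process never stalls while issuing it — there is a set $\mathcal{S}_\alpha \subseteq \mathcal{S}$ of $\lceil\alpha n\rceil \ge \lceil\frac{3n+k}{4}\rceil$ servers that effectively consume their messages while active. Since effective consumption of each message type used in $\RADONC$ ($\QueryTag$, $\QueryList$, $\CodedElementTag$, $\RepairList$) includes computing and dispatching the reply, every server in $\mathcal{S}_\alpha$ sends back its response while active; as the links are reliable and the issuing client is non-faulty, all these responses are eventually delivered. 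Together with the elementary inequalities $\lceil\alpha n\rceil \ge \lceil\frac{3n+k}{4}\rceil \ge \lceil\frac{n+k}{2}\rceil \ge \lfloor n/2\rfloor + 1$, valid because $n \ge k \ge 1$, this shows that every ``wait for $X$ responses'' step in the protocol — with $X$ being a majority, $\lceil\frac{n+k}{2}\rceil$, or $\lceil\frac{3n+k}{4}\rceil$ — eventually unblocks.

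For a write by a non-faulty writer $w$: the {\GetTag} phase issues $group\act{-}send(\QueryTag)$ and waits for a majority of tag replies, which arrive by the setup above, after which $w$ forms $t_w$ and the coded elements purely locally; the {\PutData} phase issues $group\act{-}send(\CodedElementTag,\cdot)$ and waits for $\lceil\frac{3n+k}{4}\rceil$ acks, and since effective consumption of a $\CodedElementTag$ message at an active server includes the $List$ update, the pruning, and the ack, at least $\lceil\alpha n\rceil \ge \lceil\frac{3n+k}{4}\rceil$ acks are produced and delivered. Hence $w$ terminates the write.

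For a read by a non-faulty reader $r$: in the {\GetData} phase $r$ issues $group\act{-}send(\QueryList)$ and waits for $\lceil\frac{n+k}{2}\rceil$ $Lists$, which arrive since $\lceil\alpha n\rceil \ge \lceil\frac{n+k}{2}\rceil$. The delicate point — and the only place the concurrency bound $\delta$ enters — is that $r$ must actually be able to decode some value from the received $Lists$ in order to choose $t_r$. Here I would invoke Lemma~\ref{lem:property_randonc}: a non-faulty reader is in particular a valid read, and the theorem's hypotheses are exactly those of the lemma, so its second bullet applies to $r$ and yields, among the $\lceil\frac{n+k}{2}\rceil$ responding servers $\mathcal{S}_1$, a subset $\mathcal{S}_2 \subseteq \mathcal{S}_1$ with $|\mathcal{S}_2| = k$ and $(\optag{\sigma^*}, c_s^*) \in L_\pi(s)$ for every $s \in \mathcal{S}_2$. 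By the MDS property the $k$ coded elements $\{c_s^* : s \in \mathcal{S}_2\}$ reconstruct the value $v^*$ associated with $\optag{\sigma^*}$, so the set of tags decodable from the received $Lists$ is nonempty and $r$ selects its maximum $t_r$, with value $v_r$. The {\PutData} phase of the read is identical in form to that of a write and completes for the same reason, after which $r$ returns $v_r$. Combining the two cases proves the theorem.

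I expect the main obstacle to be concentrated entirely in the decodability of the read's {\GetData} phase; every other step is bookkeeping around $N1$ and the integer inequalities above. Fortunately that obstacle has already been discharged in Lemma~\ref{lem:property_randonc}, so the proof should amount to carefully assembling these pieces. For completeness I would also note that a repair suffering no crash during its execution (a valid repair) completes similarly: its \InitRep{} action waits only for $\lceil\frac{n+k}{2}\rceil$ $Lists$, which $N1$ supplies, and all remaining steps — decoding whatever subset of values it can (possibly none), re-encoding, and pruning to the $\delta+1$ highest tags — are local.
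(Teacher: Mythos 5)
Your proof is correct and takes essentially the same route as the paper: reduce every ``wait for $X$ responses'' step to the $N1$ guarantee plus reliable links, then invoke Part~2 of Lemma~\ref{lem:property_randonc} for decodability in the read's {\GetData} phase. Your write-up simply fills in the counting and delivery details that the paper's three-sentence proof leaves implicit.
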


				\begin{theorem}[Atomicity]  \label{thm:atomicity_radonc}
					Any execution of $\RADONC$, operating under condition $N1$ with $\alpha \geq \frac{3n+k}{4n}$, is atomic, if the maximum number of write operations concurrent with a valid read or repair operation is $\delta$.
				\end{theorem}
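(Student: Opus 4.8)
The plan is to invoke the standard sufficient condition for atomicity --- the restatement of Lemma~13.16 of~\cite{Lynch1996} recorded in~\cite{RADON:arxiv}, the same tool behind Theorem~\ref{thm:atomicity} --- so that it suffices to exhibit a partial order $\prec$ on the completed read and write operations of the execution with the four properties $(P1)$ every operation has only finitely many $\prec$-predecessors, $(P2)$ if $\pi_1$ returns before $\pi_2$ is invoked then $\pi_2 \not\prec \pi_1$, $(P3)$ each write is $\prec$-comparable with every operation, and $(P4)$ each read returns the value written by its $\prec$-greatest preceding write, or $v_0$ if there is none. I would order operations by tags exactly as for $\RADONL$: set $\pi_1 \prec \pi_2$ iff $tag(\pi_1) < tag(\pi_2)$, or $tag(\pi_1) = tag(\pi_2)$ with $\pi_1$ a write and $\pi_2$ a read; a routine check confirms $\prec$ is a strict partial order. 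Throughout, Lemma~\ref{lem:property_randonc} --- which, under $N1$ with $\alpha \geq \frac{3n+k}{4n}$ and the $\delta$-concurrency hypothesis of the theorem, already encodes the non-propagation of stale tags through chains of repairs --- plays the role that Lemma~\ref{lem:property} plays for $\RADONL$ and is the workhorse of the whole argument.

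The core step is $P2$. Suppose a completed read or write $\pi_1$ returns before a completed read or write $\pi_2$ is invoked. Then $\pi_1$ belongs to the set $\Sigma$ of Lemma~\ref{lem:property_randonc} instantiated at $\pi_2$, so with $\sigma^{*} = \arg\max_{\sigma \in \Sigma} tag(\sigma)$ we have $tag(\sigma^{*}) \geq tag(\pi_1)$. If $\pi_2$ is a read, it is valid since it completes, and part~2 of the lemma furnishes $k$ servers in its \GetData\ quorum holding the pair $(tag(\sigma^{*}), c_s^{*})$; these are $k$ coded elements of the value $v^{*}$ associated with $tag(\sigma^{*})$, so $tag(\sigma^{*})$ is decodable from the quorum's $Lists$, whence the maximum decodable tag picked by the reader, namely $tag(\pi_2)$, satisfies $tag(\pi_2) \geq tag(\sigma^{*}) \geq tag(\pi_1)$, ruling out $\pi_2 \prec \pi_1$. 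If $\pi_2$ is a write, part~3 of the lemma gives a server in its \GetTag\ majority reporting a tag $\geq tag(\sigma^{*})$, so the fresh tag $t_w = tag(\pi_2)$ obeys $tag(\pi_2) > tag(\sigma^{*}) \geq tag(\pi_1)$, again ruling out $\pi_2 \prec \pi_1$ (indeed $\pi_1 \prec \pi_2$). It is only through this invocation of Lemma~\ref{lem:property_randonc} that $N1$ and the concurrency bound enter.

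For $P3$ and $P4$ I would establish a consistency invariant: for every tag $t$ ever present in a server's $List$ or in a \CodedElementTag\ message, either $t = t_0$ with associated value $v_0$, or there is a unique write $\omega_t$ that used tag $t$, and every coded element stored anywhere in the system against tag $t$ equals $\Phi_i(v_{\omega_t})$ at its host server $i$. This is proved by induction on the execution: a write's \PutData\ installs $[(t_w,c_1),\dots,(t_w,c_n)] = \Phi(v)$ from a single value $v$; a reader's write-back and a repairing server's re-encoding each start from a value decoded from $k$ coded elements sharing a common tag $t$, which by the inductive hypothesis and the MDS property equals $v_{\omega_t}$ (or $v_0$), so the re-encoded elements remain consistent; uniqueness of $\omega_t$ is immediate since distinct writers use distinct tag-IDs while sequential writes by one writer have strictly increasing tags by the $P2$ estimate above. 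Hence distinct writes carry distinct tags --- so the writes are totally ordered by $\prec$ and each write is $\prec$-comparable with every read, giving $P3$ --- and a read returning $v_r$ with tag $t_r$ necessarily has $v_r = v_{\omega_{t_r}}$ (or $v_0$ when $t_r = t_0$), while the definition of $\prec$ makes $\omega_{t_r}$ exactly the $\prec$-greatest write preceding that read, giving $P4$.

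Finally $P1$ holds for the same reason as in Theorem~\ref{thm:atomicity}: $\phi \prec \pi$ forces $tag(\phi) \leq tag(\pi)$, tags increase monotonically along real time (within each writer by $P2$), and well-formedness then leaves only finitely many read and write operations with tag at most $tag(\pi)$. Applying the atomicity lemma to $\prec$ completes the proof. Beyond the granted Lemma~\ref{lem:property_randonc}, I expect the one genuinely delicate point to be the consistency invariant, and within it the repair case: a repair may rebuild its $List$ from only a subset of the decodable tags and re-encode each surviving value, so one must verify it cannot plant a tag/value pair at variance with the rest of the system. Once that is settled, $P2$ through $P4$ reduce to the case analysis above.
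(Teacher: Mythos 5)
Your proof follows the same architecture as the paper's: invoke the restated Lemma~13.16 criterion (Lemma~\ref{lem:atom}), define $\prec$ by tags exactly as for $\RADONL$, and discharge the ordering-consistency property by applying Parts~2 and~3 of Lemma~\ref{lem:property_randonc} to the operation that starts later, using the earlier operation's membership in $\Sigma$ to get $tag(\sigma^{*}) \geq tag(\pi_1)$. This is precisely the paper's argument. The one place you go beyond the paper is in $P3$/$P4$, where you state and prove (by induction on the execution) a value-consistency invariant ensuring that re-encoded coded elements produced by read write-backs and by repairs always agree with the unique write that minted the tag; the paper treats this as routine and simply says ``similar to $\RADONL$,'' so your extra detail is a legitimate tightening rather than a divergence. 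One small bookkeeping note: you list a fourth property (finitely many $\prec$-predecessors), which is part of Lynch's original Lemma~13.16 but is dropped from the paper's restatement in Lemma~\ref{lem:atom}; including and verifying it is harmless but not required by the version of the criterion the paper actually cites.
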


\vspace{-0.1in}		
		
\section{The ${\RADONS}$ Algorithm} \label{sec:radons}
	    In this section, we  present the  $\RADONS$~algorithm having the property that every execution is atomic. 
		Liveness is guaranteed under the slightly stronger network stability condition $N2$ with $\alpha > \frac{3}{4}$. In comparison wtih $\RADONL$, the algorithm has extra phases for both read and write operations, in order to guarantee safety of every execution. 
		
		\begin{algorithm*}[!ht]
			\begin{algorithmic}
				\begin{multicols}{2}{\footnotesize
						\Part{write(v)} { }\EndPart
						\Part{\underline{\GetTag}} {
							\State  $group\act{-}send(\QueryTag)$
							\State  Await responses from majority 
							\State  Select the max tag  $t^*$
						}\EndPart
						
						\Part{\underline{\PutData}} {
							\State $t_w = (t^{*}.z + 1, w)$.  
							\State $group\act{-}send( (\PutDataTag, (t_w, v)))$
							\State Wait for $\left\lceil \frac{3n + 1}{4}\right\rceil$ acks (say from $\mathcal{S}_{\alpha}$)
						}	\EndPart
						
						\Part{\underline{\ConfirmData}} {
							\State $group\act{-}send((\ConfirmDataTag, t_w))$
							\State Terminate after acks from majority from among servers in $\mathcal{S}_{\alpha}$
						}	\EndPart				
						
						\Statex
						\Part{read} { }\EndPart
						\Part{\underline{\GetData}} {
							\State  $group\act{-}send(\QueryTagData)$
							\State  Await responses from  majority
							\State  Select  $(t_r, v_r)$, with  max tag.
						}\EndPart	
						
						\Part{\underline{{\PutData} }}{
							\State $group\act{-}send((\PutDataTag, (t_r, v_r)))$
							\State Wait for $\left\lceil \frac{3n + 1}{4} \right\rceil$ acks (say from $\mathcal{S}_{\alpha}$)
						}\EndPart
						
						\Part{\underline{\ConfirmData}} {
							\State $group\act{-}send((\ConfirmDataTag, t_r))$
							\State Await acks from a majority of servers in $\mathcal{S}_{\alpha}$
							\State Return $v_r$
						}	\EndPart
						
						\Statex
						\Part{Server $s \in \mathcal{S}$}\EndPart
						\Part {\underline{$State~Variables$}}{ 
							\Statex $(t_{loc}, v_{loc}) \in \mathcal{T} \times {\mathcal V}$, initially   $(t_0, v_0)$
							\Statex $status \in \{active, repair\}$, initially $active$
							\Statex $Seen \subseteq {\mathcal T} \times \{{\mathcal W} \cup {\mathcal R}\}$, initially empty
							
						}\EndPart
						
						\Part {\underline{\GetTagResp, recv $\QueryTag$ from writer $w$}} {
							\If{ $status = active$ }
							\State Send  $t_{loc}$ to $w$
						 \EndIf
						}\EndPart
						
						\Part {\underline{\GetDataResp, recv $\QueryTagData$ from reader $r$}} {
							\If{ $status = active$ }
							\State Send  $(t_{loc}, v_{loc})$ to $r$
							\EndIf
						}\EndPart
						\Part{ \underline{\PutDataResp,~ recv $(\PutDataTag, (t, v))$ from $c$ }}{
							\If{$status = active$} 
							\If{ $t > t_{loc}$ } 
							\State  $(t_{loc}, v_{loc}) \leftarrow (t, v)$
							
							\EndIf 
							\State $Seen \leftarrow Seen \cup \{ (t, c) \}$
							\State  Send ack to $c$.
							\EndIf
							
						}\EndPart
						\Statex
						\Part{ \underline{\ConfirmDataResp, {\bf recv} $(\ConfirmDataTag, t)$ {\bf from }$c$}}
						 {
							\If{$status = active$} 
							\If{ $(t, c) \in Seen$ } 
							\State Remove $(t, c)$ from $Seen$ 
							\State Send  ack to client $c$.
							\EndIf 
							\EndIf
							
						}\EndPart
													\Statex
						\Part{ \underline{\InitRep} }{
							\State $status \leftarrow repair$
							\State $(t_{loc}, v_{loc}) \leftarrow (t_0, v_0)$
							\State $Seen \leftarrow \emptyset$ 
							\State $group\act{-}send(\RepairTagData)$
							\State Await responses from majority.
							\State Select $(t_{rep}, v_{rep})$, with max tag
							\State $(t_{loc}, v_{loc}) \leftarrow (t_{rep}, v_{rep})$
							\State $status \leftarrow active$ 
						}\EndPart
						\Statex
						\Part{ \underline{\InitRepResp,~recv $\RepairTagData$ from $s'$}}{
							\If{$status=active$} 
							\State  Send $(t_{loc}, v_{loc})$ to $s'$
							\EndIf
						}\EndPart
					}\end{multicols}
				\end{algorithmic}	
				\caption{The protocols for writer, reader, and any  server $s \in {\mathcal S}$ in  
					$\RADONS$.}\label{fig:radons-server}
			\end{algorithm*}				
			The write operation has three phases (see Fig. \ref{fig:radons-server}). The first two phases are identical to those of $\RADONL$ during which the writer queries for the local tags, and then sends out the new (tag, value) pair, respectively. In the third phase, called {\ConfirmData}, the writer ensures the presence of at least a majority of servers, which the writer knows for sure that received its data during the second phase, {\PutData}. In order to facilitate the  {\ConfirmData}~phase, the servers maintain a $Seen$ variable. Any time the server receives a value from a writer, the server adds the corresponding (tag, writer ID) pair to the $Seen$ list. Next, during the {\ConfirmDataResp}~phase, the server responds to the writer only if this (tag, writer ID) pair is part of the $Seen$ variable. The idea is that if the server experiences a crash and a successful repair operation in between the {\PutData}~ and {\ConfirmData}~ phases, the server no longer has the (tag, writer ID) pair in its $Seen$ variable, and hence does not respond to the {\ConfirmData}~ phase. This is because, a crash removes all state variables, including $Seen$, and the repair algorithm (see Fig. \ref{fig:radons-server}) simply restores the $Seen$ variable to its default value, the empty set. Further, by ensuring that the writer expects acks from among a majority of servers in  {\ConfirmData}, from among the $\frac{3n+1}{4}$ servers whose acks were obtained during {\PutData}, we can guarantee that any execution is atomic.
			
			The read operation also has three phases, first two of which are identical to those of $\RADONL$, except for the use of the $Seen$ variable in the server during the {\PutData} phase. The third phase is the {\ConfirmData}~phase as in the write operation. The repair operation has one phase, and is nearly exactly identical to that of $\RADONL$. Note that the $Seen$ variable gets reset to its initial value during repair.
			
			\vspace{-0.1in}
			\subsection{Analysis of $\RADONS$} \label{sec:analysis_randons}	                             
			\vspace{-0.1in}	
			We overview the proofs of liveness and atomicity before formal claims. For liveness of writes, we assume $N2$ with $\alpha > \frac{3}{4}$, and argue the existence of a majority $\mathcal{S}_m$ of servers all of which remain active from the point of time at which the $group\act{-}send$ operation gets initiated in  the {\PutData} phase, till the point of time all the servers in $\mathcal{S}_m$ effectively consume requests for {\ConfirmData} from the writer.  In this case, write operation completes after receiving acks from servers in 
			$\mathcal{S}_m$ during the {\ConfirmData}~phase. The set $\mathcal{S}_m$ exists because, under $N2$ with $\alpha > \frac{3}{4}$,  a set $\mathcal{S}_{\alpha}$ of $\left\lceil \frac{3n + 1}{4} \right\rceil$ servers remain alive  from the start of the group-send, till the effective consumption of the acks by the writer in {\PutData} phase. Also, a second set $\mathcal{S}'_{\alpha}$ of $\left\lceil \frac{3n + 1}{4} \right\rceil$ servers remain active  from the start of the group-send in the {\ConfirmData} phase, till all servers in $\mathcal{S}_{\alpha}'$ complete the respective effective consumption from this group-send. We note that  $\mathcal{S}_{\alpha}' \cap \mathcal{S}_{\alpha}$ is at least a majority. We next use the observation that the $group\act{-}send$ operation in the {\ConfirmData} phase forms part of the effective consumption of the last of the acks in the {\PutData} phase. Using this, we argue that the servers in $\mathcal{S}_{\alpha}' \cap \mathcal{S}_{\alpha}$ remain active till they effectively consume message from $group\act{-}send$ operation of the {\ConfirmData} phase, and thus $\mathcal{S}_{\alpha}' \cap \mathcal{S}_{\alpha}$ is a candidate for $\mathcal{S}_m$. The liveness of read is similar to that of write, while liveness of repair is straightforward under $N2$ with $\alpha > \frac{3}{4}$.
			
			Towards proving atomicity of reads and writes, we first define tags for completed reads, writes and repair operations exactly in the same manner as we did in $\RADONL$.  Consider two completed write operations $\pi_1$ and $\pi_2$ such that $\pi_2$ starts after the completion of $\pi_1$, and we need to show that $tag(\pi_2) > tag(\pi_1)$.  As in $\RADONL$, we do this in two parts: Lemma \ref{lem:property} holds as it is for $\RADONS$ as well. Recall that Lemma \ref{lem:property} essentially shows that if a majority of active nodes is locked-on to any particular tag, say $t'$, at a specific point of time $T$ during the execution of the algorithm, then any repair operation which begins after the time $T$ always restores the tag to one which is at least as high as $t'$.  The challenge now is to show the existence of these favorable points of time instants $T$ as needed in the assumption of the lemma. While in $\RADONL$, we used the $N1$ to argue this, in $\RADONS$, we do not use $N2$; instead we rely on the third {\ConfirmData} phase of the first write operation $\pi_1$.  
			
			\begin{theorem}[Liveness]  \label{thm:liveness_radons}
				Let $\beta$ denote a well-formed execution of $\RADONS$  under condition $N2$  with $\alpha > \frac{3}{4}$ . Then every operation initiated by a non-faulty client completes.
			\end{theorem}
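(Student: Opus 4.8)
The plan is to establish liveness operation-by-operation and, within each, phase-by-phase, reducing everything to the assertion that enough server responses eventually arrive. For the first phase of a write ({\GetTag}) or a read ({\GetData}), and for the single phase of a repair ({\InitRep}), the initiating process awaits responses from only a majority; since $N2$ implies $N1$ and $\alpha>\frac{3}{4}$, Statement $(a)$ supplies a set of $\left\lceil\alpha n\right\rceil>n/2$ servers that effectively consume the corresponding $\QueryTag$, $\QueryTagData$, or $\RepairTagData$ message, and Statement $(b)$ keeps those servers active until the initiator has consumed their replies, so a majority of replies is collected and the phase ends. The same argument, now using $\left\lceil\alpha n\right\rceil\ge\left\lceil\frac{3n+1}{4}\right\rceil$ (which holds because $\alpha>\frac{3}{4}$), shows the {\PutData} phases of writes and reads terminate. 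A repair has no {\ConfirmData} phase, so a valid repair completes immediately, and the only remaining case is the {\ConfirmData} phase of a write (and, identically, of a read).

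Fix a write by a non-faulty writer $w$. After {\PutData}, $w$ holds acks from the committed quorum $\mathcal{S}_\alpha$ of $\left\lceil\frac{3n+1}{4}\right\rceil$ servers, and {\ConfirmData} terminates once $w$ receives acks from a majority of $\mathcal{S}_\alpha$. A server $s$ acks $(\ConfirmDataTag,t_w)$ only if $(t_w,w)\in Seen$; $s$ inserted this pair exactly when it effectively consumed the $(\PutDataTag,(t_w,v))$ message, and it can lose the pair only through an intervening crash (a crash clears $Seen$, and {\InitRep} only resets it to $\emptyset$). So it suffices to produce a majority subset of $\mathcal{S}_\alpha$ whose members suffer no crash between consuming the {\PutData} message and consuming the {\ConfirmData} message, and remain active until $w$ consumes their {\ConfirmData} acks. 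I would obtain this by chaining two invocations of $N2$. Applying Statement $(b)$ to the {\PutData} $group\act{-}send$ yields a set $A_1$ of $\left\lceil\alpha n\right\rceil$ servers that add $(t_w,w)$ to $Seen$ and stay active from the invocation time $T_1$ of that $group\act{-}send$ until the instant $T_3$ at which $w$ has effectively consumed the acks of all of $A_1$; since $w$ is non-faulty and $\left\lceil\alpha n\right\rceil\ge\left\lceil\frac{3n+1}{4}\right\rceil$, by $T_3$ the writer has already gathered enough {\PutData} acks, and — using the observation that the {\ConfirmData} $group\act{-}send$ is part of the effective consumption of the last of those acks, as it needs no further external messages — that second $group\act{-}send$ is invoked at a time $T_1'\le T_3$, so every server of $A_1$ is still active at $T_1'$. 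Applying Statement $(b)$ again to the {\ConfirmData} $group\act{-}send$ yields a set $A_2$ of $\left\lceil\alpha n\right\rceil$ servers that effectively consume $(\ConfirmDataTag,t_w)$ and stay active until $w$ has consumed their acks. Each server in $\mathcal{S}_\alpha\cap A_1\cap A_2$ is then active continuously from $T_1$ through $w$'s consumption of the {\ConfirmData} acks, it processed {\PutData} before $T_1'$ (it is in the committed quorum, so $w$ received its {\PutData} ack before issuing the {\ConfirmData} $group\act{-}send$) and hence before processing {\ConfirmData}, so it still has $(t_w,w)\in Seen$ and acks, and it stays alive long enough for $w$ to hear it; thus $w$ collects acks from all of $\mathcal{S}_\alpha\cap A_1\cap A_2$.

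The read case is structurally identical, with phases {\GetData}, {\PutData}, {\ConfirmData}; the only change is that the $Seen$ entry created during {\PutData} is keyed by the reader's identifier, and the two-fold $N2$ argument carries over verbatim. The step I expect to be the crux is verifying that $\mathcal{S}_\alpha\cap A_1\cap A_2$ is a majority of the quorum $\mathcal{S}_\alpha$ itself — not merely a majority of all $n$ servers — when $\alpha$ exceeds $\frac34$ only slightly: a naive inclusion-exclusion bound gives roughly $n/4$ such servers, short of the needed $\approx\frac12\left\lceil\frac{3n+1}{4}\right\rceil$, so one must be careful about how the quorum $\mathcal{S}_\alpha$ that the writer actually commits to relates to the $N2$-stable set $A_1$ of the {\PutData} $group\act{-}send$ (ideally arranging $\mathcal{S}_\alpha$ to sit inside $A_1$, which collapses the bound to $|A_1\cap A_2|\ge 2\left\lceil\alpha n\right\rceil-n>n/2$ and settles the count). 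A secondary point to pin down rigorously is the message-processing order at each relevant server — that $(t_w,w)$ is genuinely present in $Seen$ when {\ConfirmData} is handled — which I would derive from the timing fact noted above rather than from any channel ordering assumption. Liveness of repair, finally, is immediate from $N1$, since a valid repair consists of the single majority-based {\InitRep} phase.
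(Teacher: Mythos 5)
Your proposal tracks the paper's proof closely: the first two phases are dispatched by response counting under $N2$; the nontrivial step is the {\ConfirmData} phase, which is handled by chaining two $N2$ invocations, linked by the observation that the {\ConfirmData} $group\act{-}send$ forms part of the effective consumption of the last {\PutData} ack, so that the $N2$-stable set of the {\PutData} phase is still alive when the {\ConfirmData} messages go out. This is exactly the paper's argument: it forms $\mathcal{S}_\gamma = \mathcal{S}_\alpha \cap \mathcal{S}'_{\alpha}$ (the two $N2$-stable sets, in the paper's notation), observes $|\mathcal{S}_\gamma|>n/2$, and concludes.

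Where you diverge — and where you yourself flag trouble — is in insisting that the acking server also lie in the committed quorum (the set the pseudocode calls $\mathcal{S}_\alpha$ and the paper's proof renames $\mathcal{S}_1$). That extra condition is what lets you place that server's {\PutData} consumption strictly before the {\ConfirmData} $group\act{-}send$, which is what makes $(t_w,w)\in Seen$ when {\ConfirmData} arrives — a message-ordering point the paper leaves implicit. But it shrinks the certified surviving set to a triple intersection of size roughly $n/4$ when $\alpha$ is only slightly above $\frac{3}{4}$, which, as you compute, is short of the majority needed. The escape you sketch — choosing the committed quorum inside $A_1$ — is not available: the writer has no control over which $\left\lceil\frac{3n+1}{4}\right\rceil$ {\PutData} acks arrive first, so the committed quorum need not sit inside the $N2$-stable set. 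Note that the paper's own proof does not close this cleanly either: it concludes from ``$\mathcal{S}_\gamma\subset\mathcal{S}_\alpha$ is at least a majority,'' with $\mathcal{S}_\alpha$ being the $N2$-stable set for {\PutData}, whereas the algorithm requires the {\ConfirmData} acks to come from a majority of the committed quorum $\mathcal{S}_1$, which the paper explicitly allows to differ from $\mathcal{S}_\alpha$. You have thus independently located the one soft spot in this liveness argument; the fix you propose does not work, but the paper does not clearly supply one either.
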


			\begin{theorem}[Atomcity]  \label{thm:atomicity_radons}
				Every execution of the $\RADONS$ algorithm is atomic.
			\end{theorem}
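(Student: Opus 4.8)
\emph{Plan.} I would verify the standard partial-order sufficient condition for atomicity that was used for $\RADONL$ (Lemma~$13.16$ of \cite{Lynch1996}), with tags serving as the total order on operations. Concretely it suffices to establish: (i) a completed read returns a (tag, value) pair that some write actually installed (or the initial pair $(t_0,v_0)$), with the read's tag equal to that tag; (ii) if a read or write operation $\phi$ completes before a read or write operation $\psi$ is invoked, then $tag(\psi)\ge tag(\phi)$, with strict inequality when $\psi$ is a write; and (iii) distinct writes have distinct tags. As in $\RADONL$, Lemma~\ref{lem:property} does the heavy lifting once we produce, for every completed read or write $\phi$, a \emph{certificate}: a time $T_\phi$ no later than the completion of $\phi$, together with a majority $\mathcal{S}_m\subseteq\mathcal{S}$ all of whose members are simultaneously active at $T_\phi$ and hold a local tag $\ge tag(\phi)$. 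For $\RADONL$ the certificate came from the network condition $N1$; for $\RADONS$ it comes from the {\ConfirmData} phase, which is why the conclusion is claimed for \emph{every} execution, with no stability assumption.

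\emph{The certificate lemma.} Let $\phi$ be a completed write by $w$ with tag $t_w$ (the read case is identical, with the reader $r$ and its write-back tag $t_r$ in place of $w$ and $t_w$). Let $\mathcal{S}_\alpha$ be the set of servers whose {\PutData} acknowledgements $\phi$ collected, and let $\mathcal{S}_c\subseteq\mathcal{S}_\alpha$ be those from which $\phi$ then collected {\ConfirmData} acknowledgements; by construction $\mathcal{S}_c$ is a majority of $\mathcal{S}$. Fix $s\in\mathcal{S}_c$. First, when $s$ processed the {\ConfirmData} message of $\phi$ it had $(t_w,w)$ in its $Seen$ set; the only step that inserts $(t_w,w)$ into $Seen_s$ is $s$ processing a {\PutData} message carrying tag $t_w$ from $w$, and the only steps that remove it are the {\ConfirmData} response to $w$ for tag $t_w$, a crash (which discards $Seen_s$), and the {\InitRep} action (which re-initialises $Seen_s$ to $\emptyset$). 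Hence $s$ cannot have crashed between the instant $a_s$ at which it processed the {\PutData} message of $\phi$ and the instant $c_s$ at which it processed the {\ConfirmData} message of $\phi$, i.e.\ $s$ is continuously active on $[a_s,c_s]$ --- and this is insensitive to whether $w$ happens to reuse the tag $t_w$ across two of its own writes, since in every case one still obtains such an interval on which $s$ is continuously active and holds $t_{loc,s}\ge t_w$. Second, immediately after $a_s$ we have $t_{loc,s}\ge t_w$, and $t_{loc}$ never decreases while a server stays active, so $t_{loc,s}\ge t_w$ throughout $[a_s,c_s]$. Third, $a_s$ precedes the moment $w$ receives $s$'s {\PutData} acknowledgement, which precedes the instant $T_\phi$ at which $w$ invokes the {\ConfirmData} group-send (as $w$ awaits all {\PutData} acks first), while $c_s$ follows $T_\phi$; hence $T_\phi\in[a_s,c_s]$. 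Taking this over all $s\in\mathcal{S}_c$: at time $T_\phi$, which is no later than the completion of $\phi$, the majority $\mathcal{S}_m:=\mathcal{S}_c$ is entirely active with every local tag $\ge t_w=tag(\phi)$ --- the required certificate.

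\emph{Concluding atomicity.} Given the certificate lemma, (ii) is immediate: if $\psi$ is invoked after $\phi$ completes, then $\psi$ is invoked after $T_\phi$, so Lemma~\ref{lem:property} gives $tag(\psi)\ge\min_{s\in\mathcal{S}_m}t_{loc,s}(T_\phi)\ge tag(\phi)$, strict when $\psi$ is a write. Property (iii) follows as a corollary: non-overlapping writes are strictly ordered by (ii), and overlapping writes are necessarily by distinct writers (well-formedness), hence differ in the writer component of their tags. Property (i) follows from the standard invariant that the pair $(t_{loc},v_{loc})$ at any active server always equals a pair installed by some write (a writer's {\PutData} directly, or a reader's write-back, which only relays a pair it read from a server) or the initial pair $(t_0,v_0)$; this invariant is preserved by {\PutDataResp} and by {\InitRep} (which restores a maximum-tag pair among majority responses), and a read returns the maximum-tag pair among its {\GetData} responses. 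Feeding (i)--(iii) into Lemma~$13.16$ of \cite{Lynch1996} shows the execution is atomic; since no network-stability assumption entered the argument, every execution of $\RADONS$ is atomic.

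\textbf{Main obstacle.} The crux is the continuous-activity claim in the certificate lemma: a {\ConfirmData} acknowledgement from $s$ must certify that $s$ stayed active over the \emph{entire} window from its {\PutData} step to its {\ConfirmData} step, not merely at isolated instants. This rests on (a) a crash wiping $Seen_s$ and the repair action resetting it to $\emptyset$, and (b) the pair $(t_w,w)$ being impossible to regenerate in $Seen_s$ except by re-delivery of $w$'s {\PutData} message, which never happens. One must then exhibit a single common time --- the {\ConfirmData} group-send instant $T_\phi$ --- that provably lies inside every such per-server window; this is exactly the role played for $\RADONL$ by the interval of simultaneous activity of a majority that $N1$ supplied.
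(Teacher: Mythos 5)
Your proposal follows essentially the same approach as the paper's proof: Lemma~13.16 of~\cite{Lynch1996} as the sufficiency condition, Lemma~\ref{lem:property} (in its $\RADONS$ form) as the key technical lemma, and the observation that the {\ConfirmData} phase --- via the $Seen$-set mechanism --- yields a time $T$ at which a majority is simultaneously active with local tags $\geq tag(\phi)$. Your ``certificate lemma'' in fact spells out in detail the step the paper dismisses with ``based on the algorithm, it is clear that all servers in $\mathcal{S}_1$ are active at time $T$,'' by tracking the $Seen$ entry $(t_w,w)$ to show each confirming server is continuously active over an interval $[a_s,c_s]$ containing $T$; this is a welcome tightening of the paper's argument rather than a departure from it.
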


\section{Storage and Communication Costs of Algorithms} \label{sec:costs}

We give a justification of storage and communication cost numbers of the three algorithms, appearing in Table \ref{table:summ}. Recall that the size of value $v$ is assumed to be $1$ and also that we ignore the costs due to metadata. It is clear that both $\RADONL$ and $\RADONS$ have storage cost $n$, write cost $n$, and read cost $2n$ (due to write back). For $\RADONC$, each server stores at most $\delta+1 $ coded-elements, where each element has size $\frac{1}{k}$. Thus storage cost of $\RADONC$ is $(\delta + 1)\frac{n}{k}$. The write cost of $\RADONC$ is simply $\frac{n}{k}$, and the contribution comes from the writer sending one coded-element to each of the $n$ servers. For a read, getting  the entire $Lists$ during the $\GetData$ phase incurs a cost of $(\delta + 1)\frac{n}{k}$. The write-back phase incurs an additional cost of $\frac{n}{k}$. Thus, the total read cost in $\RADONC$ is $(\delta + 2)\frac{n}{k}$.
				
\vspace{-0.1in}
				
\section{Conclusions}\label{sec:conclusion}
In this paper, we provided an erasure-code-based algorithm for implementing atomic memory, having the ability to perform repair of crashed nodes in the background, without affecting client operations.  We assumed a static model with a fixed, finite set of nodes, and also a practical network condition $N1$ to facilitate repair. We showed how the usage of MDS codes significantly improve storage and communication costs over a replication based solution, when the number of writes concurrent with a read or repair is limited. Liveness and atomicity are guaranteed as long as $N1$ is satisfied; however violation of $N1$ can lead to non-atomic executions. We further showed how a slightly stringent network condition $N2$ can be used to construct a replication based algorithm that always guarantees atomicity. Ongoing efforts include exploring possibility of using repair-efficient erasure codes~\cite{dimakis2011survey} in $\RADONC$, and testbed evaluations on cloud based infrastructure. 

\section{Acknowledgments}

The work is supported in part by AFOSR  under grants FA9550-14-1-043, FA9550-14-1-0403, and in part by NSF under awards CCF-1217506, CCF-0939370.				

{
\bibliographystyle{IEEEtran}
\bibliography{biblio}
}

\appendix
\section*{Appendix}
\section{Proof of Theorem \ref{thm:imp_weak}} \label{app:imp_weak}

The theorem is restated for convenience.

\begin{theorem} ({\bf Theorem~\ref{thm:imp_weak}})
	{	It is impossible to implement an atomic memory service that guarantees liveness of reads and writes,  under the system model described in Section \ref{sec:models}, even if $1$) there is at most one server in the crashed/repair state at any point during the execution, and $2$) every repair operation completes, and takes the repaired server back to the active state.}
\end{theorem}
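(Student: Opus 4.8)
The plan is to argue by contradiction. Suppose some algorithm $A$ implements an atomic read/write object and guarantees liveness of reads and writes in the model of Section~\ref{sec:models}. Fix a value $v_1 \neq v_0$ and a non-faulty writer $w$, and let $W$ denote a write operation by $w$ that attempts to store $v_1$. The key idea is an adversarial schedule that \emph{races ahead of the writer's messages}: every message that $W$ ever places on the network is destroyed by crashing its destination server just before the (arbitrarily delayed) message would be delivered, and these crashes are serialized so that at most one server is ever down and every repair completes. I will then show that $A$ must either violate liveness (if $W$ does not return in this execution) or violate atomicity (if it does).

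Concretely, I would build the execution as follows. The write protocol of $A$ can influence the servers only by sending them messages, so I process these messages one target server at a time. When $W$ emits a message $m_i$ for server $s_i$, the adversary --- which dictates all message delays --- holds $m_i$ in transit, issues a crash event at $s_i$ (legitimate, since $s_i$ is at that instant the only server in the crashed/repair state), keeps $s_i$ crashed until $m_i$ arrives and is lost, and then issues a repair event at $s_i$. That repair reads from the other servers, all of which at this point still hold the initial pair $(t_0,v_0)$ (their $v_1$-messages are either still in transit or were already dropped), completes, and returns $s_i$ to the active state holding $(t_0,v_0)$. Because delays are the adversary's to choose, these crash/repair episodes are serialized, so condition $(1)$ holds throughout and, since each repair terminates normally, condition $(2)$ holds as well. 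Repeating this for \emph{every} message $W$ ever sends --- across however many rounds $A$'s write protocol performs, and whether or not those rounds are adaptive --- $w$ receives no response from any server at any point, and after each round all servers again hold $(t_0,v_0)$ with $W$'s messages permanently lost.

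Now I split into two cases. If $W$ never returns in this execution, then a write by a non-faulty writer fails to complete, contradicting liveness. Otherwise $W$ returns at some time $T_c$, at which point no server (and no in-transit message) retains any information about $v_1$. I then continue the schedule past $T_c$ until the system is quiescent with every server active and holding $(t_0,v_0)$ (all of $W$'s messages having been dropped), let a non-faulty reader $r$ invoke a read, and deliver everything promptly; by liveness $r$'s read returns. Since every server reports $(t_0,v_0)$ and nothing in the system references $v_1$, the configuration $r$ sees is indistinguishable from one arising in an execution in which $W$ never occurred, so $r$ must return $v_0$. But $W$ completed before $r$'s read began and no other write took place, so atomicity requires $r$ to return $v_1$ --- a contradiction. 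Either way $A$ cannot be simultaneously live and atomic even under conditions $(1)$ and $(2)$.

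The part I expect to be the main obstacle is making the ``race ahead of the messages'' construction airtight against \emph{all} write protocols: one must argue that no matter how $A$ interleaves message sends with local steps --- sequential sends, sends that adapt to the (absent) responses, or protocols in which servers forward received data to one another --- the adversary can still schedule delays together with a serialized chain of single-server crash/repair episodes that destroys every message the operation ever emits, never has two servers down at once, and lets every repair terminate. A secondary subtlety is the indistinguishability step in the second case: it leans essentially on the amnesic nature of repair, namely that a crash erases \emph{all} state and a repair can recover only what currently-active servers hold, so a write that ``completes'' with all of its messages dropped leaves no durable trace anywhere; this is exactly the property that separates this model from ones assuming a server with stable storage or a never-crashing server.
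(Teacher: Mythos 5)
Your proof is correct and follows essentially the same strategy as the paper's: race ahead of the writer's messages by crashing each destination server (one at a time, each followed by a completed repair), then split on whether the write terminates, using liveness in the first case and an indistinguishability argument in the second. The only cosmetic divergence is in the indistinguishability pairing at the end --- the paper compares two executions writing distinct non-default values $v_1$ and $v_1'$ and shows the reader cannot tell them apart, whereas you compare the execution containing the write of $v_1$ with one in which the write never occurred and conclude the reader must return $v_0$; both are sound instances of the same ``no durable trace'' observation that you correctly identify as the crux.
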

\begin{proof}
	We prove this result by contradiction, by assuming an algorithm $A_{alg}$ that guarantees liveness and atomicity, and also is such that every  repair operation completes, and takes the repaired server back to the active state.  Let the initial value stored in the system be $v_o \in V$, where $V$ is the domain of all values. Consider a non-faulty writer  $w$, and suppose $w$ initiates a write operation $\pi^{w}$ with the value $v_1$, such that $v_1 \neq v_0$. Let $\mathcal{S}_w \subseteq \mathcal{S}$ be the set of all servers that writer $w$ sends messages before $w$ expects any response from any of the servers  in $\mathcal{S}$. Without loss of generality\footnote{Clearly, the writer must send a message to at least one server, so we ignore the trivial case when $\mathcal{S}_w$ is empty.} let $\mathcal{S}_w = \{s_1, s_2, \ldots, s_k\}$, for some $k \leq n$, and let $m_i$ denote the message sent by $w$ to server $s_i, 1 \leq i \leq k$. Note that if $w$ sends two or messages to a particular server, say $s_1$, all these can be combined into $m_1$, since all these messages are sent without expecting any response. 
	
	Consider an execution which starts with all the servers in the active state, the operation $\pi^{w}$ begins, messages get sent out to servers in  $\mathcal{S}_w$. Delay the messages such that message $m_1$ arrives at server $s_1$ before any other server in $\mathcal{S}_w$ receives the respective message. Assume that $s_1$ is in the crashed state when $m_1$ arrives, so $s_1$ does not receive $m_1$. Further assume that all the other servers are in the active state at this point of execution. Let server $s_1$ undergo a successful repair operation, before any other server in $\mathcal{S}_w$ receives its respective message. Next, consider the case when server $s_2$ receives the message $m_2$, and delay the messages to all other servers, and assume that $s_2$ is in the crashed state when $m_2$ arrives. The sequence of crash and repair can be carried out in this manner one-by-one for every server in $\mathcal{S}_w$, where all these servers end up losing the writer message, though they get repaired. Now if the algorithm is such that the writer expects a response from any of the servers in $\mathcal{S}$, clearly it will not happen, since no server in $\mathcal{S}$ has received any message from $w$ while the server is in the active state. Thus liveness of write is compromised. 
	
	We next consider the case when the writer decides to terminate without expecting any response from any server in $\mathcal{S}$, and show that such a method of guaranteeing liveness results in violation of atomicity. Let us call this execution fragment (as discussed above) with such a write as  $\beta^{w}(v_1)$. After the write $\pi_w$ completes, a read $\pi_r$ associated with a non-faulty reader, begins. By liveness of read, and atomicity, the read must return $v_1$. Let the execution fragment associated with the read be denoted as $\beta^r$, so that the overall execution fragment under consideration is $\beta^{w}(v_1) \circ \beta^r$. Next, consider the execution fragment $\beta^{w}({v'}_1)$ obtained by replacing $v_1$ with $v_1'$ such that ${v_1}' \neq v_1$. Since a crash causes a server to lose its entire state, it is clear that to the reader $r$ there is no distinction between the state of the system 
	after $\beta^{w}(v_1)$, and the state of the system after $\beta^{w}(v_1')$. In this case, if we consider the execution $\beta^{w}(v_1') \circ \beta^r$, the read returns $v_1$ ($\neq v_1'$), since in the execution $\beta^{w}(v_1) \circ \beta^r$ also, $r$ returned $v_1$. However it violates atomicity of $\beta^{w}(v_1') \circ \beta^r$, which completes the proof.
\end{proof}

\section{Proof of Lemma \ref{lem:property}} \label{app:property}
The lemma is restated for convenience.
\begin{lemma}[Lemma \ref{lem:property}] 
	Let $\beta$ denote a well-formed execution of the $\RADONL$ algorithm. Suppose $T$ denotes a point of time in the execution $\beta$ such that there exists a majority of servers $\mathcal{S}_m$, $\mathcal{S}_m \subset \mathcal{S}$ all of which are in the active state at the time $T$. Also, let $t_s$ denote the value of the local tag at server $s$, at time $T$. Then, if $\pi$ denotes any completed repair or read operation that is initiated after time $T$, we have $tag(\pi) \geq \min_{s\in \mathcal{S}_m} t_s$. Also, if $\pi$ denotes any completed write  operation that is initiated after time $T$, then we have $tag(\pi) > \min_{s\in \mathcal{S}_m} t_s$.
\end{lemma}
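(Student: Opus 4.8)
The plan is to prove Lemma~\ref{lem:property} by induction on the temporal ordering of operations in the execution $\beta$ that begin after time $T$. More precisely, I would order all read, write, and repair operations that are initiated after $T$ by the time of their \emph{initiation} (breaking ties arbitrarily), and prove by strong induction on this order that: (i) any completed repair or read $\pi$ satisfies $tag(\pi) \geq \tau := \min_{s \in \mathcal{S}_m} t_s$, and (ii) any completed write $\pi$ satisfies $tag(\pi) > \tau$. The key structural fact that makes the induction close is a monotonicity invariant on server states: once a server $s \in \mathcal{S}_m$ is active with $t_{loc} \geq \tau$ at time $T$, its local tag can only drop below $\tau$ if it undergoes a crash, and a crash is necessarily followed (for a \emph{completed} repair) by a repair operation that, by the induction hypothesis applied to that repair, restores a tag $\geq \tau$. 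So the set $\mathcal{S}_m$ of servers always contributes at least a majority of servers whose local tag is $\geq \tau$ \emph{except transiently} while some of them are crashed/under repair --- but the crucial point is that at \emph{any} point in time, the servers in $\mathcal{S}_m$ that are currently active all have $t_{loc} \geq \tau$.

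The main argument then runs as follows. Take $\pi$ a completed repair, read, or write initiated after $T$. It contacts a majority of servers and awaits their responses during its first phase (\GetData, \GetTag, or \RepairTagData). Since $\mathcal{S}_m$ is a majority and $\pi$'s response set is a majority, the two sets intersect in at least one server $s^\ast$. Now I want to argue that when $s^\ast$ sends its response to $\pi$, its local tag is $\geq \tau$. This is where the invariant above is used: $s^\ast$ only responds when it is active; if $s^\ast$ was continuously active from $T$ until it responds, monotonicity of $t_{loc}$ at an active server (it only changes via \PutData~with a strictly larger tag) gives $t_{loc} \geq t_{s^\ast} \geq \tau$ immediately; if instead $s^\ast$ crashed at some point in $(T, \text{response time}]$, then since it is active again when it responds, it must have completed a repair operation in between, and that repair was initiated after $T$ and completed before $\pi$'s response --- hence strictly earlier in our induction order --- so by the induction hypothesis its restored tag is $\geq \tau$, and subsequent \PutData~messages only increase it. Either way, $s^\ast$'s response tag is $\geq \tau$. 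For a read or repair, $\pi$ picks the maximum tag among responses, which is therefore $\geq \tau$, giving $tag(\pi) \geq \tau$. For a write, $\pi$ picks $t^\ast \geq \tau$ as the max among responses and then sets $tag(\pi) = t_w = (t^\ast.z + 1, w) > t^\ast \geq \tau$, giving the strict inequality.

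One subtlety I would be careful about: I need the induction hypothesis to be applicable to the repair that $s^\ast$ performs, which requires that repair to have been \emph{completed} and \emph{initiated after $T$}. Initiated-after-$T$ is clear since the crash happened after $T$. Completed-ness is guaranteed because $s^\ast$ is active again by the time it responds to $\pi$, and a server is active after a repair only if that repair ran to completion. There is also the matter of chained crashes and repairs of $s^\ast$ between $T$ and the response: here I would take the \emph{last} repair of $s^\ast$ that completes before it responds, apply the induction hypothesis to it, and then note that any \PutData~writes landing at $s^\ast$ after that repair only raise $t_{loc}$. I expect the main obstacle --- and the real content of the lemma, as the authors hint in the proof overview --- to be nailing down this monotonicity invariant cleanly in the presence of interleaved crashes and repairs, i.e.\ making precise the claim ``every server in $\mathcal{S}_m$ that is currently active has local tag $\geq \tau$'' and justifying that a completed repair always appears earlier in the induction order than any operation whose response it influences. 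Everything downstream of that invariant is a short intersection-of-majorities argument essentially identical to the standard ABD atomicity proof.
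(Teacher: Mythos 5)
Your overall strategy matches the paper's: use the intersection of the operation's responding majority with $\mathcal{S}_m$ to find a pivot server $s^*$, argue that $s^*$'s response tag is at least $\tau := \min_{s\in\mathcal{S}_m} t_s$ by appealing either to monotonicity of an always-active server or to a property of earlier-completing repairs, and then conclude by taking the max over responses. The paper phrases the inductive step for repairs as a proof by contradiction (among all repairs starting after $T$ whose restored tag is $< \tau$, take the one that \emph{completes} first and derive a contradiction from the majority intersection), then reduces the read and write cases to the repair case; your uniform strong induction over all three operation types is a cosmetic repackaging of the same argument.

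There is, however, a concrete gap in your ordering, which you flag at the end but do not actually resolve, and your stated justification is the error. You order operations by \emph{initiation} time and then apply the induction hypothesis to a repair $\rho'$ of $s^*$ that completes before $s^*$ responds to $\pi$. But ``completes before $s^*$'s response'' does \emph{not} imply ``initiated before $\pi$ is initiated'': the crash of $s^*$ and the subsequent repair $\rho'$ can perfectly well begin only after $\pi$ has been running for a while. In that case $\rho'$ comes \emph{after} $\pi$ in your initiation order and the induction hypothesis is unavailable; your parenthetical ``hence strictly earlier in our induction order'' is where the argument breaks. The fix is to order by \emph{completion} time: the last repair of $s^*$ to complete before $s^*$'s response necessarily completes strictly before $\pi$ completes, so it is earlier in that order; the order is well-founded because only finitely many operations complete before any fixed point of the execution; and the rest of your argument then goes through unchanged. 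This is precisely what the paper's choice of ``the repair in $\Pi'_R$ that completes first'' (rather than the one that starts first) accomplishes implicitly.
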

\begin{proof}
	We use $\rho$ to denote $\min_{s\in \mathcal{S}_m} t_s$. Also, for any state variable $x(s)$ that is stored in server $s$, we write $x(s)|_{T}$ to denote the value $x$ at time $T$. Below, we separately consider the cases when $\pi$ denotes a successful repair, read and write operations, in this respective order.
	
	\noindent {\bf (a) $\pi$ is a successful repair operation:} We prove the statement by contradiction, by
	starting with the assumption that $tag(\pi) < \rho$. Let $T_{\pi}$ denote the point of time in the execution $\beta$ at which the operation $\pi$ completes. Let $\Pi'_R$ denote the set of all successful repair operations which start after the time $T$, but start before $T_{\pi}$, and is such that $\forall \pi' \in \Pi'_R$, we have $tag(\pi') < \rho$. Clearly, $\pi \in \Pi'_R$. Let $\pi^* \in \Pi'_R$ denote the repair operation, which completes first. Note that $\pi^*$ exists since the set $\Pi'_R$ is finite.
	Now, let $\hat{\mathcal{S}}$ denote the set of majority servers based on whose responses the operation $\pi^*$ completed. Clearly, $|\hat{\mathcal{S}} \cap \mathcal{S}_{m}| \geq 1$. For any server $s \in \hat{\mathcal{S}} \cap \mathcal{S}_{m}$, let $T_s$ denote the point of time in the execution at which the server $s$ responds to $\pi^*$ with its local (tag, value) pair. Clearly, the server $s$ must have remained in the active state during the entire interval $[T, T_s]$. This follows because $s$ is active at time $T$,  $\pi^*$ is the first completed repair operation that started after $T$, and due to the fact that a server responds to a repair request only if it is in the active state. In this case, we know that\footnote{Any read or write operation cannot decrease the local tag that is stored in an active server.} $t_{loc}(s)|_{T_s} \geq t_{loc}(s)|_T \geq \rho$ for any $s$ in $\hat{\mathcal{S}} \cap \mathcal{S}_{m}$. Therefore, we have $tag(\pi^*) = \max_{s \in \hat{\mathcal{S}}} t_{loc}(s)|_{T_s} \geq \max_{s \in \hat{\mathcal{S}} \cap \mathcal{S}_m} t_{loc}(s)|_{T_s} \geq \rho$, which contradicts the existence of $\pi^* \in \Pi'_R$. From, this we conclude that the set $\Pi'_R$ must be empty to avoid contradictions, and hence $tag(\pi) \geq \rho$.
	
	\noindent {\bf (b) $\pi$ is a successful read operation:} We prove this by contradiction by starting with the assumption that $tag(\pi) < \rho$.  Let $\hat{\mathcal{S}}$ denote the set of majority servers based on whose responses during the $get$-$data$ phase (see Fig. \ref{fig:radonl-server}), the read operation completed. As in Part $a)$, we know that $|\hat{\mathcal{S}} \cap \mathcal{S}_{m}| \geq 1$. In this case, let $T_s$ denote the point of time during the execution at which the server $s \in \hat{\mathcal{S}} \cap \mathcal{S}_{m}$ responded to the reader. Next, note that in the $get$-$data$ phase, the reader picks the response with the highest tag. Thus, since we assume that $tag(\pi) < \rho$, it must be true that $t_{loc}(s)|_{T_s} < \rho, s \in \hat{\mathcal{S}} \cap \mathcal{S}_{m}$. Since the server $s \in \hat{\mathcal{S}} \cap \mathcal{S}_{m}$ is active at time $T$ such that $t_{loc}(s)|_T \geq \rho$, this would imply that server $s$ experienced a crash event after time $T$, and  came back to the \emph{active} state before the time $T_s$ via a successful repair operation $\phi$ such that $tag(\phi) < \rho$. But then, this contradicts Part a) of the theorem which we proved above, and hence we conclude that $tag(\pi) \geq \rho$.
	
	\noindent {\bf (c) $\pi$ is a write operation:} Once again we prove via contradiction, by starting with the assumption that $tag(\pi) \leq \rho$. Let $\hat{\mathcal{S}}$ denote the set of majority servers based on whose responses during the $get$-$tag$ phase, the writer determined $tag(\pi)$. We know from the algorithm that  $tag(\pi)$ is strictly larger than all the tags among the responses from $\hat{\mathcal{S}}$.  Since $|\hat{\mathcal{S}} \cap \mathcal{S}_{m}| \geq 1$, we argue like in Part $b)$, and arrive at a contradiction to Part $a)$. 
\end{proof}
\section{Proof of Theorem \ref{thm:atomicity}} \label{app:atomicity}

The theorem is restated here for convenience. 

\begin{theorem}[Theorem \ref{thm:atomicity}]  \label{thm:atomicity_app}
	Every execution of the $\RADONL$ algorithm operating under the $N1$ network stability condition with $\alpha > \frac{3}{4}$, is atomic.
\end{theorem}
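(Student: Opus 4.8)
The plan is to instantiate the standard tag-based recipe for atomicity, packaged as Lemma~13.16 of \cite{Lynch1996}. First I would fix the partial order $\prec$ on the completed operations: order all writes by their tags, place each read immediately after the unique write whose tag equals the read's tag (or at the very front if that tag is $t_0$), and leave reads sharing a tag mutually unordered. Two routine structural facts make this well defined and reduce atomicity to a single monotonicity claim. First, a (tag, value) pair is never split in $\RADONL$: it travels as a unit in a \PutData{} message, is stored as a unit by a server, is written back as a unit by a reader, and is copied as a unit during \InitRep; hence, by induction on the execution, every pair held by any server is either $(t_0,v_0)$ or was minted by some write carrying exactly that value, so whenever a reader returns $(t_r,v_r)$ either $t_r=t_0,v_r=v_0$ or a unique write minted $t_r$ with value $v_r$. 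Second, distinct completed writes carry distinct tags: different writers differ in the writer-ID component, and two writes by the same writer in program order get strictly increasing integer parts by the monotonicity claim below. Granting these, $\prec$ is a genuine partial order in which each read returns the value of its immediate predecessor write, and only the real-time-consistency axiom remains.

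The real-time axiom reduces to: if a completed operation $\pi_1$ finishes before an operation $\pi_2$ begins, then $tag(\pi_2)\ge tag(\pi_1)$, with strict inequality when $\pi_2$ is a write. Granting this, the order axioms follow mechanically --- $tag(\pi_2)>tag(\pi_1)$ excludes $\pi_2\prec\pi_1$ outright, while $tag(\pi_2)=tag(\pi_1)$ forces $\pi_2$ to be a read, hence either unordered with $\pi_1$ (if $\pi_1$ is a read) or placed after it (if $\pi_1$ mints that tag); and the same claim with $\pi_2$ a read and $\pi_1$ the write it reads from shows no read returns a value minted strictly in its future. To prove the claim I would lean on Lemma~\ref{lem:property}, which --- with no appeal to network stability --- already says that if at some time $T$ a majority $\mathcal{S}_m$ of servers is simultaneously active with local tags $(t_s)_{s\in\mathcal{S}_m}$, then every completed repair/read/write initiated after $T$ carries a tag $\ge\min_{s\in\mathcal{S}_m}t_s$ (strictly $>$ for writes). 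So it suffices to exhibit, for the completed $\pi_1$ with $tag(\pi_1)=\tau$, a time $T_0$ \emph{no later than the completion of $\pi_1$} at which some majority of servers is simultaneously active and all hold local tag $\ge\tau$; since $\pi_2$ is initiated after $\pi_1$ completes --- hence after $T_0$ --- Lemma~\ref{lem:property} then yields the bound.

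Producing $T_0$ is where $N1$ with $\alpha>\frac{3}{4}$ enters, and I expect it to be the main obstacle. The final phase of $\pi_1$ --- the \PutData{} phase if $\pi_1$ is a write, the write-back phase if it is a read --- invokes a $group\act{-}send$ of a pair $(\tau,\cdot)$ at some time $T_1$ and completes only upon receiving acks from a set $\mathcal{S}_{ack}$ with $|\mathcal{S}_{ack}|=\left\lceil\frac{3n+1}{4}\right\rceil>\frac{3n}{4}$; each $s\in\mathcal{S}_{ack}$ had local tag $\ge\tau$ at the instant it sent its ack (before $\pi_1$'s completion), and a server never lowers its tag except by crashing. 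By $N1$ the same $group\act{-}send$ also admits a set $\mathcal{S}_{\alpha}$, $|\mathcal{S}_{\alpha}|=\lceil\alpha n\rceil>\frac{3n}{4}$, each member of which effectively consumes the message --- hence ends with local tag $\ge\tau$ --- and stays active throughout $[T_1,T_2]$, where $T_2$ is the earliest instant by which all of $\mathcal{S}_{\alpha}$ have done so. If $T_2$ is at or before $\pi_1$'s completion, take $T_0=T_2$ and $\mathcal{S}_m=\mathcal{S}_{\alpha}$. Otherwise take $T_0$ to be $\pi_1$'s completion instant and $\mathcal{S}_m=\mathcal{S}_{ack}\cap\mathcal{S}_{\alpha}$: this is a strict majority since both sets exceed $\frac{3n}{4}$, and each of its members lies in $\mathcal{S}_{\alpha}$ (so it never crashed on $[T_1,T_0]$) and in $\mathcal{S}_{ack}$ (so it processed the $group\act{-}send$ message, raising its tag to $\ge\tau$, before $T_0$), hence is active with tag $\ge\tau$ at $T_0$. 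Reconciling the $N1$ window $[T_1,T_2]$ with the completion time of $\pi_1$, and handling the fact that the $N1$-guaranteed set $\mathcal{S}_{\alpha}$ need not coincide with the actual acknowledgment set $\mathcal{S}_{ack}$, is the delicate point; the rest is standard ABD-style majority-intersection bookkeeping layered on the already-proved Lemma~\ref{lem:property}.
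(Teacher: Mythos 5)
Your proposal is correct and follows essentially the same route as the paper: invoke Lemma~\ref{lem:atom}, define the tag-based partial order, and reduce the real-time axiom to exhibiting a time $T_0$ no later than $\pi_1$'s completion at which a majority of active servers hold tags $\geq \optag{\pi_1}$, then apply Lemma~\ref{lem:property}. The one place you diverge is cosmetic: the paper avoids your case split on whether $T_2$ precedes or follows $\pi_1$'s completion by choosing $T$ to be the earliest time all servers in $\mathcal{S}_{ack} \cap \mathcal{S}_{\alpha}$ (the paper's $\mathcal{S}' \cap \mathcal{S}_{\alpha}$) finish effective consumption --- this is automatically $\leq T_2$ (intersection is a subset of $\mathcal{S}_\alpha$) and strictly before $\pi_1$'s completion (the writer must still receive those acks), so a single uniform choice suffices. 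Both your cases are sound, just slightly less economical.
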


\subsection{Some Preliminaries}

\paragraph*{Partial Order on read and write operations}
Consider any well-formed execution $\beta$ of $\RADONL$, all of whose invoked read or write operations complete. Let $\Pi_{RW}$ denote the set of all completed read and write operations in $\beta$. We first define a partial order ($\prec$) on $\Pi_{RW}$. Towards this, recall that for any completed write operation $\pi$, we defined $tag(\pi)$ as the tag created by the writer during the {\PutW} phase. Also, recall that for any completed read operation $\pi$, we define $tag(\pi)$ as the tag corresponding to the value returned by the read. The partial order ($\prec$) in $\Pi_{RW}$ is defined as follows: For any $\pi, \phi \in \Pi_{RW}$, we say $\pi \prec \phi$  if one of the following holds: $(i)$  $tag(\pi)  < tag(\phi)$, or $(ii)$ $tag(\pi) = tag(\phi)$, and  $\pi$ and $\phi$ are write and read operations, respectively. The proof of atomicity is based on the following lemma, which is simply a restatement of the sufficiency condition for atomicity presented in~\cite{Lynch1996}.

\begin{lemma} \label{lem:atom}
	Consider any well-formed execution $\beta$ of the algorithm, such that all the invoked read and the 
	write operations  complete. Now, suppose that all  the invoked read and write operations in $\beta$ can be partially ordered by an ordering $\prec$, so that the following properties are satisfied:
	\begin{itemize}
		\item [\em P1.] The partial order ($\prec)$ is consistent with the 
		external order of invocation and responses, i.e., there are no 
		operations $\pi_1$ and $\pi_2$, 
		such that $\pi_1$ completes before $\pi_2$ starts, 
		yet $\pi_2 \prec \pi_1$.
		\item[\em P2.] All operations are totally 
		ordered with respect to the write operations, 
		i.e., if $\pi_1$ is a write operation and $\pi_2$ is any other operation then 
		either $\pi_1 \prec \pi_2$ or $\pi_2 \prec \pi_1$.
		\item[\em P3.] Every read operation returns
		the value of the last write preceding it (with respect to $\prec$), and if no preceding 
		write is ordered before it, then the read returns the initial value
		of the object.
	\end{itemize}
	Then, the execution $\beta$ is atomic.	
\end{lemma}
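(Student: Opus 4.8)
Recall that $\beta$ is \emph{atomic} precisely when there exists a \emph{serialization}: a total order $<$ on the (completed) operations of $\beta$ such that $(i)$ $<$ respects the real‑time precedence $<_{rt}$ on operations (if the response of $\pi$ precedes the invocation of $\phi$ in $\beta$, then $\pi < \phi$), and $(ii)$ $<$ obeys the sequential specification of a read/write register, i.e.\ every read returns the value written by the $<$‑largest write that $<$‑precedes it, or $v_0$ if there is no such write. The plan is to construct $<$ as a total‑order extension of the partial order $\sqsubset$ obtained as the transitive closure of $\prec \cup <_{rt}$, and then verify $(i)$ using P1 and $(ii)$ using P2 and P3.

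\textbf{Step 1: $\sqsubset$ is a strict partial order.} It suffices to show $\prec \cup <_{rt}$ has no directed cycle. Suppose it did; take a shortest one. Since $\prec$ and $<_{rt}$ are each transitive, no two consecutive edges of the cycle are of the same kind (otherwise they contract into a shorter cycle), so the cycle strictly alternates and, after a rotation, has the form $\pi_1 \prec \pi_2 <_{rt} \pi_3 \prec \pi_4 <_{rt} \cdots \prec \pi_{2m} <_{rt} \pi_1$. For each $\prec$‑edge $\pi_{2i-1} \prec \pi_{2i}$, P1 forbids $\pi_{2i}$ from completing before $\pi_{2i-1}$ starts, so the invocation of $\pi_{2i-1}$ does not follow the response of $\pi_{2i}$ in $\beta$; for each $<_{rt}$‑edge $\pi_{2i} <_{rt} \pi_{2i+1}$, the response of $\pi_{2i}$ strictly precedes the invocation of $\pi_{2i+1}$. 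Chaining these relations around the cycle shows the invocation of $\pi_1$ strictly precedes itself, a contradiction. Hence $\sqsubset$ is irreflexive, and by the order‑extension principle it has a total‑order extension $<$.

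\textbf{Step 2: verification.} Since $<_{rt} \subseteq \sqsubset \subseteq <$, condition $(i)$ is immediate. For $(ii)$, fix a completed read $\rho$. By P2, every write is $\prec$‑comparable to $\rho$, and (P2 applied to pairs of writes) the writes are $\prec$‑totally ordered; hence among the writes $\sigma$ with $\sigma \prec \rho$ there is a $\prec$‑largest one, say $\sigma_\rho$ (set $\sigma_\rho = \bot$, the conceptual initial write of $v_0$, if there is none). Then $\sigma_\rho$ is also the $<$‑largest write $<$‑preceding $\rho$: indeed $\sigma_\rho \prec \rho$ gives $\sigma_\rho < \rho$; and for any write $\sigma \neq \sigma_\rho$, either $\sigma \prec \rho$, in which case the total order on writes and maximality of $\sigma_\rho$ force $\sigma \prec \sigma_\rho$ (so $\sigma < \sigma_\rho < \rho$), or $\sigma \not\prec \rho$, in which case P2 gives $\rho \prec \sigma$ and hence $\rho < \sigma$; either way $\sigma$ does not lie $<$‑strictly between $\sigma_\rho$ and $\rho$. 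By P3, $\rho$ returns the value written by $\sigma_\rho$ (or $v_0$ if $\sigma_\rho = \bot$), which is exactly what the sequential register specification demands under $<$. Thus $<$ is a valid serialization and $\beta$ is atomic.

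\textbf{Main obstacle.} The crux is Step 2: although $<$ is an \emph{arbitrary} extension of $\sqsubset$, it must never interleave a write strictly between a read and the write that read reads from, and this is exactly what P2 buys us by making every write $\prec$‑comparable (hence $<$‑comparable) to every operation. The only other point requiring care is the cycle‑freeness argument in Step 1, where P1 is used to turn each $\prec$‑edge into a real‑time inequality that can be chained with the genuine real‑time edges; everything else is bookkeeping.
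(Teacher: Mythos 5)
Your proof is correct, but note that the paper does not actually prove this lemma: it is stated as a restatement of the sufficiency condition in Lynch, \emph{Distributed Algorithms} (Lemma~13.16), and the reader is referred there. Lynch's treatment defines atomicity via \emph{serialization points}, requiring one to exhibit a point $*_\pi$ inside the invocation--response interval of each operation $\pi$ so that ordering the operations by these points yields a legal sequential register trace; her proof of the lemma therefore constructs such points explicitly. You instead adopt the equivalent linearizability-style characterization (a total order on operations extending real-time precedence and satisfying the register's sequential specification), build that order as a linear extension of the transitive closure of $\prec\cup<_{rt}$, and verify the two conditions. Your acyclicity argument is the right way to exploit P1: contract same-type edges by transitivity, use P1 to turn each $\prec$-edge $\pi_{2i-1}\prec\pi_{2i}$ into $inv(\pi_{2i-1})\le resp(\pi_{2i})$, and chain these with the genuine real-time inequalities around the alternating cycle. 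Step~2 also isolates the single nontrivial point correctly: P2 makes every write $\prec$-comparable (hence $<$-comparable) to every operation, so no write can fall $<$-strictly between a read $\rho$ and its $\prec$-last preceding write $\sigma_\rho$, and P3 then gives the sequential spec. Two small points deserve acknowledgment if the proof is to stand alone: first, if one insists on Lynch's serialization-point definition, a short standard step is needed to convert your real-time-respecting total order into an assignment of points within the intervals (an overlap argument on intervals); second, the existence of $\sigma_\rho$ uses the usual implicit assumption that only finitely many operations are invoked before any fixed point of the execution, which together with P1 makes the set of writes $\prec$-preceding $\rho$ finite. Neither is a substantive gap.
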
	

\subsection{Proof of Atomicity under $N1$ with $\alpha > 3/4$}
We need to prove the properties $P1$, $P2$ and $P3$ of Lemma~\ref{lem:atom}. We do this under $N1$ with $\alpha > 3/4$, using Lemma \ref{lem:property}. Let $\phi$ and $\pi$ denote two operations in $\Pi_{RW}$ such that $\phi$ completed before $\pi$ started.  Also, let  $c_{\phi}$ and $c_{\pi}$ denote the clients that initiated the operations $\phi$ and $\pi$, respectively. 

\paragraph*{Property $P1$} We want to show that $\pi \not\prec \phi$. We show this in detail only for the case when $\phi$ and $\pi$ are  both write operations. The proofs of other three cases\footnote{These correspond to the case when $\phi$ and $\pi$ are  both read operations, and the cases where one of them is a write and the other is a read.} are similar, and hence omitted. By virtue of the definition of the partial order $\prec$, it is enough to prove that $tag(\pi) > tag(\phi)$. Consider the $put$-$data$ phase of $\phi$, where the writer sends the pair $(t_w, v)$ to all 
servers via the $group$-$send$ operation. Under the condition $N1$ with $\alpha > 3/4$, we know that there exists a set $\mathcal{S}_{\alpha} \subseteq \mathcal{S}$ of $\lceil n\alpha \rceil \geq  \left\lceil \frac{3n+1}{4} \right\rceil$ servers all of 
which remain in the active state during the interval $[T_1, T_2]$ where $T_1$ denotes 
the point of time of invocation of the group-send operation, and $T_2$ denotes the earliest point of time during the 
execution where all of the servers in $\mathcal{S}_{\alpha}$ complete effective consumption (including sending ack to the writer $c_{\phi}$) of the message $(t_w, v)$. Also, let $\mathcal{S}' \subseteq \mathcal{S}$ denote the set of $\left\lceil\frac{3n+1}{4}\right\rceil$ servers whose 
acks are used by the writer to decide the completion of the write operation. Clearly, $|\mathcal{S}' \cap \mathcal{S}_{\alpha}| > \frac{n}{2}$. Let $T$ denote 
the earliest point of time during the execution when all servers in $\mathcal{S}' \cap \mathcal{S}_{\alpha}$ complete their respective effective consumption of the message $(t_w, v)$. In this case note that $a)$ $T$ occurs before the point of completion of the write operation, $b)$ all servers in $\mathcal{S}' \cap \mathcal{S}_{\alpha}$ are in the active state at $T$, and  $c)$ $t_{loc}(s)|_T \geq tag(\phi), \forall s \in  \mathcal{S}' \cap \mathcal{S}_{\alpha}$.  We now apply Lemma  \ref{lem:property} to conclude that $tag(\pi) > tag(\phi)$.

\paragraph*{Property $P2$} This follows from the construction of tags, and the definition of the partial order ($\prec$).

\paragraph*{Property $P3$} This follows from the definition of partial order ($\prec$), and by noting that value returned by a read operation $\pi$ is simply the value associated with $tag(\pi)$.

\section{Proof of Lemma \ref{lem:property_randonc} } \label{app:property_randonc}

The lemma is restated below for easy reference. 

\begin{lemma}[Lemma $2$]\label{lem:property_randonc_rep}
	Consider any well-formed execution $\beta$ of ${\RADONC}$ operating under  the network stability condition $N1$ with $\alpha \geq \frac{3n+k}{4n}$. Further assume that the number of  writes concurrent with any valid read or repair operation is at most $\delta$. For any operation $\pi$, consider the set $\Sigma = \{\sigma : \sigma~\text{is a read }$ $ \text{or a write in $\beta$ }$~$\text{that completes before}~\pi$ $\text{begins}\}$, and also let $\sigma^{*} = \arg\max\limits_{\sigma \in \Sigma}{\optag{\sigma}}$.  Then, the following statements hold: 
	\begin{itemize}
		\item If $\pi$ denotes a completed repair operation on a server $s \in \mathcal{S}$, then the repaired $List$ of server $s$ due to $\pi$ contains the pair $(tag(\sigma^*), c_s^*)$.
		\item If $\pi$ denotes a read operation associated with a non-faulty reader $r$, and further, if $\mathcal{S}_1$ denotes the set of  $\left\lceil\frac{n+k}{2}\right\rceil$ servers whose responses, say $\{L_{\pi}(s), s \in \mathcal{S}_1\}$, are used by $r$ to attempt decoding of a value in the {\GetData} phase, then there exists $\mathcal{S}_2 \subseteq \mathcal{S}_{1}$, $|\mathcal{S}_{2}| = k $, such that $\forall s \in \mathcal{S}_2,  (tag(\sigma^{*}), c_s^* ) \in L_{\pi}(s)$.
		\item If $\pi$ denotes a write operation associated with a non-faulty writer $w$, and further if $\mathcal{S}_1$ denotes the set of majority servers whose responses are used by $w$ to compute max-tag in the {\GetTag} phase, then there exists a server $s \in  \mathcal{S}_1$, whose response tag $t_s \geq tag(\sigma^*)$. 
	\end{itemize}
	Here, $c_s^*$ denotes the coded-element of server $s$ corresponding to the value $v^*$, associated with $tag(\sigma^*)$.
\end{lemma}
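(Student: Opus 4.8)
The plan is to prove the three statements simultaneously, by strong induction over the completed operations of $\beta$ ordered by a \emph{milestone time} $t_{\mathrm{mile}}(\cdot)$: for a write this is the point at which its \GetTag{} phase finishes, for a read the point at which it has collected its $\lceil\frac{n+k}{2}\rceil$ \GetData{} responses, and for a repair its completion point. This ordering is the right one because each of the three claims, when unrolled, refers only to repair operations of other servers that \emph{complete} strictly before $t_{\mathrm{mile}}(\omega)$: a crashed server can be polled by, or respond to, $\omega$ only after some repair has returned it to the active state, and that repair must finish before the corresponding message is sent.

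Before the induction I would record a few invariants, all immediate from Fig.~\ref{fig:radonc-server}: (i) every tag $t\neq t_0$ is created by a unique write operation, which is invoked strictly before $t$ first appears in any server's $List$ (reads only write back tags they have already decoded, so ``first appearance'' always traces back to a writer's \PutData); (ii) every $(t,\cdot)$ pair that ever occurs in any $List$ is a coded element $\Phi_i(v_t)$ of the one value $v_t$ written by that unique write, so any $k$ such pairs from distinct servers decode to $v_t$; and (iii) a server's $List$ loses a tag only when the server crashes, since the pruning rule keeps the $\delta+1$ highest tags and the maximum tag of an active server never decreases under reads and writes. The case $tag(\sigma^*)=t_0$ (including $\Sigma=\emptyset$) is the base case, handled because every $List$ is initialised with $(t_0,\Phi_i(v_0))$.

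For the inductive step, fix $\omega$ and write $\sigma^*=\sigma^*(\omega)$, $t^*=tag(\sigma^*)$, $v^*$ its value, $c^*_s=\Phi_s(v^*)$. First I would extract a set $W\subseteq\mathcal S$ with $|W|\geq\lceil\frac{3n+k}{4}\rceil$ and, for each $s'\in W$, a time $\tau_{s'}$ strictly before $\omega$ is invoked at which $s'$ is active and has just installed $(t^*,c^*_{s'})$: if $\sigma^*$ is a write, $W$ is the set of servers whose acks $\sigma^*$ used in \PutData; if $\sigma^*$ is a read, the servers whose acks it used in its write-back \PutData ($N1$ with $\alpha\geq\frac{3n+k}{4n}$ guarantees that many exist). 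I would check that the immediate pruning at $\tau_{s'}$ does not evict $(t^*,c^*_{s'})$: by invariants (i)--(ii) any strictly larger tag present there comes from a write $\lambda$ invoked before $\tau_{s'}$ with $tag(\lambda)>t^*$, and when $\omega$ is a read or a repair all such $\lambda$ lie in the set $\Lambda$ of Definition~\ref{defn:concurrent} (they start before the relevant $T_2$), so there are at most $\delta$ of them and the pair survives; when $\omega$ is a write one only needs that $s'$'s maximum tag stays $\geq t^*$, which holds regardless of pruning. Next, by a nested induction on the number of crashes of $s'$ between $\tau_{s'}$ and the moment $s'$ next responds to or is polled by $\omega$, I would show every $s'\in W$ still carries $(t^*,c^*_{s'})$ at that moment (for a write, still has maximum tag $\geq t^*$): with no crash this is invariant (iii) plus the $\Lambda$-bound again; with a crash, let $\phi$ be the last repair returning $s'$ to active before that moment, note $t_{\mathrm{mile}}(\phi)<t_{\mathrm{mile}}(\omega)$, apply the outer hypothesis (repair case) to $\phi$ to get $(tag(\sigma^*(\phi)),\cdot)$ into $s'$'s repaired $List$, and argue $tag(\sigma^*(\phi))\geq t^*$. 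The three conclusions then follow by counting: $\lceil\frac{3n+k}{4}\rceil+\lceil\frac{n+k}{2}\rceil-n\geq k$ supplies the $k$-server subset needed for the read case and, internally, for the repair's $\lceil\frac{n+k}{2}\rceil$ responders, while $\lceil\frac{3n+k}{4}\rceil+\lceil\frac{n+1}{2}\rceil-n\geq1$ supplies the single majority server with tag $\geq t^*$ for the write case, both using $k\leq n$. For the repair case I would additionally show $t^*$ is among the $\delta+1$ highest \emph{decodable} tags seen by $\pi$ --- any larger decodable tag again corresponds, by invariants (i)--(ii), to a write in $\Lambda(\pi)$ --- so the re-encoding step keeps $(t^*,c^*_s)$ in the repaired $List$.

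The step I expect to be genuinely delicate is ``$tag(\sigma^*(\phi))\geq t^*$'' for the last repair $\phi$ of $s'$. It is immediate when $\phi$ is invoked after $\sigma^*$ completes, since then $\sigma^*\in\Sigma(\phi)$; but $\phi$ can be invoked while $\sigma^*$ is still in progress (a server in $W$ may ack $\sigma^*$ and then crash before $\sigma^*$ completes), and there a careful timing argument is required. I would handle this by first isolating a stand-alone lemma in the spirit of Lemma~\ref{lem:property} --- parameterised by a ``good time'' $T$ at which some $\lceil\frac{3n+k}{4}\rceil$ servers are simultaneously active and all hold $(t^*,c^*_{\cdot})$ --- and then exhibiting such a $T$ from the $N1$-window of $\sigma^*$'s last phase; the remainder is bookkeeping, only mildly heavier than the $\RADONL$ analysis because one must follow individual coded elements, rather than just tags, through chains of crashes and repairs.
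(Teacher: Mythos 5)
Your overall strategy is the right one and closely mirrors the paper's: use $N1$ to extract a collection of servers that all install $(t^*,c^*_s)$ at some known point, track these pairs forward through crashes and repairs by an induction on repair completion order, and close with an inclusion--exclusion count that leaves $\geq k$ coded elements for the read/repair to decode (and $\geq 1$ high tag for the write). The paper also proves the three parts separately rather than by one simultaneous induction over a milestone ordering, but that is an organizational choice, not a substantive one. There is, however, a real gap in how you choose the set $W$.

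You take $W$ to be the full acker set from $\sigma^*$'s \PutData{} phase, of size $\lceil\frac{3n+k}{4}\rceil$, and claim that every $s'\in W$ still carries $(t^*,c^*_{s'})$ when it is later polled by $\omega$. That claim is false. A server $s'$ can ack $\sigma^*$ and crash immediately afterward, while $\sigma^*$'s \PutData{} message is still in flight to most other servers; if $s'$'s repair then starts before $k$ of its $\lceil\frac{n+k}{2}\rceil$ responders have received $(t^*,c^*_{\cdot})$, the repair cannot restore the pair, and $s'$ returns to active without it. Your proposed good-time lemma does not rescue this: you want a time $T$, before $s'$'s repair begins, at which $\lceil\frac{3n+k}{4}\rceil$ servers simultaneously hold the pair, but the $N1$-window only supplies this at its endpoint $T_2$, and nothing forces $T_2$ to precede the early crash of an $s'$ that lies in $\mathcal{S}_1$ but outside the $N1$-set $\mathcal{S}_\alpha$. (Nor is $T_2$ guaranteed to precede $\sigma^*$'s completion, so $\mathcal{S}_\alpha$ alone does not work either.) The paper repairs exactly this by tracking $\mathcal{S}_{prop}=\mathcal{S}_1\cap\mathcal{S}_\alpha$, of size only $\geq\lceil\frac{n+k}{2}\rceil$: membership in $\mathcal{S}_1$ gives the pair, and membership in $\mathcal{S}_\alpha$ gives, from the $N1$ guarantee, that these servers do not crash before the common time $T_{prop}$, so every repair of an $\mathcal{S}_{prop}$ server starts after $T_{prop}$, when a snapshot of $\lceil\frac{n+k}{2}\rceil$ good servers already exists. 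The count $2\lceil\frac{n+k}{2}\rceil-n\geq k$ then closes the argument. A second, smaller issue: applying the outer IH to $\phi$ and arguing $\optag{\sigma^*(\phi)}\geq t^*$ is still awkward even after the $W$ fix, since $\sigma^*$ need not complete before $\phi$ begins and so need not belong to $\Sigma(\phi)$. The paper avoids this by arguing directly that $\phi$'s repaired list contains $(t^*,c^*_{\cdot})$ (via the snapshot at $T_{prop}$ and induction on repair completion times), rather than going through $\sigma^*(\phi)$ at all; your standalone good-time lemma is essentially this, so adopting it as the primary route, with the set size corrected to $\lceil\frac{n+k}{2}\rceil$ and the time taken to be $T_{prop}$, would bring your proposal in line with a correct proof.
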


We prove the lemma separately for the cases of repair and read (Parts $1$ and $2$). The proof for the third part for the case of write operations to similar to that of Part $2$, and hence omitted.

\subsubsection{Proof of Part $1$ of Lemma \ref{lem:property_randonc_rep}}

Consider the set $\Sigma$ and the operation $\sigma^*$ as defined in the statement of Lemma \ref{lem:property_randonc_rep}. Without loss of generality, let us assume that $\sigma^*$ is a write operation. Since we assume condition $N1$ with $\alpha \geq \frac{3n+k}{4n}$, there exists a set $\mathcal{S}_{\alpha}$ of $\left \lceil \frac{3n+k}{n} \right \rceil$ servers that respects $N1$ for the group-send operation (say gp*) in the {\PutData} phase of $\sigma^*$. If $\mathcal{S}_1$ denotes the set of $\left \lceil \frac{3n+k}{n} \right \rceil$ servers, whose responses are used by the writer to decide termination, we then know that $1)$ $|\mathcal{S}_{\alpha} \cap \mathcal{S}_1| \geq \lceil \frac{n+k}{2} \rceil$, and $2)$ if $T_{prop}$ denotes the earliest point of time during the execution when all the servers in $\mathcal{S}_{prop}  = \mathcal{S}_{\alpha} \cap \mathcal{S}_1$ complete effective consumption of their respective messages from the group-send operation gp*, then every server in $\mathcal{S}_{prop}$ remains active at $T_{prop}$, and has not experienced a crash after its effective consumption, until $T_{prop}$. Our goal is to show that the repair operation $\pi$ always receives at least $k$ responses from among the servers in $\mathcal{S}_{prop}$, and must be able to decode (and then re-encode) the value corresponding to $tag(\sigma^*)$. Below we consider the effects of concurrent writes having higher tags, and repairs before $\pi$ starts, both of which can potentially remove coded elements corresponding to $tag(\sigma^*)$, from lists of various servers. We show under the assumptions of the lemma, that neither of these cause a problem.

Let us first consider the effect of concurrent writes. Towards this, consider the set $\Lambda$ of writes concurrent with  the valid repair operation $\pi$ (see Definition \ref{defn:concurrent}). Recall that $\Lambda = \{\lambda: \lambda \text{ is a write operation that starts before completion of } $ $ \pi, \text{such that } tag(\lambda) > tag(\sigma^*)\}$.  By assumption on the lemma, we know that $|\Lambda| \leq \delta$.  In this case, it is clear that if a server $s \in \mathcal{S}_{prop}$ does not crash in the interval $[T_{prop} \ \ T]$, the $List(s)|_T$ contains the pair corresponding to $tag(\sigma^*)$, for any $T$ such that $T_{prop} \leq T \leq T_{end}(\pi)$. Here $T_{end}(\pi)$ denotes the point of completion of $\pi$. 

Let us next consider the effect of repairs, let $\widetilde{\Pi} = \{\widetilde\pi:$ a repair which start after $T_{prop}$, but also start before the completion of $\pi\}$. Clearly, $\pi \in \widetilde{\Pi}$. Let $\widetilde\pi^* \in \widetilde{\Pi}$ denote the repair operation that completes first. Clearly, it must be true that $T_{prop} < T_{end}(\widetilde{\pi}^*) \leq T_{end}(\pi)$. We prove Part $1$ of the Lemma \ref{lem:property_randonc_rep} for $\widetilde\pi^*$ first. Using this result, we prove the lemma for the repair operation in $\widetilde{\Pi}$ which completes second. We continue in an inductive manner (on the finite set $\widetilde{\Pi}$), until we hit $\pi$. Towards proving the lemma for $\widetilde\pi^*$, consider the group-send operation, where $\widetilde\pi^*$ requests for local $Lists$ from all servers. Let $\mathcal{S}_{\theta} \subset \mathcal{S}_{prop}$ denote the servers among $\mathcal{S}_{prop}$ which are not in the active state when the repair request arrives. Also, let $\mathcal{S}_a \subset \mathcal{S}$ denote the set of all servers which are in the active state when the repair request arrives. Clearly, $|\mathcal{S}_a| \leq n - |\mathcal{S}_{\theta}|$. Next, let $\mathcal{S}_{ack} \subset \mathcal{S}_a$ denote the set of $\lceil \frac{n+k}{2} \rceil$ servers based on whose responses the repair operation $\widetilde\pi^*$ completes. Now, since $\mathcal{S}_{prop}\backslash \mathcal{S}_{\theta} \subset  \mathcal{S}_a$, we have  
\begin{eqnarray}
(\mathcal{S}_{prop}\backslash \mathcal{S}_{\theta}) \cup \mathcal{S}_{ack} & \subset & \mathcal{S}_a \label{eq:0}\\
\implies |\mathcal{S}_{prop}\backslash \mathcal{S}_{\theta}| + |\mathcal{S}_{ack}| - |(\mathcal{S}_{prop}\backslash \mathcal{S}_{\theta}) \cap \mathcal{S}_{ack}| & \leq & \mathcal{S}_a \\
\implies |\mathcal{S}_{prop}| - |\mathcal{S}_{\theta}| + \lceil \frac{n+k}{2} \rceil - |(\mathcal{S}_{prop}\backslash \mathcal{S}_{\theta}) \cap \mathcal{S}_{ack}| & \leq & n - |\mathcal{S}_{\theta}| \\
\implies |(\mathcal{S}_{prop}\backslash \mathcal{S}_{\theta}) \cap \mathcal{S}_{ack}| \geq k, \label{eq:1}
\end{eqnarray}
where the last inequality follows from our earlier observation that $|\mathcal{S}_{prop}| \geq \lceil \frac{n+k}{2} \rceil$.  Next, note that any  server $s$ in $(\mathcal{S}_{prop}\backslash \mathcal{S}_{\theta}) \cap \mathcal{S}_{ack}$ remains active from $T_{prop}$ until the point when $s$ responds to the repair request from 
$\widetilde{\pi}^*$. This follows because of the facts that $1)$ $s$ is active at $T_{prop}$, $2)$ a server responds to a repair request only if it is in the active state, and $3)$ 
since $\widetilde{\pi}^*$ is the first repair operation that completes after $T_{prop}$. Also, recall our earlier observations that $1)$ if a server $s \in \mathcal{S}_{prop}$ does not crash in the interval $[T_{prop} \ \ T]$, then $List(s)|_T$ contains the pair corresponding to $tag(\sigma^*)$, for any $T$ such that $T_{prop} \leq T \leq T_{end}(\pi)$, and $2)$ $T_{prop} < T_{start}(\widetilde{\pi}^*) < T_{end}(\widetilde{\pi}^*) \leq T_{end}(\pi)$. In this case, we know that the responses of all the servers in $(\mathcal{S}_{prop}\backslash \mathcal{S}_{\theta}) \cap \mathcal{S}_{ack}$ to $\widetilde{\pi}^*$, contain the pair corresponding to $tag(\sigma^*)$. From \eqref{eq:1}, it follows that the repaired list for $\widetilde{\pi}^*$, before pruning to $(\delta + 1)$ entries, contains the pair corresponding to $tag(\sigma^*)$. Finally the fact that $tag(\sigma^*)$ is among the  highest $\delta+1$ tags, and hence part of the pruned list, follows from our earlier observations\footnote{Note that $\Lambda$ need not be the set of writes concurrent with $\widetilde{\pi}^*$. The above argument where we say that the pruned list, after the repair $\widetilde{\pi}^*$, is of size at most $\delta +1$ can be argued entirely based on $\Lambda$ itself.}  that $1)$ $|\Lambda| \leq \delta$, and $2)$ $T_{end}(\widetilde{\pi}^*) \leq T_{end}(\pi)$. This completes our proof of Part $1$ of Lemma \ref{lem:property_randonc_rep} for the repair operation $\widetilde{\pi}^*$.

We next prove the lemma for the repair operation $\pi_2 \in \widetilde{\Pi}$, which completes second. The proof is mostly identical, and we will only highlight the place where we use the result on $\widetilde{\pi}^*$. Clearly, since we carry out the induction only until we hit $\pi$, it must be true that $T_{prop} < T_{end}({\pi}_2) \leq T_{end}(\pi)$. Consider the group-send operation, where $\pi_2$ requests for local $Lists$ from all servers. Let $\mathcal{S}_{\theta}^{(2)} \subset \mathcal{S}_{prop}$ denote the servers among $\mathcal{S}_{prop}$ which are not in the active state when the repair request arrives. Also, let $\mathcal{S}_a^{(2)} \subset \mathcal{S}$ denote the set of all servers which are in the active state when the repair request arrives. As before, $|\mathcal{S}_a^{(2)}| \leq n - |\mathcal{S}_{\theta}^{(2)}|$. Next, let $\mathcal{S}_{ack}^{(2)} \subset \mathcal{S}_a^{(2)}$ denote the set of $\lceil \frac{n+k}{2} \rceil$ servers based on whose responses the repair operation $\pi_2$ completes. Along the lines of \eqref{eq:0}-\eqref{eq:1}, one can show that $|(\mathcal{S}_{prop}\backslash \mathcal{S}_{\theta}^{(2)}) \cap \mathcal{S}_{ack}^{(2)}| \geq k$. Next, if we consider the set $(\mathcal{S}_{prop}\backslash \mathcal{S}_{\theta}^{(2)}) \cap \mathcal{S}_{ack}^{(2)}$, at most one of the servers in this set would have undergone a crash after the time $T_{prop}$, and got repaired before the time the server responded to $\pi_2$. Note that more than one repair operation on $(\mathcal{S}_{prop}\backslash \mathcal{S}_{\theta}^{(2)}) \cap \mathcal{S}_{ack}^{(2)}$ cannot happen, since this will contradict the assumption that $\pi_2$ is the second repair operation to complete after $T_{prop}$. Further, if one repair operation among a server in $(\mathcal{S}_{prop}\backslash \mathcal{S}_{\theta}^{(2)}) \cap \mathcal{S}_{ack}^{(2)}$ has indeed occurred, this must be the operation $\widetilde{\pi}^*$ which we considered above. Further, we know that the repaired $List$ due to $\widetilde{\pi}^*$ contains the pair corresponding to $tag(\sigma^*)$. In other words, irrespective of whether one repair operation occurred among the servers in $(\mathcal{S}_{prop}\backslash \mathcal{S}_{\theta}^{(2)}) \cap \mathcal{S}_{ack}^{(2)}$, or not,  the responses of all the servers in $(\mathcal{S}_{prop}\backslash \mathcal{S}_{\theta}^{(2)}) \cap \mathcal{S}_{ack}^{(2)}$ contain the pair corresponding to $tag(\sigma^*)$. The rest of the proof is similar to that of $\widetilde{\pi}^*$, where we argue that the pruned list after repair contains the pair corresponding to $tag(\sigma^*)$. The rest of the induction is similar, and this completes the proof of Part $1$ of Lemma \ref{lem:property_randonc_rep}.

\subsubsection{Proof of Part $2$ of Lemma \ref{lem:property_randonc_rep}}

The proof follows mostly along the lines of proof of Part $1$ of the lemma. We will only highlight the main steps here. Consider the set $\Sigma$ and the operation $\sigma^*$ as defined in the statement of the lemma. Without loss of generality, let us assume that $\sigma^*$ is a write operation. Also, define the time $T_{prop}$ and the set $\mathcal{S}_{prop}$ exactly in the same way as what we defined in the proof of Part $1$ of the lemma. Let $T_1$ denote the earliest point of time during the execution when the reader receives  responses from all the servers in $\mathcal{S}_1$, where $\mathcal{S}_1$ is as defined in the statement of this lemma. Consider the set of writes concurrent with the valid read operation $\pi$. Recall from Definition \ref{defn:concurrent} that $\Lambda = \{\lambda: \lambda \text{ is a write operation which starts before time } T_1 \text{ such that } tag(\lambda) > tag(\sigma^*)\}$. From the assumption on concurrency in the lemma statement, we know that $|\Lambda| \leq \delta$. In this case, it is clear (like in the proof of Part $1$ above) that if a server $s \in \mathcal{S}_{prop}$ does not crash in the interval $[T_{prop} \ \ T_1]$, the $List(s)|_T$ contains the pair corresponding to $tag(\sigma^*)$, for any $T$ such that $T_{prop} \leq T \leq T_1$.  Now, if server $s \in \mathcal{S}_{prop}$ undergoes a crash and repair operation (say $\rho$) during the interval $[T_{prop}\  \ T]$ (so that it is active again at $T$), we can argue exactly like in the proof of Part $1$ above, and show that the repaired $List$ due to $\rho$ contains the pair corresponding to $tag(\sigma^*)$. This can be done by considering the set $\widetilde{\Pi} = \{\widetilde\pi:$ a repair which start after $T_{prop}$, but also start before $T_1 \}$, and applying induction on $\widetilde{\Pi}$ based on the order of completion times of the repair operations. The completes the proof of our claim about $List(s)|_T$.

The rest of the proof  follows simply by noting $|\mathcal{S}_1 \cap \mathcal{S}_{prop}| \geq k$, and thus the value corresponding to $tag(\sigma^*)$ is surely a candidate for decoding, since we know that an $[n, k]$ linear MDS code can be uniquely decoded given any $k$ out of the $n$ coded-elements.

\section{Liveness: Proof of Theorem \ref{thm:liveness_radonc}}\label{app:liveness_radonc}

The theorem is restated below for easy reference:

\begin{theorem}(Theorem \ref{thm:liveness_radonc})  \label{thm:liveness_radonc-app}
	Let $\beta$ denote a well-formed execution of $\RADONC$, operating under the $N1$  network  stability condition with $\alpha \geq \frac{3n+k}{4n}$ and $\delta$ be the maximum number of write operations concurrent with any valid read or repair operation. Then every operation initiated by a non-faulty client completes.
\end{theorem}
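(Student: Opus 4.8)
The plan is to treat the write, repair, and read operations in turn, reducing the only substantive case to Lemma~\ref{lem:property_randonc}. Throughout, the single structural fact I would repeatedly invoke is that under $N1$ with $\alpha \geq \frac{3n+k}{4n}$, every $group\act{-}send$ issued during the execution is guaranteed a set $\mathcal{S}_\alpha$ of $\left\lceil \alpha n \right\rceil \geq \left\lceil \frac{3n+k}{4} \right\rceil$ servers that remain active while effectively consuming their messages, and each such server, being active, replies to the sender. Since $\left\lceil \frac{3n+k}{4} \right\rceil$ exceeds both a majority of $n$ (as $3n+k > 2n$) and $\left\lceil \frac{n+k}{2} \right\rceil$ (as $3n+k \geq 2n+2k \iff n \geq k$), every blocking point in the protocol that waits for a majority, for $\left\lceil \frac{n+k}{2} \right\rceil$ $Lists$, or for $\left\lceil \frac{3n+k}{4} \right\rceil$ acks receives enough responses. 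This immediately settles the write operation: its {\GetTag} phase waits only for a majority, its {\PutData} phase only for $\left\lceil \frac{3n+k}{4} \right\rceil$ acks, so a write by a non-faulty writer completes. The repair operation, though not covered by the statement since it is server-initiated, is likewise trivially live under the same count: it collects $\left\lceil \frac{n+k}{2} \right\rceil$ $Lists$ and then re-encodes whatever subset of values it can decode (possibly none), re-entering the active state regardless.

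The crux is the read operation, and the one genuinely non-trivial obligation is decodability in the {\GetData} phase: after collecting $Lists$ from a set $\mathcal{S}_1$ of $\left\lceil \frac{n+k}{2} \right\rceil$ servers, the reader must be able to reconstruct at least one value in order to proceed. I would argue as follows. The reader is non-faulty, hence the read is valid in the sense of the valid-read definition, so Definition~\ref{defn:concurrent} applies and by hypothesis at most $\delta$ writes are concurrent with it. If no read or write precedes this read, every server still holds its initial pair $(t_0, \Phi_i(v_0))$, so $v_0$ is decodable from the $\left\lceil \frac{n+k}{2} \right\rceil \geq k$ responses. Otherwise, take $\sigma^*$ as in Lemma~\ref{lem:property_randonc}; Part~$2$ of that lemma yields $\mathcal{S}_2 \subseteq \mathcal{S}_1$ with $|\mathcal{S}_2| = k$ such that $(tag(\sigma^*), c_s^*) \in L_\pi(s)$ for every $s \in \mathcal{S}_2$. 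Since the code is $[n,k]$ MDS, these $k$ coded elements determine $v^*$, so the reader can decode at least the value associated with $tag(\sigma^*)$ and thus can select a highest decodable tag $t_r$ and its value $v_r$. The ensuing {\PutData} write-back phase again blocks only for $\left\lceil \frac{3n+k}{4} \right\rceil$ acks, available by the count above, so the read terminates and returns $v_r$.

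In summary, the proof is essentially a corollary of Lemma~\ref{lem:property_randonc}: the only place the argument could break is the decoding step inside the reader's {\GetData} phase, and Part~$2$ of that lemma --- whose own proof is where the concurrency bound $\delta$ and the quorum intersection $|\mathcal{S}_1 \cap \mathcal{S}_{prop}| \geq k$ actually do the work --- rules this out. Everything else reduces to the elementary quorum-size inequalities $\left\lceil \frac{3n+k}{4} \right\rceil \geq \left\lceil \frac{n+k}{2} \right\rceil \geq \left\lceil \frac{n+1}{2} \right\rceil$ under $n \geq k$, together with the $N1$ guarantee of $\left\lceil \alpha n \right\rceil \geq \left\lceil \frac{3n+k}{4} \right\rceil$ live, responsive servers for every $group\act{-}send$. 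I therefore expect the decodability reduction to be the conceptual obstacle, and the remainder to be routine.
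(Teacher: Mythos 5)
Your proposal is correct and follows essentially the same route as the paper's proof: the quorum waits in all phases are covered by the $N1$ guarantee of $\left\lceil \alpha n \right\rceil \geq \left\lceil \frac{3n+k}{4} \right\rceil$ responsive servers, and the only substantive obligation---decodability in the reader's {\GetData} phase---is discharged by Part $2$ of Lemma~\ref{lem:property_randonc}, with write-back liveness reducing to the write case. Your additional remarks (the quorum inequalities $\lceil \frac{3n+k}{4}\rceil \geq \lceil\frac{n+k}{2}\rceil \geq \lceil\frac{n+1}{2}\rceil$, the trivial liveness of repair, and the base case where $\Sigma$ is empty) are correct elaborations of points the paper's terse proof leaves implicit.
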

\begin{proof}
	Liveness of writes depends only on sufficient number of responses in the two phases. The maximum number of responses expected in either of the two phases is $\frac{3n+k}{4n}$, which we know is guaranteed under $N1$ with $\alpha \geq \frac{3n+k}{4n}$. Liveness of reads follows by combining Lemma \ref{lem:property_randonc} (for decodability of a value), and the liveness of write operations (for the write-back phase).
\end{proof}

\section{Atomicity: Proof of Theorem \ref{thm:atomicity_radonc}}\label{app:atomicity_radonc}

The theorem is restated first:

\begin{theorem}(Theorem \ref{thm:atomicity_radonc})  \label{thm:atomicity_radonc_app}
	Any execution of $\RADONC$, operating under condition $N1$ with $\alpha \geq \frac{3n+k}{4n}$, is atomic, if the maximum number of write operations concurrent with a valid read or repair operation is $\delta$.
\end{theorem}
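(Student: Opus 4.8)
The plan is to mimic the structure of the atomicity proof for $\RADONL$ (Appendix~\ref{app:atomicity}): I will invoke the sufficiency condition for atomicity of Lemma~\ref{lem:atom}, which reduces the task to exhibiting a partial order $\prec$ on the set $\Pi_{RW}$ of completed read and write operations satisfying properties $P1$, $P2$, $P3$. I define $\prec$ exactly as in $\RADONL$: for completed operations $\pi,\phi$, set $\pi \prec \phi$ if either $tag(\pi) < tag(\phi)$, or $tag(\pi) = tag(\phi)$ with $\pi$ a write and $\phi$ a read. Here, as in the $\RADONL$ analysis, $tag(\pi) = t_w$ for a completed write (the tag created in the {\PutData} phase) and $tag(\pi) = t_r$ for a completed read (the tag of the returned value). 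Properties $P2$ and $P3$ follow immediately from the construction of tags and the fact that a read returns the value associated with its own tag, exactly as in $\RADONL$; so the entire content is in establishing $P1$.

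For $P1$, I must show that if $\phi$ completes before $\pi$ starts then $\pi \not\prec \phi$. I will do this by a case analysis on the types of $\phi$ and $\pi$, and in every case the core claim is either $tag(\pi) > tag(\phi)$ (when $\pi$ is a write, or $\phi$ is a read and $\pi$ a write) or $tag(\pi) \geq tag(\phi)$ (when $\phi$ is a write and $\pi$ a read, or both are reads). The key tool is Lemma~\ref{lem:property_randonc}: take $\sigma^* = \arg\max_{\sigma\in\Sigma} tag(\sigma)$ over all reads/writes $\sigma$ that complete before $\pi$ begins. Since $\phi$ itself completes before $\pi$ begins, $\phi \in \Sigma$, hence $tag(\sigma^*) \geq tag(\phi)$. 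Now apply the appropriate bullet of Lemma~\ref{lem:property_randonc}: if $\pi$ is a read, bullet~2 gives a set $\mathcal{S}_2 \subseteq \mathcal{S}_1$ of $k$ servers all of whose responses contain the pair $(tag(\sigma^*), c_s^*)$, so the value $v^*$ associated with $tag(\sigma^*)$ is decodable from the responses, and since the reader picks the \emph{highest} decodable tag, $tag(\pi) \geq tag(\sigma^*) \geq tag(\phi)$; if additionally $\phi$ is a write and $\pi$ a read with equal tags I am already done since $\phi \prec \pi$ by definition. If $\pi$ is a write, bullet~3 gives a server in the {\GetTag}-phase majority whose reported tag is $\geq tag(\sigma^*) \geq tag(\phi)$, and since the writer forms $t_w$ by strictly incrementing the maximum tag seen, $tag(\pi) > tag(\phi)$. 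In each case this yields $\pi \not\prec \phi$.

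The main obstacle, and the place where the concurrency bound $\delta$ and condition $N1$ with $\alpha \ge \frac{3n+k}{4n}$ are really used, is entirely encapsulated inside Lemma~\ref{lem:property_randonc}, whose proof already handles the delicate interaction between erasure-coded list pruning, concurrent writes with higher tags, and intervening repair operations that could have evicted $(tag(\sigma^*),c_s^*)$ from server lists. Given that lemma, the atomicity argument itself is essentially a transcription of the $\RADONL$ proof; the only subtlety to be careful about is that $\RADONC$ does not have an explicit {\PutData}-style "$T$ exists at which a majority hold a large tag" step as in $\RADONL$'s $P1$ proof --- instead, the role of that intermediate time $T$ is played directly by $\sigma^*$ and the $\Sigma$-based statement of Lemma~\ref{lem:property_randonc}. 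I should also note that Lemma~\ref{lem:property_randonc}'s hypothesis requires the read/repair concurrency bound $\delta$, which is exactly the hypothesis of the theorem, so the invocation is legitimate; and that the lemma's conclusions are stated for the \emph{reader}'s decoding step and the \emph{writer}'s {\GetTag} step, matching precisely the two cases I need. The remaining cases of $P1$ (both $\phi,\pi$ reads; $\phi$ read and $\pi$ write) are handled identically and can be stated briefly.
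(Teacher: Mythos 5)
Your proposal is correct and follows essentially the same route as the paper's proof: invoke Lemma~\ref{lem:atom} with the same tag-based partial order $\prec$ inherited from $\RADONL$, and establish property $P1$ via Parts 2 and 3 of Lemma~\ref{lem:property_randonc} by taking $\sigma^*$ as the max-tag completed predecessor (which dominates $\phi$ since $\phi\in\Sigma$). Your write-up actually spells out a bit more explicitly than the paper does why bullet~2 of the lemma forces $tag(\pi)\geq tag(\sigma^*)$ (the reader selects the highest decodable tag) and why bullet~3 forces strict inequality for writes (the writer increments), but the decomposition, the key lemmas, and the case analysis are identical.
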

\begin{proof}
	The proof is based on Lemmas~\ref{lem:atom} and  \ref{lem:property_randonc}. In order to apply Lemma \ref{lem:atom}, consider any well-formed execution $\beta$ of $\RADONC$, all of whose invoked read and write operations, denoted by the set $\Pi_{RW}$, complete.  We define a partial order ($\prec$) on $\Pi_{RW}$ like in the proof of Theorem \ref{thm:atomicity} for case of $\RADONL$. To prove Property $P1$ of Lemma \ref{lem:atom}, consider two successful operations $\phi$ and $\pi$ such  that $\phi$ completes before $\pi$ begins. Firstly, consider the case $\pi$ is a write, and $\phi$ is either a read or write. We need to show that $tag(\pi) > tag(\phi)$, which we note follows directly from Part $3$ of Lemma \ref{lem:property_randonc}. Next, consider the case when consider the case $\pi$ is a read, and $\phi$ is either a read or write. We need to show that $tag(\pi) \geq tag(\phi)$, which we note follows directly from Part $2$ of Lemma \ref{lem:property_randonc}. This completes the proof of Property $P1$. Proofs of Properties $P2$ and $P3$ are similar to those of the corresponding properties in Theorem \ref{thm:atomicity}, where we proved atomicity of $\RADONL$.
\end{proof}

\section{Proof of Theorem \ref{thm:liveness_radons}} \label{app:liveness_radons}

The theorem is restated for convenience. 

\begin{theorem}(Theorem \ref{thm:liveness_radons})  \label{thm:liveness_radons_app}
	Let $\beta$ denote a well-formed execution of $\RADONS$ operating under condition $N2$  with $\alpha > \frac{3}{4}$ . Then every operation initiated by a non-faulty client completes.
\end{theorem}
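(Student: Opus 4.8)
The plan is to verify, phase by phase, that every invoked read and write (and every uninterrupted repair) runs to completion, paralleling the analysis of $\RADONL$ in Section~\ref{sec:analysis_randonl}. Because $N2$ implies $N1$ and $\alpha>\tfrac34>\tfrac12$, all the phases that merely await a majority of responses or $\lceil\tfrac{3n+1}{4}\rceil$ acks go through exactly as there: for the {\GetTag} and {\GetData} phases, $N1$ supplies a set of $\lceil\alpha n\rceil>n/2$ servers that effectively consume the query while active and hence reply, and since links are reliable and the client is non-faulty those replies arrive; for the {\PutData} phase, $N1$ supplies $\lceil\alpha n\rceil\ge\lceil\tfrac{3n+1}{4}\rceil$ servers that effectively consume the put-data message while active, and since sending the ack is a local step of that effective consumption, the client collects its $\lceil\tfrac{3n+1}{4}\rceil$ acks (and records the responding set $\mathcal S_\alpha$); repair is immediate because {\InitRep} has a single majority-gathering phase. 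The only genuinely new difficulty is the {\ConfirmData} phase of reads and writes, and this is where $N2$ and the $Seen$ variable come into play.

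I would carry out the {\ConfirmData} argument for a write by a non-faulty writer $w$; the read case is word-for-word the same, with $r$ and $t_r$ in place of $w$ and $t_w$. Let $T_1$ be the invocation time of the $group\act{-}send$ of {\PutData}. Statement $(b)$ of $N2$, applied to this group-send, yields a set $\mathcal S_\alpha$ with $|\mathcal S_\alpha|\ge\lceil\tfrac{3n+1}{4}\rceil$ whose servers effectively consume the put-data message while active and remain active on all of $[T_1,T_3]$, where $T_3$ is the moment $w$ completes effective consumption of the put-data acks of $\mathcal S_\alpha$. The crucial observation --- also flagged in the overview --- is that invoking the $group\act{-}send$ of {\ConfirmData} is one of the local steps of that effective consumption (it needs no further external message), so the invocation time $T_1'$ of the {\ConfirmData} group-send satisfies $T_1'\le T_3$. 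Applying $N2$ a second time, to the {\ConfirmData} group-send, gives a set $\mathcal S_\alpha'$ with $|\mathcal S_\alpha'|\ge\lceil\tfrac{3n+1}{4}\rceil$ whose servers effectively consume the confirm-data message while active and remain active on all of $[T_1',T_2']$. Put $\mathcal S_m=\mathcal S_\alpha\cap\mathcal S_\alpha'$; a routine inclusion--exclusion gives $|\mathcal S_m|\ge 2\lceil\tfrac{3n+1}{4}\rceil-n>n/2$, so $\mathcal S_m$ is a strict majority of $\mathcal S_\alpha$.

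Since $T_1'\le T_3$, the intervals $[T_1,T_3]$ and $[T_1',T_2']$ overlap, so every $s\in\mathcal S_m$ is active on all of $[T_1,\max(T_3,T_2')]$ --- an interval that contains both the instant at which $s$ effectively consumes the put-data message (when $s$ inserts $(t_w,w)$ into $Seen$, per Fig.~\ref{fig:radons-server}) and the instant at which $s$ effectively consumes the confirm-data message. Hence $s$ suffers no crash in between. Since $(t_w,w)$ can leave $Seen$ only through a crash (a repair resets $Seen$ to $\emptyset$) or through a {\ConfirmData} message carrying exactly the pair $(t_w,w)$, of which $w$ sends only one, the pair is still in $Seen$ when $s$ processes the confirm-data request; so $s$ removes it and acks $w$, and since $s$ is active then and $w$ is non-faulty this ack reaches $w$. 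As $\mathcal S_m$ is a majority of $\mathcal S_\alpha$, $w$ receives enough acks to terminate {\ConfirmData}, and the write completes; the read follows identically.

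The main obstacle is exactly this chaining of the two $N2$-guaranteed activity windows through the observation that the {\ConfirmData} group-send lies inside the effective consumption of the last {\PutData} ack: that chaining is what forbids a crash-and-repair in the gap between a server's put-data and confirm-data processing, which is the sole way a server of $\mathcal S_m$ could fail to ack, since repair empties $Seen$. Two pieces of bookkeeping will need care and are where I expect to spend the effort: pinning down the order of the time instants (in particular $T_1'\le T_3$), and arguing that the set $\mathcal S_\alpha$ furnished by $N2$ for the {\PutData} group-send may be taken to be (or to furnish a majority of) the set $\mathcal S_\alpha$ that $w$ actually records after collecting its $\lceil\tfrac{3n+1}{4}\rceil$ put-data acks, so that the algorithm's ``majority from among servers in $\mathcal S_\alpha$'' test in {\ConfirmData} is genuinely met. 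With those in hand, Theorem~\ref{thm:liveness_radons} follows.
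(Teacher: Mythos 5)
Your proposal reproduces the paper's argument essentially verbatim: apply $N2$ to the {\PutData} group-send to get $\mathcal S_\alpha$ active on $[T_1,T_3]$, observe that the {\ConfirmData} group-send lies inside the effective consumption of the last {\PutData} ack (so its invocation time precedes $T_3$), apply $N2$ again to the {\ConfirmData} group-send to get $\mathcal S'_\alpha$, take $\mathcal S_m=\mathcal S_\alpha\cap\mathcal S'_\alpha$ which is a majority and whose members remain active across both windows and thus retain $(t_w,w)$ in $Seen$. This is exactly the paper's chain of reasoning, including the same pair of time instants and the same intersection argument.

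As for the second piece of ``bookkeeping'' you flag --- reconciling the $N2$-furnished $\mathcal S_\alpha$ with the set of $\lceil\tfrac{3n+1}{4}\rceil$ responders the writer actually records --- the paper handles it no more carefully than you do: it merely remarks that the $N2$ set ``is a candidate'' for the recorded set, and concludes from $|\mathcal S_\gamma|>n/2$ and $\mathcal S_\gamma\subset\mathcal S_\alpha$ (where $\mathcal S_\alpha$ there is the $N2$ set, not the recorded one). So your instinct that this point needs care is right, but on this point you are not missing something the paper supplies; you are tracking exactly the looseness the paper itself leaves implicit. In short: same approach, same residual subtlety.
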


We will prove that a write operation associated with a non-faulty client always completes, the proof for a read is similar and hence is omitted. The main step is to show the completion of the {\ConfirmData} phase. Consider the {\PutData} phase, and note that under $N2$ with $\alpha > \frac{3}{4}$, we are guaranteed that there exists of set of $\mathcal{S}_{\alpha} \subset \mathcal{S}$ severs, such that $1)$~ $|\mathcal{S}_{\alpha}| \geq \left\lceil \frac{3n + 1}{4} \right\rceil$, and $2)$~every server in $ \mathcal{S}_{\alpha}$ remains active from the point of time $T_1$ of initiation of the group-send operation of {\PutData} phase till the point of time $T_1'$, when the writer effectively consumes all responses (acks) from the servers in  $\mathcal{S}_{\alpha}$. Next, let $\mathcal{S}_1 \subset \mathcal{S}$ denote the set of $\left\lceil \frac{3n + 1}{4} \right\rceil$ whose acks are received by the writer before moving on to the {\ConfirmData} phase. First of all note that the existence of the set $\mathcal{S}_1$ is clearly guaranteed under $N2$  with $\alpha > \frac{3}{4}$ (since the set $\mathcal{S}_{\alpha}$ is a candidate for $\mathcal{S}_1$).
Secondly, we note that the group-send operation in the {\ConfirmData} phase forms part of the effective consumption of the last ack that is received from the servers in $\mathcal{S}_1$. This follows from the definition of effective-consumption, and by noting the execution of the group-send operation in the {\ConfirmData} phase does not depend on any more input after all the acks in the {\PutData} phase are received. Let $T_2$ denote the point of time at which the group-send operation in the {\ConfirmData} phase gets initiated. Note that $T_1' \geq T_2$, in fact if $\mathcal{S}_{1} \neq \mathcal{S}_{\alpha}$, we have\footnote{In this case, some of the acks from the servers in $\mathcal{S}_{\alpha}$ get effectively consumed only after the required number $\left\lceil \frac{3n + 1}{4} \right\rceil$ have already been consumed, the last of which includes execution of the group-send operation of the {\ConfirmData}~phase. We note that the effective consumption of these additional acks from servers in $\mathcal{S}_{\alpha}$ is the operation where server simply ignores these, which is not explicitly mentioned in the algorithm. We also note that the notion of atomicity of any sequences of effective consumptions that are local to a server, is implicitly used when we argue that $T_1' > T_2$. By this we mean that if a server receives a message $m_1$ before $m_2$, the effective consumption of message $m_1$ is assumed to be entirely completed before the effective consumption of the message $m_2$ starts.} $T_1' > T_2$. Next we apply the network condition to the group-send operation in the {\ConfirmData}~phase. From the $N1$ part of $N2$, we know that there exists a $\mathcal{S}'_{\alpha}$ of $\left\lceil \frac{3n + 1}{4} \right\rceil$ servers, all of which receive and effectively consume the message from the group-send operation, and remain active from $T_2$ till the point of time $T_2'$ when the last of the servers in $\mathcal{S}'_{\alpha}$ completes effective consumption. Now if we let $\mathcal{S}_{\gamma} = \mathcal{S}_{\alpha} \cap \mathcal{S}'_{\alpha}$, observe that $1)$ $|\mathcal{S}_{\gamma}| > \frac{n}{2}$, and $2)$ all the servers in $\mathcal{S}_{\gamma}$ remain active from $T_1$ till $T_2'$. The second part follows from our earlier observation that $T_1' \geq T_2$. In this case, we infer that all the servers in $\mathcal{S}_{\gamma}$ does indeed acknowledge back to writer as part of their effective consumption of the {\ConfirmData} message, and since $\mathcal{S}_{\gamma} \subset \mathcal{S}_{\alpha}$ is at least a majority, we conclude that the write operation associated with the non faulty writer eventually completes.

\section{Proof of Theorem \ref{thm:atomicity_radons}} \label{app:atomicity_radons}

\begin{theorem}(Theorem \ref{thm:atomicity_radons})  \label{thm:atomicity_radons_app}
	Every execution of the $\RADONS$ algorithm is atomic.
\end{theorem}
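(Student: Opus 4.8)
The plan is to invoke the sufficiency condition for atomicity from~\cite{Lynch1996} (stated as Lemma~\ref{lem:atom}) in exactly the way it was used for $\RADONL$: fix a well-formed execution $\beta$ all of whose reads and writes complete, put on the set $\Pi_{RW}$ of completed reads and writes the same partial order $\prec$ as in the proof of Theorem~\ref{thm:atomicity}, and verify properties $P1$, $P2$, $P3$. Properties $P2$ and $P3$ depend only on the construction of tags and on a read returning the value carried by its own tag, so they transfer verbatim. The real content is $P1$: whenever $\phi$ completes before $\pi$ starts, show $\pi\not\prec\phi$ --- and do so \emph{without} invoking $N2$, since every execution must be atomic. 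Recall from the remark following Theorem~\ref{thm:atomicity} that Lemma~\ref{lem:property} itself needs no network-stability hypothesis; in $\RADONL$, condition $N1$ was used only to exhibit a point of time $T$ before the completion of $\phi$ at which some majority of servers is simultaneously active with all local tags $\geq tag(\phi)$. So the heart of the $\RADONS$ proof is to manufacture such a $T$ out of the extra {\ConfirmData} phase of $\phi$.

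Here is how I would manufacture $T$. Say $\phi$ is a write by client $c_\phi$ with $tag(\phi)=t_w$ (the case of a read is identical, using its write-back {\PutData} and {\ConfirmData} phases and the pair $(t_r,c_\phi)$). Let $\mathcal{S}_\alpha$ be the set of $\left\lceil\frac{3n+1}{4}\right\rceil$ servers whose {\PutData} acks $c_\phi$ collected, and let $\mathcal{S}_m\subseteq\mathcal{S}_\alpha$, $|\mathcal{S}_m|>n/2$, be the set whose {\ConfirmData} acks completed $\phi$. For $s\in\mathcal{S}_m$: when $s$ processes $\phi$'s {\PutData} message, at a time $A_s$, it inserts $(t_w,c_\phi)$ into its $Seen$ set and sets $t_{loc}(s)\geq t_w$; when $s$ later processes $\phi$'s {\ConfirmData} message, at a time $B_s$, the test $(t_w,c_\phi)\in Seen$ must succeed. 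Since a crash wipes $Seen$, a repair re-initialises it to $\emptyset$, and $(t_w,c_\phi)$ is inserted only on receipt of a {\PutData} message carrying tag $t_w$ from $c_\phi$ --- which, by well-formedness and the fact that a writer never reuses a tag, happens only during $\phi$ and only before $\phi$ completes --- the server $s$ experiences no crash on $[A_s,B_s]$; hence it stays active throughout that interval and, local tags never decreasing while a server is active, $t_{loc}(s)|_t\geq t_w$ for all $t\in[A_s,B_s]$. Now let $B$ be the earliest time at which $c_\phi$ has received all of $\mathcal{S}_\alpha$'s {\PutData} acks; this is the very instant $c_\phi$ initiates the {\ConfirmData} group-send. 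Each $s\in\mathcal{S}_m$ has $A_s<B$ (its {\PutData} ack reached $c_\phi$ by $B$) and $B\leq B_s$ ($s$ receives the {\ConfirmData} message only after it is sent), so $B\in[A_s,B_s]$. Thus $T:=B$ works: at $T$ the majority $\mathcal{S}_m$ is simultaneously active with $t_{loc}(s)|_T\geq tag(\phi)$, and $T$ precedes the completion of $\phi$, hence precedes the initiation of $\pi$.

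With $T$ in place, Lemma~\ref{lem:property} closes $P1$: if $\pi$ is a write initiated after $T$ then $tag(\pi)>\min_{s\in\mathcal{S}_m}t_{loc}(s)|_T\geq tag(\phi)$, and if $\pi$ is a read then $tag(\pi)\geq tag(\phi)$. Consulting the definition of $\prec$: when $\pi$ is a write the strict inequality $tag(\pi)>tag(\phi)$ rules out both clauses of $\pi\prec\phi$ (in particular the ``equal tags, write before read'' clause), and when $\pi$ is a read we have $tag(\pi)\geq tag(\phi)$ while $\pi$ is not a write, so neither clause applies; hence $\pi\not\prec\phi$ in every case, which is $P1$. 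Since no step used $N2$, every execution of $\RADONS$ is atomic. I expect the main obstacle to be precisely the bookkeeping of the middle paragraph: arguing rigorously that the $Seen$ test certifies ``no crash of $s$ since it processed $\phi$'s {\PutData}'' (handling repaired servers, lost or delayed messages, and any possible re-insertion of the pair), and then that the single instant $B$ genuinely lies inside every interval $[A_s,B_s]$; the final tag comparison is then routine.
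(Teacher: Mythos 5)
Your proposal is correct and follows essentially the same route as the paper: it reduces to Lemma~\ref{lem:atom} with the same partial order, takes $T$ to be the instant $\phi$ initiates its {\ConfirmData} group-send, argues from the $Seen$-mechanism that the majority $\mathcal{S}_m$ of servers whose {\ConfirmData} acks completed $\phi$ must all have stayed active on $[A_s,B_s]\ni T$ with $t_{loc}\ge tag(\phi)$, and then invokes the $\RADONS$ analogue of Lemma~\ref{lem:property} (Lemma~\ref{lem:property_radons} in the paper) to finish $P1$. The paper compresses the middle step to ``based on the algorithm, it is clear that\ldots''; your argument spells it out --- and the re-insertion worry you flag is indeed vacuous, since well-formedness already prevents $c_\phi$ from issuing another {\PutData} before $\phi$ completes and the group-send delivers each {\PutData} message at most once, so no appeal to ``a writer never reuses a tag'' is actually needed.
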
	

\subsection{Some Preliminaries}
The proof is based on Lemma~\ref{lem:atom}, and the equivalent of Lemma \ref{lem:property} for $\RADONS$, which we state below for the sake of completion:

\begin{lemma}  \label{lem:property_radons}
	Let $\beta$ denote a well-formed execution of $\RADONS$. Suppose $T$ denotes a point of time in $\beta$ such that there exists a majority of servers $\mathcal{S}_m$, $\mathcal{S}_m \subset \mathcal{S}$ all of which are in the active state at  time $T$. Also, let $t_s$ denote the value of the local tag at server $s$, at time $T$. Then, if $\pi$ denotes any completed repair or read operation that is initiated after time $T$, we have $tag(\pi) \geq \min_{s\in \mathcal{S}_m} t_s$. Also, if $\pi$ denotes any completed write  operation that is initiated after time $T$, we have $tag(\pi) > \min_{s\in \mathcal{S}_m} t_s$.
\end{lemma}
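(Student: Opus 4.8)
The plan is to mirror, almost verbatim, the proof of Lemma~\ref{lem:property} for $\RADONL$ given earlier, after first checking that the only features of $\RADONL$ used there survive in $\RADONS$. That proof rests on exactly three facts: (i) a server answers a \GetData, \GetTag, or \InitRep\ request only while in the \emph{active} state; (ii) no action of a read or write ever decreases the local tag $t_{loc}$ of an active server; and (iii) a repair resets $t_{loc}$ to the maximum tag among a majority of responses, while a write creates a tag strictly larger than every tag it collects during \GetTag. Comparing Fig.~\ref{fig:radons-server} with Fig.~\ref{fig:radonl-server}, all three hold for $\RADONS$: the repair code (\InitRep\ and \InitRepResp) is identical up to the irrelevant reset $Seen \leftarrow \emptyset$; the \GetTag\ and \GetData\ phases of the writer and reader are unchanged, as are the definitions of $tag(\pi)$; and the new \ConfirmData\ machinery only inserts into and deletes from the $Seen$ set, never touching $(t_{loc}, v_{loc})$. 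As with Lemma~\ref{lem:property}, the statement concerns completed operations only, so no network stability condition (and in particular not $N2$) is needed in the argument.

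With this established I would reproduce the three-case induction. Put $\rho = \min_{s \in \mathcal{S}_m} t_s$ and write $x(s)|_{T}$ for the value of a server variable $x(s)$ at time $T$. \textbf{Case: $\pi$ a repair.} Suppose for contradiction that some completed repair initiated after $T$ has tag $< \rho$; among all such ``bad'' repairs let $\pi^*$ be the first to complete, with $\hat{\mathcal{S}}$ the majority whose responses it used, so $\hat{\mathcal{S}} \cap \mathcal{S}_m \neq \emptyset$. Fix $s \in \hat{\mathcal{S}} \cap \mathcal{S}_m$ and let $T_s$ be the moment $s$ answered $\pi^*$. Then $s$ is active at $T_s$, and any crash of $s$ in $(T, T_s]$ was healed by a completed repair finishing before $T_s$, the last of which cannot be bad (else it would complete before $\pi^*$) and so has tag $\geq \rho$; by monotonicity $t_{loc}(s)|_{T_s} \geq \rho$. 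Hence $tag(\pi^*) = \max_{s' \in \hat{\mathcal{S}}} t_{loc}(s')|_{T_{s'}} \geq \rho$, contradicting the choice of $\pi^*$. Therefore every completed repair initiated after $T$ has tag $\geq \rho$.

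\textbf{Cases: $\pi$ a read or a write.} If a read returns a tag $< \rho$, then every server in $\hat{\mathcal{S}} \cap \mathcal{S}_m$ reported a tag $< \rho$ in \GetData, although each was active at $T$ with $t_{loc} \geq \rho$; by monotonicity this forces a crash after $T$ followed by a repair of tag $< \rho$, contradicting the repair case, so $tag(\pi) \geq \rho$. For a write, the tag formed in \PutData\ strictly exceeds every tag collected in \GetTag, and the same argument forces some collected tag to be $\geq \rho$, giving $tag(\pi) > \rho$. The only point requiring care — more a bookkeeping check than a difficulty — is fact (ii): one must confirm that interleaving \ConfirmDataResp\ actions among a server's read/write/repair steps never lowers $t_{loc}$; since \ConfirmDataResp\ modifies only $Seen$, this is immediate, and the rest of the argument transfers unchanged.
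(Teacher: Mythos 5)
Your proof is correct and takes the same route as the paper, which dispenses with this lemma in one line (``Similar to the proof of Lemma~\ref{lem:property}''): you make explicit the verification the paper leaves implicit, namely that the \ConfirmData\ phase and the $Seen$ variable never touch $(t_{loc}, v_{loc})$ and that the repair logic is unchanged, and then reproduce the three-case contradiction argument. Your handling of the repair case is in fact slightly more careful than the paper's original Lemma~\ref{lem:property} proof, since you explicitly cover the possibility that $s$ crashed in $(T, T_s]$ and was restored by a non-bad repair rather than asserting $s$ stayed active throughout, but the conclusion and overall structure are identical.
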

\begin{proof}
	Similar to the proof of Lemma \ref{lem:property}.		
\end{proof}	

Next, in order to apply Lemma \ref{lem:atom}, consider any well-formed execution $\beta$ of $\RADONS$, all of whose invoked read and write operations, denoted by the set $\Pi_{RW}$, complete.  Recall the discussion in Section \ref{sec:radons}, where we noted that tags for completed operations in $\RADONS$ are defined exactly as we had done for $\RADONL$. Thus, for any completed write operation $\pi$, we define $tag(\pi)$ as the tag created by the writer during the {\PutW} phase. For any completed read operation $\pi$, we define $tag(\pi)$ as the tag corresponding to the value returned by the read. Further, we define a partial order ($\prec$) on $\Pi_{RW}$ like in the proof of Theorem \ref{thm:atomicity} for case of $\RADONL$. These are restated for the sake of completion: For any $\pi, \phi \in \Pi_{RW}$, we say $\pi \prec \phi$  if one of the following holds: $(i)$  $tag(\pi)  < tag(\phi)$, or $(ii)$ $tag(\pi) = tag(\phi)$, and  $\pi$ and $\phi$ are write and read operations, respectively.

\subsection{Proof of Atomicity}

\paragraph{Property $P1$} Consider two successful operations $\phi$ and $\pi$ such  that $\phi$ completes before $\pi$ begins. We want to prove that $\pi \not\prec \phi$. Consider the case when both $\phi$ and $\pi$ are write operations (the other cases are similar, so only one case is discussed). By virtue of the definition of the partial order ($\prec$), it is enough to prove that $tag(\pi) > tag(\phi)$. Let $\mathcal{S}_{\alpha}$ and $\mathcal{S}_1$ respectively denote the set of servers whose responses were used by the writer during the {\PutData} and {\ConfirmData} phases of $\phi$. Let $T$ denote the time of initiation of the {\ConfirmData} phase of $\phi$. From the algorithm (see Fig. \ref{fig:radons-server}), we know that $\mathcal{S}_1 \subset \mathcal{S}_{\alpha}$. Further, based on the algorithm, it is clear that all servers in $\mathcal{S}_1$ (which is a majority) are active at time $T$, such that $t_{loc}(s)|_T \geq tag(\phi)$. In this case, we apply Lemma \ref{lem:property_radons} to conclude that $tag(\pi) > tag(\phi)$.

\paragraph{Property $P2$} This follows from the construction of tags, and the definition of the partial order ($\prec$).

\paragraph{Property $P3$} This follows from the definition of partial order ($\prec$), and by noting that value returned by a read operation $\pi$ is simply the value associated with $tag(\pi)$.


\end{document}